\title{Parameterized Approximation Schemes for Independent Set of Rectangles and Geometric Knapsack} 
\titlerunning{Parameterized Approximation Schemes}
\author{Fabrizio Grandoni}{IDSIA, USI-SUPSI, Switzerland}{fabrizio@idsia.ch}{https://orcid.org/0000-0002-9676-4931}{%
}
\author{Stefan Kratsch}{Institut f\"ur Informatik, Humboldt Universit\"at zu Berlin, Germany}{kratsch@informatik.hu-berlin.de}{https://orcid.org/0000-0002-0193-7239}{%
}
\author{Andreas Wiese}{Department of Industrial Engineering and Center for Mathematical Modeling, Universidad de Chile, Chile}{awiese@dii.uchile.cl}{%
}{%
Partially supported by the Fondecyt Regular grant 1170223.
}
\authorrunning{F. Grandoni, S. Kratsch, and A. Wiese}
\keywords{parameterized approximation, parameterized intractability, independent set of rectangles, geometric knapsack}
\newcommand{\FPT}{\ensuremath{\mathsf{FPT}}\xspace}
\newcommand{\W}[1]{\ensuremath{\mathsf{W[#1]}}\xspace}
\newcommand{\N}{\mathbb{N}}
\newcommand{\R}{\mathbb{R}}
\newcommand{\probname}[1]{\lowercase{\textsc{#1}}}
\newcommand{\problem}[1]{\probname{#1}\xspace}
\newcommand{\SSum}{\problem{Subset Sum}}
\newcommand{\MSS}{\problem{Multi-Subset Sum}}
\newcommand{\TDK}{\problem{2DK}}
\newcommand{\TDKR}{\problem{2DKR}}
\newcommand{\MISR}{\problem{MISR}}
\newcommand{\LEFT}{\mathtt{left}}
\newcommand{\RIGHT}{\mathtt{right}}
\newcommand{\TOP}{\mathtt{top}}
\newcommand{\BOTTOM}{\mathtt{bottom}}
\newcommand{\HEIGHT}{\mathtt{height}}
\newcommand{\WIDTH}{\mathtt{width}}
\newcommand{\eps}{\varepsilon}
\global\long\def\sp{B}
\global\long\def\R{\mathcal{R}}
\global\long\def\L{\mathcal{L}}
\global\long\def\G{\mathcal{G}}
\global\long\def\C{\mathcal{C}}
\global\long\def\OPT{\mathrm{OPT}}
\begin{document}

\maketitle

\begin{abstract}
The area of parameterized approximation seeks to combine approximation and parameterized algorithms to obtain, e.g., $(1+\eps)$-approximations in $f(k,\eps)n^{O(1)}$ time where $k$ is some \emph{parameter} of the input. The goal is to overcome lower bounds from either of the areas. We obtain the following results on parameterized approximability:
 \begin{itemize}\itemsep0pt
 \item[$\bullet$] In the \emph{maximum independent set of rectangles} problem (\MISR) we are given a collection of $n$ axis parallel rectangles in the plane. Our goal is to select a maximum-cardinality subset of pairwise non-overlapping rectangles. This problem is NP-hard and also \W{1}-hard
 [Marx, ESA'05]. 
 The best-known polynomial-time approximation factor is $O(\log\log n)$ 
 [Chalermsook and Chuzhoy, SODA'09] 
 and it admits a QPTAS 
 [Adamaszek and Wiese, FOCS'13; Chuzhoy and Ene, FOCS'16].
 Here we present a \emph{parameterized approximation scheme} (PAS) for \MISR, i.e. an algorithm that, for any given constant $\eps>0$ and integer $k>0$, in time $f(k,\eps)n^{g(\eps)}$, either outputs a solution of size at least
 $k/(1+\eps)$, 
 or declares that the optimum solution has size less than $k$. 
 \item[$\bullet$] In the \emph{(2-dimensional) geometric knapsack} problem (\TDK) we are given an axis-aligned square knapsack and a collection of axis-aligned rectangles in the plane (\emph{items}). Our goal is to translate a maximum cardinality subset of items into the knapsack so that the selected items do not overlap. In the version of \TDK with rotations (\TDKR), we are allowed to rotate items by 90 degrees. Both variants are NP-hard, and the best-known polynomial-time approximation factor is $2+\eps$
 [Jansen and Zhang, SODA'04]. 
 These problems admit a QPTAS for polynomially bounded item sizes
 [Adamaszek and Wiese, SODA'15]. 
 We show that both variants are \W{1}-hard. Furthermore, we present a PAS for \TDKR.
 \end{itemize}
 For all considered problems, getting time $f(k,\eps)n^{O(1)}$, rather than $f(k,\eps)n^{g(\eps)}$, would give FPT time $f'(k)n^{O(1)}$ exact algorithms by setting $\eps=1/(k+1)$, contradicting \W{1}-hardness. Instead, for each fixed $\eps>0$, our PASs give $(1+\eps)$-approximate solutions in FPT time.
 
 For both \MISR and \TDKR our techniques also give rise to preprocessing algorithms that take $n^{g(\eps)}$ time and return a subset of at most $k^{g(\eps)}$ rectangles/items that contains a solution of size at least 
 $k/(1+\eps)$ 
 if a solution of size $k$ exists. This is a special case of the recently introduced notion of a polynomial-size approximate kernelization scheme [Lokshtanov et al., STOC'17].
\end{abstract}

\section{Introduction}\label{section:introduction}

Approximation algorithms and parameterized algorithms are two well-established ways to deal with NP-hard problems. An $\alpha$-approximation for an optimization problem is a polynomial-time algorithm that computes a \emph{feasible} solution whose cost is within a factor $\alpha$ (that might be a function of the input size $n$) of the optimal cost. In particular, a \emph{polynomial-time approximation scheme} (PTAS) is a $(1+\eps)$-approximation algorithm running in time $n^{g(\eps)}$, where $\eps>0$ is a given constant and $g$ is some computable function.
In parameterized algorithms we identify a parameter $k$ of the input, that we informally assume to be much smaller than $n$. The goal here is to solve the problem optimally in \emph{fixed-parameter tractable} (FPT) time $f(k)n^{O(1)}$, where $f$ is some computable function.
Recently, researchers started to combine the two notions (see, e.g., the survey by Marx~\cite{Marx08}). The idea is to design approximation algorithms that run in FPT (rather than polynomial) time, e.g., to get $(1+\eps)$-approximate solutions in time $f(k,\eps)n^{O(1)}$. In this paper we continue this line of research on parameterized approximation, and apply it to two fundamental rectangle packing problems.

\subsection{Our results and techniques}

Our focus is on parameterized approximation algorithms. Unfortunately, as observed by Marx~\cite{Marx08}, when the parameter $k$ is the desired solution size, computing $(1+\eps)$-approximate solutions in time $f(k,\eps)n^{O(1)}$ implies fixed-parameter tractability. Indeed, setting $\eps=1/(k+1)$ guarantees to find an optimal solution when that value equals to $k \in \N$ and we get time $f(k,1/(k+1))n^{O(1)}=f'(k) n^{O(1)}$. Since the considered problems are \W{1}-hard (in part, this is established in our work), they are unlikely to be FPT and similarly unlikely to have such nice approximation schemes.

Instead, we construct algorithms (for two maximization problems) that, given $\eps>0$ and an integer $k$, take time $f(k,\eps)n^{g(\eps)}$ and either return a solution of size at least $k/(1+\eps)$ or declare that the optimum is less than $k$. 
We call such an algorithm a \emph{parameterized approximation scheme (PAS)}.
Note that if we run such an algorithm for each $k'\le k$ then we can guarantee that we compute a solution with cardinality at least $\min\{k,\OPT\}/(1+\eps)$ where $\OPT$ denotes the size of the optimal solution.
So intuitively, for each $\eps>0$, we have an FPT-algorithm for getting a $(1+\eps)$-approximate solution.

In this paper we consider the following two geometric packing problems, and design PASs for them.

\subparagraph{Maximum Independent Set of Rectangles.} In the \emph{maximum independent set of rectangles} problem (\MISR) we are given a set of $n$ axis-parallel rectangles $\R=\{R_{1},\dots,R_{n}\}$
in the two-dimensional plane, where $R_{i}$ is the open set of points $(x_{i}^{(1)},x_{i}^{(2)})\times (y_{i}^{(1)},y_{i}^{(2)})$.
A feasible solution is a subset of rectangles $\R'\subseteq\R$ such
that for any two rectangles $R,R'\in\R'$ we have $R\cap R'=\emptyset$.
Our objective is to find a feasible solution of maximum cardinality  $|\R'|$. 
W.l.o.g. we assume that $x_{i}^{(1)},y_{i}^{(1)},x_{i}^{(2)},y_{i}^{(2)}\in\{0,\dots,2n-1\}$
for each $R_{i}\in\R$ 
(see e.g. \cite{adamaszek2013approximation}).

\MISR is very well-studied in the area of approximation algorithms. The problem is known to be NP-hard \cite{fowler1981optimal}, and the current best polynomial-time approximation factor is $O(\log\log n)$ for the cardinality case \cite{CC2009} (addressed in this paper), and $O(\log n/\log\log n)$ for the natural generalization with rectangle weights \cite{chan2012approximation}. 
The cardinality case also admits a $(1+\eps)$-approximation with a running time of $n^{poly(\log \log (n/\eps))}$ \cite{chuzhoy2016approximating} and there is a (slower) QPTAS known for the weighted case~\cite{adamaszek2013approximation}.
The problem is also known to be \W{1}-hard w.r.t. the number $k$ of rectangles in the solution \cite{Marx05}, and thus unlikely to be solvable in FPT time $f(k)n^{O(1)}$.

In this paper we achieve the following main result:
\begin{theorem}
\label{thm:MISR-FPT-PTAS} There is a PAS for \MISR with running time $k^{O(k/\epsilon^{8})}n^{O(1/\epsilon^{8})}$.
\end{theorem}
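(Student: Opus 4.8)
The plan is to split the algorithm into a \emph{sparsification} phase and an \emph{exhaustive-search} phase. In the sparsification phase I would compute, in time $n^{O(1/\eps^{8})}$, a subfamily $\R^{*}\subseteq\R$ with $|\R^{*}|\le k^{O(1/\eps^{8})}$ such that: if $\R$ admits a feasible solution of size $k$, then $\R^{*}$ already contains a feasible solution of size at least $k/(1+\eps)$ (this is the approximate kernelization promised in the introduction). Given $\R^{*}$, the PAS enumerates every subset of $\R^{*}$ of size $\lceil k/(1+\eps)\rceil$, tests in $O(k^{2})$ time whether its rectangles are pairwise non-overlapping, and outputs the first such subset found; if none is found it declares $\OPT<k$. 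The number of subsets is $\binom{|\R^{*}|}{\lceil k/(1+\eps)\rceil}=k^{O(k/\eps^{8})}$, so the total running time is $k^{O(k/\eps^{8})}n^{O(1/\eps^{8})}$. Correctness is immediate: if $\OPT\ge k$ the kernel guarantee puts a feasible solution of size $\ge k/(1+\eps)$ inside $\R^{*}$, so we never wrongly declare $\OPT<k$; and whenever we output a set it is feasible and of size $\ge k/(1+\eps)$ by construction. Hence everything reduces to the sparsification phase, which I expect to be the technical core.

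For the sparsification I would adapt the hierarchical partitioning behind the QPTAS of Adamaszek and Wiese to the parameter $k$. The starting point is a balanced-cut structural statement: for $\delta=\Theta(\eps/\log k)$, any feasible family $\R'$ contained in an axis-parallel box $B$ admits a cut consisting of $\delta^{-O(1)}$ axis-parallel segments inside $B$ that crosses at most $\delta|\R'|$ rectangles of $\R'$ and splits $B$ into $O(1)$ sub-boxes, each meeting at most a $\frac{2}{3}$-fraction of the uncrossed rectangles of $\R'$. Iterating this gives a recursive partition of the bounding box whose recursion \emph{depth is only $O(\log k)$}, since the cuts are balanced and $|\OPT|\le k$; its $O(k)$ nonempty leaves each contain only $\delta^{-O(1)}$ rectangles of a subfamily $\R''\subseteq\OPT$ with $|\R''|\ge(1-\eps)k$, the $\eps k$ loss being the total number of rectangles ever crossed, which telescopes to $O(\delta k\log k)=O(\eps k)$ over the $O(\log k)$ levels.

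To turn this into a kernel that is genuinely polynomial in $k$ — rather than the quasi-polynomial $k^{\Theta(\log k)}$ that the partition gives naively — I would restrict every cut segment to a carefully chosen set of $k^{O(1/\eps^{8})}$ canonical coordinates in each axis (so that only $k^{O(1/\eps^{8})}$ candidate boxes and cuts are ever relevant, at the cost of another $O(\eps)$ fraction of $\OPT$ by a shifting argument), keep only $\delta^{-O(1)}$ representative rectangles of each combinatorial type inside each relevant box, and take $\R^{*}$ to be the union of all these representatives. A standard exchange argument then re-routes any near-optimal solution onto the representatives, which yields the kernel guarantee; selecting the canonical coordinates and collecting the representatives is the only step that touches all $n$ input rectangles and is doable in time $n^{O(1/\eps^{8})}$ (we may assume $k\le n$, as otherwise $\OPT<k$ trivially).

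The main obstacle, and where the $\eps^{-8}$ in the exponents is spent, is exactly this compression: one needs a decomposition whose complexity — the number of candidate cut positions, the number of simultaneously relevant regions, and the number of segments per cut — depends polynomially on $k$ and only on a constant power of $1/\eps$, while at the same time the recursion stays of depth $O(\log k)$ and the cuts destroy fewer than $\eps k$ rectangles in total. Once this decomposition together with the accompanying shifting and exchange arguments is in place, the remaining pieces (the exhaustive search over $\R^{*}$, the feasibility tests, and the bookkeeping for declaring $\OPT<k$) are routine.
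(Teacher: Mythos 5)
Your outer shell is fine: given the approximate kernel (a set $\R^{*}$ of size $k^{O(1/\eps^{8})}$, computable in $n^{O(1/\eps^{8})}$ time, containing a solution of size $k/(1+\eps)$ whenever $\OPT\ge k$), brute-force enumeration of size-$\lceil k/(1+\eps)\rceil$ subsets indeed gives the claimed $k^{O(k/\eps^{8})}n^{O(1/\eps^{8})}$ PAS, and the correctness logic (never wrongly declaring $\OPT<k$) is sound. The problem is that everything then rests on the sparsification phase, and that phase is not proved; it is exactly the technical core of the theorem, and the route you sketch for it has concrete unproven steps that do not obviously go through. First, the reduction of candidate cut positions to $k^{O(1/\eps^{8})}$ ``canonical coordinates at the cost of $O(\eps)$ by a shifting argument'' is asserted, not argued. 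The Adamaszek--Wiese balanced cheap cuts draw their segment coordinates from the input (polynomially many in $n$), and no generic shifting argument is known that compresses this to a family polynomial in $k$; some input-dependent structural statement is needed, together with a fallback for instances where it fails. (The paper gets this via a greedy construction of a non-uniform $k\times k$ grid such that every rectangle contains a grid corner, and when that construction fails it directly outputs $k$ pairwise disjoint rectangles --- a dichotomy your sketch never addresses.) Second, the ``keep $\delta^{-O(1)}$ representatives of each combinatorial type and re-route any near-optimal solution onto them by a standard exchange argument'' is not standard for independent set: replacing a rectangle by a same-type representative can create overlaps with the other selected rectangles unless the types are defined so finely that interchangeability is guaranteed, and then one must still show the number of types is $k^{O(1/\eps^{8})}$. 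You yourself flag this compression as ``the main obstacle''; since it is precisely where the theorem's content lies, the proposal as written is a plan rather than a proof.

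For contrast, the paper avoids hierarchical cuts entirely: after the grid step it builds planar conflict graphs on the optimal solution (edges for rectangles sharing a grid cell via a common line or a top-left/bottom-right corner pair, then a second planar graph to recover the missing bottom-left/top-right interactions) and applies Frederickson's separator theorem twice to extract a $(1-\eps)$-fraction of $\OPT$ partitioned into groups of size $O(1/\eps^{8})$, with distinct groups touching disjoint sets of grid cells. The algorithm then guesses, in $k^{O(k/\eps^{8})}$ time, the $k^{O(1/\eps^{8})}$-bounded family of cell sets occupied by each group and solves each group independently by brute force in $n^{O(1/\eps^{8})}$ time; the kernel of Theorem~\ref{thm:MISR-FPT-PTAS-kernel} falls out of the same construction. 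If you want to pursue your AW-style route, you would need to supply (i) a poly$(k)$-size candidate family for cut coordinates together with the failure-case dichotomy, and (ii) a genuine interchangeability/exchange lemma for representatives; with the recursion depth $O(\log k)$ and $\delta=\Theta(\eps/\log k)$ your loss accounting is plausible, but without (i) and (ii) the theorem is not established.
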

In order to achieve the above result, we combine several ideas. Our starting point is a polynomial-time construction of a $k\times k$ grid such that each rectangle in the input contains some crossing point of this grid (or we find a solution of size $k$ directly). By applying (in a non-trivial way) a
result by Frederickson~\cite{federickson1987fast} on planar graphs, and losing a small factor in the approximation, we define a decomposition of our grid into a collection of disjoint \emph{groups} of cells. Each such group defines an independent instance of the problem, consisting of the rectangles strictly contained in the considered group of cells. Furthermore, we guarantee that each group spans only a \emph{constant number} $O_\eps(1)$ of rectangles of the optimum solution. Therefore in FPT time we can guess the correct decomposition, and solve each corresponding subproblem in $n^{O_\eps(1)}$ time. We remark that our approach deviates substantially from prior work, and might be useful for other related problems.

An adaptation of our construction also leads to the following $(1+\epsilon)$-approximative kernelization.
 
\begin{theorem}\label{thm:MISR-FPT-PTAS-kernel}
There is an algorithm
for \MISR that, given $k\in \N$, computes in time $n^{O(1/\epsilon^{8})}$ a subset of the input rectangles of size $k^{O(1/\epsilon^{8})}$ that
contains a solution of size at least $k/(1+\eps)$, assuming that the input instance admits a solution of size at least $k$.
\end{theorem}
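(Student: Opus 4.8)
The plan is to adapt the construction behind \cref{thm:MISR-FPT-PTAS}. That construction first builds, in polynomial time, a $k\times k$ grid $\G$ such that every input rectangle strictly contains a crossing point of $\G$ (or it already exhibits an independent set of size $k$, in which case we output those $k$ rectangles and are done). It then shows that, for a suitable near-optimal solution $\OPT'$ with $|\OPT'|\ge k/(1+\eps)$, the grid $\G$ admits a decomposition into pairwise disjoint groups of cells $G_1,\dots,G_m$ such that every rectangle of $\OPT'$ is strictly contained in some $G_j$ and each $G_j$ strictly contains at most $c:=O(1/\eps^8)$ rectangles of $\OPT'$. The PAS guesses this decomposition (the source of the $k^{O(k/\eps^8)}$ factor) and then solves each induced subproblem. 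For a kernel we cannot afford to commit to one decomposition; instead we will keep, for \emph{every} group of cells that could possibly occur in such a decomposition, a bounded-size set of rectangles sufficient to recover $\OPT'$ restricted to that group.

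Let $\F$ be the family of all candidate groups, i.e.\ all sets of grid cells that can occur as some $G_j$ in the decomposition produced by the proof of \cref{thm:MISR-FPT-PTAS}. The key structural claim we need is that this decomposition can be chosen so that every group has combinatorial complexity $O(1/\eps^8)$ with respect to $\G$ -- for instance, so that each group is a union of $O(1/\eps^8)$ axis-parallel sub-boxes of $\G$ -- whence $|\F|\le k^{O(1/\eps^8)}$ and $\F$ can be enumerated in time $k^{O(1/\eps^8)}$. For each $G\in\F$, let $I_G\subseteq\R$ be the set of input rectangles strictly contained in $G$; by brute force over all subsets of $I_G$ of size at most $c$, which takes $n^{O(c)}=n^{O(1/\eps^8)}$ time, compute an independent set $S_G\subseteq I_G$ of size $\min\{c,\,\mathrm{opt}(I_G)\}$, where $\mathrm{opt}(I_G)$ denotes the maximum size of an independent set of $I_G$. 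Output $S:=\bigcup_{G\in\F}S_G$. Since $|S_G|\le c$ for every $G$, we get $|S|\le c\,|\F|\le k^{O(1/\eps^8)}$; the running time is $|\F|\cdot n^{O(c)}$, which is $n^{O(1/\eps^8)}$ as $k\le n$.

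For correctness, assume the input admits an independent set of size at least $k$, and fix $\OPT'$ together with its decomposition $G_1,\dots,G_m$. Each $G_j$ lies in $\F$, so $S_{G_j}$ has been computed, and since $\OPT'\cap I_{G_j}$ is an independent subset of $I_{G_j}$ of size at most $c$ we have $|S_{G_j}|=\min\{c,\mathrm{opt}(I_{G_j})\}\ge|\OPT'\cap I_{G_j}|$. The groups $G_1,\dots,G_m$ occupy pairwise disjoint regions of the plane, so a rectangle strictly contained in $G_j$ and a rectangle strictly contained in $G_{j'}$ with $j\ne j'$ cannot overlap; hence $\bigcup_{j=1}^m S_{G_j}$ is an independent set, it is contained in $S$, and it has size $\sum_{j=1}^m|S_{G_j}|\ge\sum_{j=1}^m|\OPT'\cap I_{G_j}|=|\OPT'|\ge k/(1+\eps)$. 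Thus $S$ contains a solution of size at least $k/(1+\eps)$, as claimed.

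The main obstacle is exactly the structural claim used to bound $|\F|$: the decomposition in \cref{thm:MISR-FPT-PTAS} arises from Frederickson's planar-separator machinery, whose pieces are a priori arbitrary connected unions of grid cells, of which there are exponentially many in $k$. The work is therefore to revisit that application and show that a near-optimal solution $\OPT'$ is compatible with a decomposition whose groups are low-complexity grid regions (each obtained from an axis-parallel sub-box by cutting along $O(1/\eps^8)$ further grid lines, say), so that $\F$ is polynomially bounded and efficiently enumerable; granting this, the recombination argument above is routine.
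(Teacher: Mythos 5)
Your algorithm and recombination argument coincide with the paper's proof of Theorem~\ref{thm:MISR-FPT-PTAS-kernel}: enumerate all candidate groups of cells, brute-force an independent set of size at most $c=O(1/\eps^{8})$ among the rectangles contained in each candidate, and output the union. The one step you explicitly leave open --- the bound $|\F|\le k^{O(1/\eps^{8})}$ on the number of candidate groups, together with their efficient enumeration --- is precisely the content of the paper's Lemma~\ref{lem:shape-cells-group-1}. As written, your proposal therefore assumes the key lemma rather than proving it, so there is a genuine gap.

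That said, the gap is much easier to close than you suggest, and in particular it does not require revisiting the Frederickson-based decomposition or forcing the groups to be ``low-complexity regions'' by construction. In Lemma~\ref{lem:exists-structured-solution} the group of cells $\G_{j}$ is simply \emph{defined} as the set of grid cells intersected by at least one rectangle of $\R'_{j}$, and the only facts needed are (i) $|\R'_{j}|\le c=O(1/\eps^{8})$ and (ii) no cell is intersected by rectangles of two different groups. Because the cells form a grid, the set of cells intersected by a single axis-parallel rectangle is a contiguous block of cells, determined by its (at most four) corner cells; hence there are at most $k^{4}$ possibilities per rectangle, and each $\G_{j}$ is a union of at most $c$ such blocks, giving at most $k^{4c}=k^{O(1/\eps^{8})}$ candidates, which can be listed in polynomial time. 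So your worry about ``arbitrary connected unions of grid cells'' coming out of the separator machinery is misplaced: one never enumerates connected cell regions, only unions of $O(1/\eps^{8})$ rectangle-induced cell blocks. With that lemma supplied, the rest of your argument --- disjointness of the cell groups implies that (open) rectangles contained in different groups cannot overlap, and per-group brute force recovers at least $|\R'\cap I_{G_j}|$ rectangles per group --- is exactly the paper's proof.
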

Similarly as for a PAS, if we run the above algorithm for each $k'\le k$ we obtain a set of size $k^{O(1/\epsilon^{8})}$ that contains a 
solution of size at least $\min\{k,\OPT\}/(1+\eps)$. 
Observe that any $c$-approximate solution on the obtained set of rectangles is also a feasible, and $c(1+\eps)$-approximate, solution for the original instance if $\OPT\leq k$ and otherwise has size at least $k/(c(1+\eps))$. Thus, our result is a special case of a \emph{polynomial-size approximate kernelization scheme} (PSAKS) as defined in \cite{LokshtanovPRS17}.\footnote{The definition due to Lokshtanov et al.~\cite{LokshtanovPRS17} is not restricted to generating a small subset of the input and a dedicated solution lifting algorithm may be used.}

\subparagraph{2-Dimensional Geometric Knapsack.}

In the \emph{(2-Dimensional) Geometric Knapsack} problem ($\TDK$) we are given 
a square \emph{knapsack} $[0,N]\times [0,N]$, $N\in \mathbb{N}$, and a set of $n$ items $I$, where each item $i\in I$ is an open rectangle $(0,w_i)\times (0,h_i)$, $N\geq w_i,h_i\in \mathbb{N}$.
The goal is to find a feasible \emph{packing} of a subset
$I'\subseteq I$ of the items of maximum cardinality $|I'|$. Such packing maps each item $i\in I'$ into a new translated rectangle $(a_i,a_i+w_i)\times (b_i,b_i+h_i)$\footnote{Intuitively, $i$ is shifted by $a_i$ to the right and by $b_i$ to the top.}, so that the translated rectangles are fully contained in the knapsack and do not overlap with each other.
Here we also consider a variant of \TDK \emph{with rotations} (\TDKR) where we can rotate each input rectangle by 90 degrees.

Both \TDK and \TDKR are NP-hard \cite{leung1990packing} and admit a polynomial-time $(2+\eps)$-approximation for any constant $\eps>0$ \cite{jansen2004rectangle}. These problems admit a QPTAS if $N=n^{O(1)}$ \cite{Adamaszek2015}. Somewhat surprisingly, these problems are not known to be \W{1}-hard when parameterized by the output number $k$ of items. Note that showing \W{1}-hardness is important in our case to motivate the search for a PAS.

\begin{theorem}\label{thm:w1hard}
\TDK and \TDKR are \W{1}-hard when parameterized by $k$.
\end{theorem}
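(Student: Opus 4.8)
The plan is to reduce from a known \W{1}-hard problem --- the natural candidate is \problem{Subset Sum} in its multi-dimensional / parameterized guise, or more precisely the \W{1}-hard problem \MSS (choose $k$ numbers, with repetition, from a given set so that they sum to a target), which is exactly the one-dimensional knapsack packing question. Since \TDK is genuinely two-dimensional, I would actually start from a cleaner source: the problem of deciding, given $k$ sets of integers $A_1,\dots,A_k$ and a target $t$, whether one can pick $a_j\in A_j$ with $\sum_j a_j = t$. This ``$k$-choice'' variant is \W{1}-hard by the standard reduction from \problem{Clique} through \problem{Subset Sum}-type gadgets, and it has the advantage that the $k$ chosen numbers are forced to be pairwise ``distinct in origin'', which is convenient for turning them into $k$ geometric items.

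The core of the construction is a gadget that encodes arithmetic via geometry. First I would build a long thin horizontal knapsack-like corridor (then pad it up to a square by adding a tall useless region, or rescale so $N$ is an integer) of total width exactly $t$ (suitably scaled) and small height. I would create, for each group $A_j$ and each value $a\in A_j$, a candidate item whose width is $a$ and whose height encodes the index $j$ --- the standard trick is to set $\HEIGHT = $ (something like $M^j$ or a carefully chosen value) so that any feasible packing in the corridor, stacked appropriately, is forced to select exactly one item per group and to place them so that their widths tile the corridor exactly. To make this robust I would use a ``frame'' construction: place fixed blocker rectangles so that the only free space is a union of $k$ slots, the $j$-th slot having a height that only items originating from $A_j$ can fit into, and arrange the total free width in each row to equal $t$. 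Then a packing of $k$ items exists if and only if the \MSS / $k$-choice instance is a yes-instance. Crucially, the parameter $k$ of the constructed \TDK instance equals (a function of) the parameter $k$ of the source problem, and the construction is polynomial time, giving an FPT reduction.

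For \TDKR the same reduction works essentially verbatim, but I would need to choose the height encodings and the corridor geometry so that rotating an item by $90$ degrees never helps --- e.g.\ by making every candidate item much wider than the corridor is tall, so a rotated item simply does not fit anywhere; alternatively one can make the heights themselves large and incommensurable so that a rotated (tall-but-thin-turned-short-but-wide) item cannot be placed without colliding with the blockers. A little care is needed so that the blockers are themselves not rotatable into a better configuration, which again follows from making the knapsack dimensions and blocker dimensions ``rigid'' (e.g.\ blockers that are exact-fit in one dimension).

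The step I expect to be the main obstacle is the \emph{rigidity} of the packing: in one dimension, a set of widths summing to exactly $t$ automatically tiles an interval of length $t$, but in two dimensions I must rule out ``clever'' sub-packings where items are stacked, shifted vertically, or where the selected widths sum to \emph{less} than $t$ yet still yield $k$ items placed in some unintended arrangement (e.g.\ two short items side by side occupying a slot meant for one). Handling this cleanly is where the height-encoding (forcing a bijection between chosen items and groups, so that exactly one item lands in each of the $k$ designated regions) earns its keep, and I would want a short lemma stating precisely that every feasible packing of $k$ items induces such a bijection and hence a valid solution to the arithmetic instance. Once that structural lemma is in place, the equivalence of yes-instances is immediate and the theorem follows.
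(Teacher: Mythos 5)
There is a genuine gap, and it is exactly the point you flag at the end but do not resolve: in \TDK and \TDKR nothing can be ``fixed'' or pre-placed. Every rectangle in your construction --- the blockers forming the frame, the padding that turns the corridor into a square, the group items themselves --- is just another input item that a feasible solution may omit entirely, translate to an arbitrary position, or (in \TDKR) rotate; the objective only counts how many items end up packed. So the step ``place fixed blocker rectangles so that the only free space is a union of $k$ slots'' is not something a reduction to geometric knapsack can stipulate, and once the blockers are optional and movable the intended corridor structure evaporates. Even granting the frame, the slot gadget does not force a bijection with the groups: a slot whose height admits items of group $A_j$ also admits every item of smaller height, and nothing forces a slot (or a row of total width $t$) to be tiled exactly --- geometry only ever gives you \emph{upper} bounds on the total width of items crossed by a horizontal line, never the matching lower bounds, so a packing of $k$ items whose widths sum to less than $t$ is not excluded. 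The ``short structural lemma'' you defer to is precisely the hard content of the theorem, and the sketch contains no mechanism for proving it.

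For comparison, the paper's proof is organized around making that rigidity argument possible globally rather than via local gadgets. It reduces from \MSS and encodes each number $x_i$ in an almost-square tile of height $L+S+x_i$ and width $L+S+2t-x_i$ with $L\gg S\gg t$, so the number enters the height positively and the width negatively. The knapsack side is fixed to $N=k\cdot L+(2k-1)\cdot S+(2k-1)\cdot t$ and the target is $k'=k^2+2p+1$ with $p=k(k-1)$ flat and $p$ thin filler items plus one bar of width $N$; since only $2p+1$ non-tiles exist, any solution must contain exactly $k^2$ tiles. Then $k$ vertical and $k$ horizontal lines at spacing $L-1$ are shown to intersect every tile, each line exactly $k$ of them, yielding $2k$ inequalities in which each packed tile's number appears exactly once with a $+$ sign (heights, vertical lines) and once with a $-$ sign (widths, horizontal lines); summing forces every inequality to be tight, hence some $k$ chosen numbers sum to exactly $t$. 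Rotation is ruled out because a rotated tile would already violate a vertical-line inequality once the bar is accounted for, which is also why \W{1}-hardness of \TDK comes for free. Your proposal would need an analogous global counting argument tied to the prescribed cardinality to obtain the missing lower bounds; as written, the backward direction of the reduction is not established.
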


The result is proved by parameterized reductions from a variant of the \W{1}-hard \SSum problem, where we need to determine whether a set of $m$ positive integers contains a $k$-tuple of numbers with sum equal to some given value $t$. The difficulty for reductions to \TDK or \TDKR is of course that rectangles may be freely selected and placed (and possibly rotated) to get a feasible packing.

We complement the \W{1}-hardness result by giving a PAS for the case with rotations (\TDKR) and a corresponding kernelization procedure like in Theorem~\ref{thm:MISR-FPT-PTAS-kernel} (which also yields a PSAKS).
\begin{theorem}\label{thr:tdkr:pas} 
For \TDKR there is a PAS with running time $k^{O(k/\epsilon)}n^{O(1/\epsilon^{3})}$
and an algorithm
that, given $k\in \N$, computes in time $n^{O(1/\epsilon^{3})}$  a subset of the input items of size  $k^{O(1/\epsilon)}$ that
contains a solution of size at least $k/(1+\eps)$, assuming that the input instance admits a solution of size at least $k$.
\end{theorem}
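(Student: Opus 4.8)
The plan is to mimic the grid-decomposition strategy used for \MISR, but adapted to the knapsack geometry where the ``container'' is a single square rather than the bounding box of the rectangles, and where rotations give us an extra degree of freedom. First I would argue that we may assume every item in an optimal solution of size~$k$ is ``small'' in at least one dimension: if some item has both width and height larger than $N/k^{O(1)}$ (more precisely, larger than some threshold depending on $k$ and $\eps$), then few such items can be packed, and we can afford to guess them exactly in $n^{O(1)}$ time and recurse on the remaining free region. After removing these $O_\eps(1)$ ``large'' items, every remaining relevant item is thin in one coordinate; using rotations we may normalize so that, say, it is short in the vertical direction. The key structural step is then to impose on the knapsack a grid of $k\times k$ cells (of side $N/k$) and to show, by a shifting argument combined with the fact that each relevant item is thin, that at a $(1+\eps)$-loss we can assume no item of the solution crosses the interior of a horizontal grid line — equivalently, each item lies inside a single horizontal strip of height $N/k$. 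This reduces the problem to packing items into $k$ independent one-dimensional-like strips, at which point we apply (the analogue of) Frederickson's planar separator / $r$-division machinery to the cell structure to cut the $k$ cells into $O_\eps(1)$-sized groups, each group seeing only $O_\eps(1)$ items of the optimum.

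The next step is the guessing phase. Having a decomposition of the $k$ cells into groups, each spanning only $c=O_\eps(1)$ solution items, we guess in FPT time (number of such decompositions is $k^{O(k)}$) which group each solution item belongs to; then within each group we must solve a \TDKR instance with at most $c$ items to select, restricted to the rectangles geometrically contained in that group's region. Because $c$ is constant, a brute-force over all $n^{O(c)}$ choices of items together with a discretized placement — here I would invoke the standard fact, as in \cite{Adamaszek2015}, that for a constant number of items an optimal (rotated) packing can be found in polynomial time by considering only $O(1)$-many candidate coordinates per item — solves each subproblem in $n^{O(1/\eps^3)}$ time. Summing the guessed partial solutions over all groups yields a feasible packing (the groups occupy disjoint regions) of size at least $k/(1+\eps)$ whenever $\OPT\ge k$, and otherwise we correctly report $\OPT<k$. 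This gives the claimed running time $k^{O(k/\eps)}n^{O(1/\eps^3)}$.

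For the kernelization statement, I would not guess the decomposition but instead, for \emph{every} possible group region arising in \emph{any} decomposition (there are only $n^{O(1/\eps^3)}$ distinct such regions, since the grid lines and item coordinates are discrete), compute an optimal $\le c$-item sub-packing and retain the $O_\eps(c)=O_\eps(1)$ items it uses. Taking the union over all these regions yields a subset of at most $k^{O(1/\eps)}$ input items which, by the structural lemma, still contains a $k/(1+\eps)$-size packing whenever one of size $k$ exists; the argument that any $c'$-approximation on the kernel lifts back to the original instance is identical to the remark following Theorem~\ref{thm:MISR-FPT-PTAS-kernel}, so we obtain a PSAKS.

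The main obstacle I anticipate is the structural shifting lemma in the second paragraph above: in \MISR the rectangles live in a fixed coordinate system, so a single $k\times k$ grid through a bounded universe suffices, but in \TDKR the items are only constrained to \emph{fit somewhere} in the knapsack and may be rotated, so one must show that an optimal packing can be \emph{re-packed} — not merely re-labeled — so that every item respects the grid lines, while losing only a $(1+\eps)$ factor in cardinality. Handling the interaction between the ``large'' items (whose removal carves an irregular free region, not a clean rectangle) and the grid imposed on the rest, and doing the shifting so that rotations are still allowed, is where the real work lies; the planar-separator and brute-force components are then routine adaptations of the \MISR argument.
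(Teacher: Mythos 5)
There is a genuine gap, and it sits exactly where you yourself flag it: the structural claim that, at a $(1+\eps)$ loss, every item of a solution can be re-packed so as to lie inside a single horizontal strip of height $N/k$ of a $k\times k$ grid. This claim is not merely hard, it is false in general. Consider a packing of $k$ squares of side $N/\sqrt{k}$, which tile the knapsack. If your ``large'' threshold is a constant fraction $\eps' N$ (which is what you need so that only $O_\eps(1)$ large items fit and can be guessed in $n^{O(1)}$ time), these squares are not large, yet a single one of them has height $N/\sqrt{k}\gg N/k$ and so cannot be contained in any strip of height $N/k$, rotated or not; no $(1-\eps)$ fraction of the solution survives your normalization. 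If instead you lower the threshold to $N/k^{O(1)}$ so that the surviving items really are thin at the grid scale, then a packing may contain $k^{\Omega(1)}$ large items, and guessing their identities and placements costs $n^{k^{\Omega(1)}}$ time, which is neither FPT nor of the form $f(k,\eps)n^{g(\eps)}$. Either way the \MISR-style grid decomposition collapses, and the separator and brute-force steps built on top of it, as well as your kernel construction (which enumerates regions of the nonexistent decomposition), have nothing to stand on.

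The paper's proof avoids imposing a grid on the packing altogether. First (Lemma~\ref{lem:free-space}) a solution of size $k$ is converted into one of size $(1-\eps)k$ that leaves an empty horizontal strip of height $(1/k)^{O(1/\eps)}N$ at the bottom of the knapsack: items are split by a shifting argument into large and thin (Lemma~\ref{lem:shifting}); if fewer than $k$ thin items exist, one builds a planar graph on the large items whose arcs encode short vertical visibility, sparsifies it via Frederickson's result (Lemma~\ref{lem:apply-separator}), pushes items up, and, if some item still blocks the bottom strip, extracts a short bottom-to-top chain (Lemma~\ref{lem:separating-path}) and deletes it together with the $O(1/\eps^{2})$ items meeting the associated deletion rectangles, after which the left part is shifted and the packing rotated. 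Second (Lemma~\ref{lem:resource-augmentation}) the freed strip is spent on rounding heights (or widths, for rotated items) to multiples of $N/(k'\tilde{k})$, which makes it safe to keep from each rounding class only the $k'$ narrowest items; this immediately yields the $k^{O(1/\eps)}$-size kernel and a $k^{O(k/\eps)}n^{O(1/\eps^{3})}$ enumeration for the PAS. If you want to salvage your plan, the component you must replace is precisely your strip-alignment lemma, and the counterexample above indicates that some argument in the spirit of ``free one thin strip and round'' is needed rather than a strip-by-strip decomposition of the whole knapsack.
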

The above result is based on a simple combination of the following two (non-trivial) building blocks:
First, we show that, by losing a fraction $\eps$ of the items of a given solution of size $k$, it is possible to free a vertical strip of width $N/k^{O_\eps(1)}$ (unless the problem can be solved trivially). This is achieved by 
first sparsifying the solution using the above mentioned result by Frederickson~\cite{federickson1987fast}.
If this is not sufficient we construct
a \emph{vertical chain} of relatively wide and tall rectangles that split the instance into a left and right side.   
Then we design a \emph{resource augmentation} algorithm, however in an FPT sense: we can compute in FPT time a packing of cardinality $k$ if we are allowed to use a knapsack where one side is enlarged by a factor $1+1/k^{O_\eps(1)}$. Note that in typical resource augmentation results the packing constraint is relaxed by a constant factor while here this amount is controlled by our parameter.

\subsection{Related work}

One of the first fruitful connections between parameterized complexity and approximability was observed independently by Bazgan~\cite{Bazgan1995} and Cesati and Trevisan~\cite{CesatiT1997}: They showed that EPTASs, i.e., $(1+\eps)$-approximation algorithms with $f(\eps)n^{O(1)}$ time, imply fixed-parameter tractability for the decision version. Thus, proofs for \W{1}-hardness of the decision version became a strong tool for ruling out improvements of PTASs, with running time $n^{g(\eps)}$, to EPTASs. More recently, Boucher et al.~\cite{BoucherLL15} improved this approach by directly proving \W{1}-hardness of obtaining a $(1+\eps)$-approximation, thus bypassing the requirement of a \W{1}-hard decision version (see also~\cite{CyganLPPS16}).

The systematic study of parameterized approximation as a field was initiated independently by three separate publications~\cite{CaiH06,ChenGG06,DowneyFM06}. A very good introduction to the area including key definitions as well as a survey of earlier results that fit into the picture was given by Marx~\cite{Marx08}. In particular, Marx also defined a so-called \emph{standard FPT-approximation algorithm (with performance ratio $c$)} that, given input $(x,k)$ will run for $f(k)|x|^{O(1)}$ time and return (say, for a maximization problem) a solution of value at least $k/c$ if the optimum is at least $k$. As mentioned earlier, Marx pointed out that a standard FPT-approximation scheme
that finds a solution of value at least $k/(1+\eps)$ in time $f(k,\eps)|x|^{O(1)}$ if $\OPT\geq k$ is not interesting to study: By setting $\eps=1/(k+1)$ we can decide the decision problem ``$\OPT\geq k$?'' in FPT time. Thus, such a scheme is not helpful if the decision problem is \W{1}-hard and therefore unlikely to have an FPT-algorithm.
Nevertheless, PASs can be useful in this case, as they imply standard FPT-approximation algorithms with ratio $1+\eps$ for each fixed $\eps>0$ despite \W{1}-hardness.

A central goal of parameterized approximation is to settle the status of problems like \problem{Dominating Set} or \problem{Clique}, which are hard to approximate and also parameterized intractable. Recently, Chen and Lin~\cite{ChenL16} made important progress by showing that \problem{Dominating Set} admits no constant-factor approximation with running time $f(k)n^{O(1)}$ unless $\FPT=\W{1}$. Generally, for problems without exact FPT-algorithms, the goal is to find out whether one can beat inapproximability bounds by allowing FPT-time in some parameter; see e.g.~\cite{FellowsKRS12,BazganCNS14a,BazganCNS14b,BazganN14,Lampis14,KolayMRS14,Cohen-AddadM15,Feldmann15,BodlaenderDDFLP16}).

For the special case of \MISR where all input objects are squares a PTAS is known \cite{erlebach2005polynomial} but there can be no EPTAS~\cite{Marx05}. 
Recently,
Galvez et al. \cite{Galvez2017} found polynomial-time algorithms for \TDK and \TDKR with approximation ratio smaller than $2$ (also for the weighted case).
For the special case that all input objects are squares there is a PTAS \cite{jansen2008polynomial} and even an EPTAS~\cite{heydrich2017faster}.

\section{A Parameterized Approximation Scheme for MISR}
\label{sec:misr}

In this section we present a PAS and an approximate kernelization
for \MISR. We start by showing that there exists an almost optimal solution for the problem with some helpful structural properties (Sections \ref{subsec:grid} and \ref{subsec:Groups-of-rectangles}). 
The results are then put together in Section~\ref{sec:pas}.

\subsection{\label{subsec:grid}Definition of the grid}

We try to construct a non-uniform grid with $k$ rows and $k$ columns
such that each input rectangle overlaps a corner of this grid (see
Figure~\ref{fig:grid}). To this end, we want to compute $k-1$ vertical
and $k-1$ horizontal lines such that each input rectangle intersects
one line from each set. There are instances in which our routine fails
to construct such a grid (and in fact such a grid might not even exist).
For such instances, we directly find a feasible solution with $k$
rectangles and we are done.

\begin{figure}[t]
\begin{centering}
\includegraphics[scale=0.39]{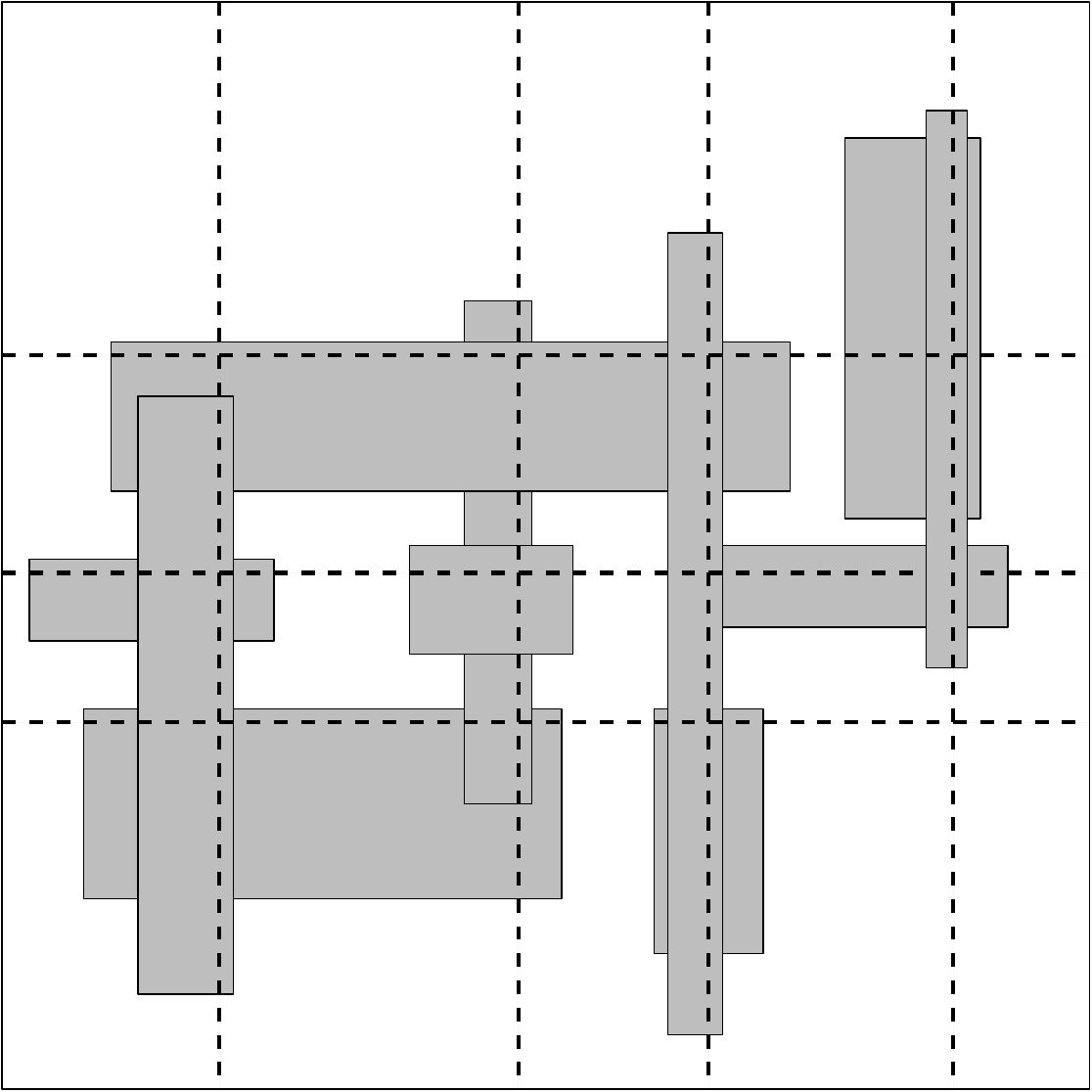}~~~\includegraphics[scale=0.39]{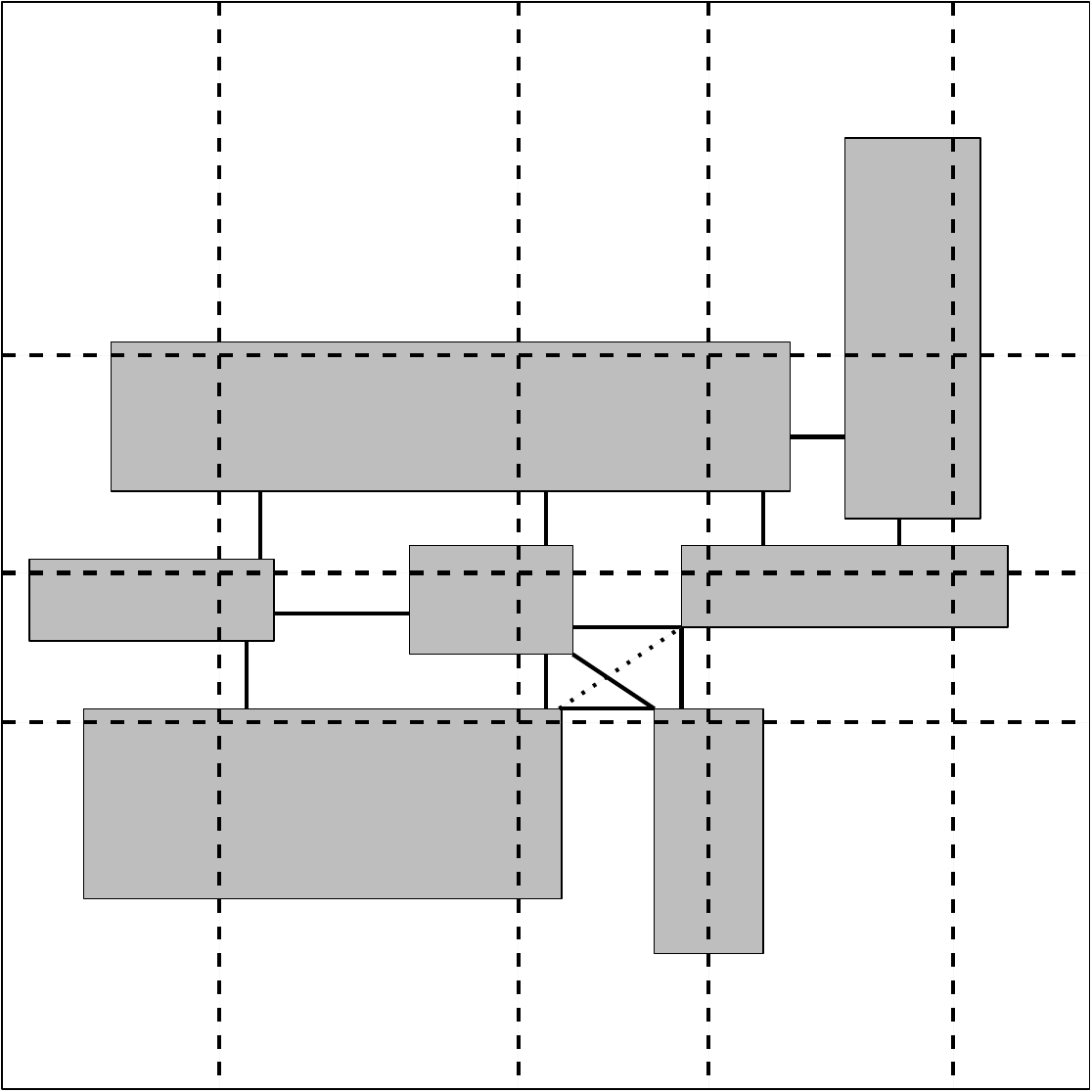}~~~\includegraphics[scale=0.39]{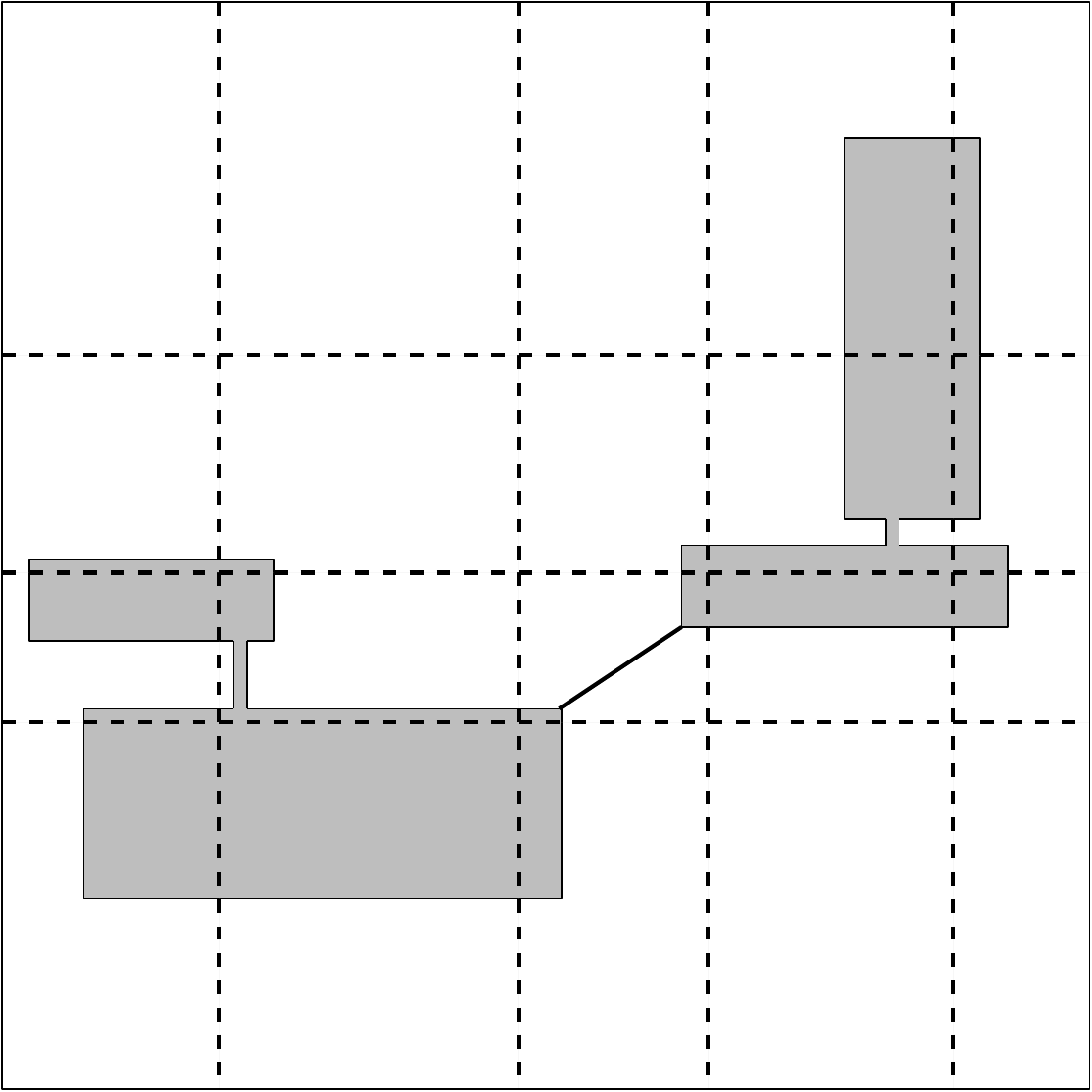}
\par\end{centering}
\caption{\label{fig:grid} (Left) Dashed lines define the grid $\protect\G$.
(Middle) Rectangles from an optimal solution and the
edges that form the graph $G_{1}$. Note that in $G_{1}$ there is
no edge representing the dotted connection since otherwise the graph
would not be planar anymore. (Right) The graph $G_2$, that captures the missing connections of $G_1$}
\end{figure}

\begin{lemma} \label{lem:grid}There is a polynomial time algorithm
that either computes a set of at most $k-1$ vertical lines $\L_{V}$
with $x$-coordinates $\ell_{1}^{V},\dots,\ell_{k-1}^{V}$ such
that each input rectangle is crossed by one line in $\L_{V}$ or computes
a feasible solution with $k$ rectangles. A symmetric statement holds
for an algorithm computing a set of at most $k-1$ horizontal lines
$\L_{H}$ with $y$-coordinates $\ell_{1}^{H},\dots,\ell_{k-1}^{H}$.
\end{lemma}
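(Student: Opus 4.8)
The goal is to find $k-1$ vertical lines such that every input rectangle is stabbed by at least one of them, or else to exhibit $k$ pairwise disjoint rectangles directly. The natural approach is a sweep-line / greedy argument. Sort the rectangles by their right $x$-coordinate $x_i^{(2)}$. Process them left to right and maintain a set $S$ of ``uncovered'' rectangles, i.e.\ those not yet stabbed by any line chosen so far. Whenever we encounter a rectangle whose right edge lies strictly to the left of every currently uncovered rectangle's left edge — equivalently, when the rectangle with smallest $x_i^{(2)}$ among the uncovered ones is about to be ``lost'' — we must place a vertical line. Place it at $x$-coordinate $x_i^{(2)}$ (or $x_i^{(2)} - 1/2$, so it lies in the open interior; since coordinates are integers in $\{0,\dots,2n-1\}$ we can use half-integers, or equivalently argue with the open intervals directly). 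This line stabs $R_i$ and possibly other uncovered rectangles; remove all of them from $S$. The key invariant is that the rectangles forcing successive line placements are pairwise disjoint in the $x$-projection, hence pairwise non-overlapping as rectangles.

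More precisely: let $R_{i_1}, R_{i_2}, \dots$ be the rectangles that trigger line placements, in order. When we place the $j$-th line at $x_{i_j}^{(2)}$, the rectangle $R_{i_{j+1}}$ triggering the next line was uncovered at that moment, so it was not stabbed by the $j$-th line, meaning $x_{i_{j+1}}^{(1)} > x_{i_j}^{(2)}$ (its left edge is strictly to the right of line $j$). Thus the $x$-intervals $(x_{i_j}^{(1)}, x_{i_j}^{(2)})$ are pairwise disjoint and ordered, so $R_{i_1}, R_{i_2}, \dots$ are pairwise non-overlapping. If we are ever forced to place a $k$-th line, then $R_{i_1}, \dots, R_{i_k}$ is a feasible solution of size $k$ and we output it. Otherwise the algorithm terminates having placed at most $k-1$ lines, and every input rectangle was stabbed by one of them (a rectangle never removed from $S$ would have triggered its own line at the latest when processed). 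This runs in $O(n \log n)$ time. The horizontal statement follows by swapping the roles of the axes.

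I would carry this out in three steps: (1) describe the greedy sweep and the line-placement rule, being careful that lines land in open interiors (this is where the integer-coordinate normalization is used); (2) prove the disjointness invariant for the triggering rectangles by the one-line induction above; (3) observe that placing a $k$-th line yields $k$ pairwise-disjoint rectangles, otherwise at most $k-1$ lines suffice and coverage holds by construction. The main subtlety — really the only one — is the bookkeeping around open vs.\ closed boundaries: a rectangle $R_i$ is ``crossed'' by a line $x=\ell$ iff $x_i^{(1)} < \ell < x_i^{(2)}$, so when we set $\ell = x_{i}^{(2)} - 1/2$ we need $x_i^{(1)} < x_i^{(2)} - 1/2$, which holds since widths are positive integers; and we need the strict inequality $x_{i_{j+1}}^{(1)} > x_{i_j}^{(2)} - 1/2$, i.e.\ $x_{i_{j+1}}^{(1)} \ge x_{i_j}^{(2)}$, to conclude disjointness of the open intervals $(x_{i_j}^{(1)}, x_{i_j}^{(2)})$ — which is exactly the condition that $R_{i_{j+1}}$ survived line $j$. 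So everything is consistent, and no genuinely hard step arises; the lemma is a clean stabbing/packing duality for intervals applied axis-wise.
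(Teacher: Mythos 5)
Your greedy is exactly the paper's construction: the paper iteratively sets $\ell_{k'+1}^{V}:=\min_{R_i:\,x_i^{(1)}\ge \ell_{k'}^{V}} x_i^{(2)}-1/2$, i.e.\ places each line just before the smallest right edge among the not-yet-stabbed rectangles, and if $k$ lines are forced it outputs the $k$ triggering rectangles, whose open $x$-intervals are pairwise disjoint by the same integer-coordinate argument you give. So your proposal is correct and takes essentially the same approach as the paper.
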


\begin{proof}
Let $\ell_{0}^{V}:=0$. Assume
inductively that we defined the $x$-coordinates $\ell_{0}^{V},\ell_{1}^{V},\dots,\ell_{k'}^{V}$
such that $\ell_{1}^{V},\dots,\ell_{k'}^{V}$ are the $x$-coordinates
of the first $k'$ constructed vertical lines. We define the $x$-coordinate
of the $(k'+1)$-th vertical line by $\ell_{k'+1}^{V}:=\min_{R_{i}\in\R:x_{i}^{(1)}
\ge
\ell_{k'}^{V}}x_{i}^{(2)}-1/2$.
We continue with this construction until we reach an iteration $k^{*}$
such that $\{R_{i}\in\R:x_{i}^{(1)}
\ge
\ell_{k^{*}-1}^{V}\}=\emptyset$.
If $k^{*}\le k$ then we constructed at most $k-1$ lines such that
each input rectangle is intersected by one of these lines. Otherwise,
assume that $k^{*}>k$. Then for each iteration $k'\in\{1,\dots,k\}$
we can find a rectangle $R_{i(k')}:=\arg\min_{R_{i}\in\R:x_{i}^{(1)}
\ge
\ell_{k'-1}^{V}
}
x_{i}^{(2)}$. By construction, using the fact that all coordinates are integer, for any two such rectangles $R_{i(k')},R_{i(k'')}$
with $k'\ne k''$ we have that $(x_{i(k')}^{(1)},x_{i(k')}^{(2)})\cap(x_{i(k'')}^{(1)},x_{i(k'')}^{(2)})=\emptyset$.
Hence, $R_{i(k')}$ and $R_{i(k'')}$ are disjoint. Therefore, the
rectangles $R_{i(1)},\dots,R_{i(k)}$ are pairwise disjoint and thus
form a feasible solution.

The algorithm for constructing the horizontal lines works symmetrically.
\end{proof}

We apply the algorithms due to Lemma~\ref{lem:grid}. If one of them
finds a set of $k$ independent rectangles then we output them and
we are done. Otherwise, we obtain the sets $\L_{V}$ and $\L_{H}$.
For convenience, we define two more vertical lines with $x$-coordinates $\ell_{0}^{V}:=0$ and $\ell_{|\L_{V}|+1}^{V}=2n-1$, resp.,
and similarly two more horizontal lines with $y$-coordinates $\ell_{0}^{H}=0$
and $\ell_{|\L_{H}|+1}^{H}=2n-1$, resp.. We denote by $\G$
the set of grid cells formed by these lines and the lines in $\L_{V}\cup\L_{H}$:
for any two consecutive vertices lines (i.e., defined via $x$-coordinates
$\ell_{j}^{V},\ell_{j+1}^{V}$ with $j\in\{0,\dots,|\L_{V}|\}$) and
two consecutive horizontal grid lines (defined via $y$-coordinates
$\ell_{j'}^{H},\ell_{j'+1}^{H}$ with $j'\in\{0,\dots,|\L_{H}|\}$)
we obtain a grid cell whose corners are the intersection of these
respective lines. We interpret the grid cells as closed sets (i.e.,
two adjacent grid cells intersect on their boundary).

\begin{proposition}
Each input rectangle $R_{i}$ contains a corner of a grid cell of
$\G$. If a rectangle $R$ intersects a grid cell $g$ then it must
contain a corner of $g$.
\end{proposition}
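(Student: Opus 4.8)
The plan is to establish the second, stronger claim first and then derive the first as a corollary. I would fix a generic input rectangle $R=(x^{(1)},x^{(2)})\times(y^{(1)},y^{(2)})$ and a grid cell $g=[a,b]\times[c,d]$ of $\G$ with $R\cap g\neq\emptyset$, where $a<b$ are the $x$-coordinates of two consecutive vertical grid lines and $c<d$ those of two consecutive horizontal grid lines. A small preliminary observation I would record is that the vertical grid lines have strictly increasing $x$-coordinates $\ell_0^V<\ell_1^V<\dots<\ell_{|\L_V|+1}^V$ --- this is exactly where the integrality of the rectangle coordinates enters, since each step of the construction in Lemma~\ref{lem:grid} strictly advances --- so no vertical grid line lies strictly between $a$ and $b$, and symmetrically for the horizontal lines. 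Since $R$ is open and $g$ is closed, $R\cap g\neq\emptyset$ is equivalent to the four inequalities $x^{(1)}<b$, $x^{(2)}>a$, $y^{(1)}<d$, $y^{(2)}>c$.

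Next I would invoke Lemma~\ref{lem:grid}: since we are in the case where the sets $\L_V$ and $\L_H$ have been constructed (neither application returned $k$ pairwise disjoint rectangles), $R$ is crossed by some $\ell^V\in\L_V$ and some $\ell^H\in\L_H$, i.e. $x^{(1)}<\ell^V<x^{(2)}$ and $y^{(1)}<\ell^H<y^{(2)}$. As $\ell^V$ is a vertical grid line, the preliminary observation forces $\ell^V\le a$ or $\ell^V\ge b$. In the first case $x^{(1)}<\ell^V\le a$, which together with $x^{(2)}>a$ yields $x^{(1)}<a<x^{(2)}$, so the left side $x=a$ of $g$ crosses $R$; in the second case $x^{(2)}>\ell^V\ge b$, which together with $x^{(1)}<b$ yields $x^{(1)}<b<x^{(2)}$, so the right side $x=b$ crosses $R$. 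Either way one vertical side of $g$ crosses $R$, and the symmetric argument with $\ell^H$ gives that one horizontal side of $g$ crosses $R$. Choosing such sides $x=\alpha$ with $\alpha\in\{a,b\}$ and $y=\beta$ with $\beta\in\{c,d\}$, we get $x^{(1)}<\alpha<x^{(2)}$ and $y^{(1)}<\beta<y^{(2)}$, so the corner $(\alpha,\beta)$ of $g$ lies in the open set $R$. This proves the second claim.

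The first claim then follows at once: all corner coordinates of $R$ lie in $\{0,\dots,2n-1\}$, so $R\subseteq[0,2n-1]\times[0,2n-1]$, which is precisely the region tiled by the cells of $\G$; being a nonempty open set, $R$ meets some cell $g$, and the second claim supplies a corner of $g$ inside $R$. (Alternatively and even more directly, the grid vertex $(\ell^V,\ell^H)$ provided by Lemma~\ref{lem:grid} already lies in $R$ by the definition of ``crossed''.)

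I do not anticipate a real obstacle. The only points that need care are the open/closed conventions --- input rectangles are open, grid cells are closed, and a line ``crosses'' $R$ precisely when it passes through the interior of $R$ --- and the observation that two consecutive grid lines are strictly separated, which would fail without the integrality exploited in Lemma~\ref{lem:grid}. The remainder is routine bookkeeping with one-dimensional interval intersections.
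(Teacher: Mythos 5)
Your proof is correct and is precisely the argument the paper leaves implicit (the proposition is stated without proof as an immediate consequence of Lemma~\ref{lem:grid} and the grid construction): each input rectangle is crossed by a vertical line of $\L_{V}$ and a horizontal line of $\L_{H}$, consecutive grid lines are strictly separated, so whenever $R$ meets a cell $g$ one vertical and one horizontal side of $g$ pass through the open rectangle $R$, putting a corner of $g$ inside $R$. Your handling of the open/closed conventions and of strict monotonicity of the line coordinates matches the intended reasoning, so there is nothing to add.
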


\subsection{\label{subsec:Groups-of-rectangles}Groups of rectangles}

Let $\R^{*}$ denote a solution to the given instance with $|\R^{*}|=k$.
We prove that there is a special solution $\R'\subseteq\R^{*}$ of large cardinality that
we can partition into $s\le k$ groups $\R'_{1}\dot{\cup}\dots\dot{\cup}\R'_{s}$
such that each group has constant
size $O(1/\epsilon^{8})$
and no grid cell can be intersected by rectangles from different
groups. The remainder of this section is devoted to proving the following lemma.
\begin{lemma}
\label{lem:exists-structured-solution}There is a constant $c=O(1/\epsilon^{8})$
such that there exists a solution $\R'\subseteq\R^{*}$ with $|\R'|\ge(1-\epsilon)|\R^{*}|$
and a partition $\R'=\R'_{1}\dot{\cup}\dots\dot{\cup}\R'_{s}$ 
with $s\le k$ and
$|\R'_{j}|\le c$ for each $j$ and such that if any
two rectangles in $\R'$ intersect the same grid cell $g\in\G$ then
they are contained in the same set $\R'_{j}$. 
\end{lemma}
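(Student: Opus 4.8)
The plan is to build the partition from a planar-graph separator argument applied to a graph that encodes which rectangles of $\R^*$ share a grid cell. First I would define an auxiliary graph $H$ whose vertices are the rectangles of $\R^*$, putting an edge between $R, R'$ whenever they intersect a common grid cell of $\G$. The key structural observation is that $H$ is \emph{planar}: since every rectangle in $\R^*$ contains a corner of some grid cell and the rectangles are pairwise disjoint, the "contacts" through grid cells can be drawn without crossings. This is exactly what Figure~\ref{fig:grid} is illustrating, and it is the step where one must be careful: a single grid cell $g$ may be touched by up to four rectangles of $\R^*$ (one per corner), so the naive contact graph need not be planar on its own — hence the split into $G_1$ and $G_2$ suggested in the figure caption. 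The honest approach is to route the adjacencies in two rounds: $G_1$ captures a planar subset of the contacts (say, for each cell, the contacts that can be added while preserving planarity), and $G_2$ captures the leftover ones, argue $G_2$ is also planar (or has bounded degree / bounded arboricity so that it decomposes into $O(1)$ planar graphs), so that $H = G_1 \cup G_2$ is the union of a constant number of planar graphs.

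Next I would invoke Frederickson's result~\cite{federickson1987fast}: for a planar graph on $N$ vertices and any parameter $r$, one can delete $O(N/\sqrt{r})$ vertices so that each remaining connected component has at most $r$ vertices, and moreover (in the version with the region structure) the components interact through a bounded number of separator vertices. Applying this with $r = \Theta(1/\eps^8)$ — well, more precisely $r$ chosen so that the number of deleted vertices is at most $\eps |\R^*|$, which forces $r = \Theta(1/\eps^2)$ per planar layer — to the graph $H$, and deleting the union of the separator vertices over the $O(1)$ planar layers, I obtain $\R' \subseteq \R^*$ with $|\R'| \ge (1-\eps)|\R^*|$ whose induced subgraph $H[\R']$ has all components of size $O(1/\eps^2)$ per layer, hence $O(1/\eps^{?})$ overall. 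The groups $\R'_1, \dots, \R'_s$ are then defined as the connected components of $H[\R']$. By construction $s \le |\R'| \le k$, each $|\R'_j|$ is bounded by the component-size parameter $c = O(1/\eps^8)$, and — crucially — if two rectangles of $\R'$ touch the same grid cell then they are adjacent in $H$, hence in $H[\R']$ (both survived the deletion), hence in the same component, i.e. the same $\R'_j$. That is precisely the last conclusion of the lemma.

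The main obstacle I expect is pinning down the planarity claim and the exact loss bookkeeping. Planarity of $H$ is intuitively clear from the disjointness of the rectangles and the fact that each lives at a cell corner, but making this rigorous requires an explicit planar embedding: one natural choice is to place each rectangle's vertex at the grid corner it contains (choosing one such corner per rectangle) and route each edge through the shared cell's interior; collisions at a shared corner or shared cell must be resolved, which is exactly why one needs the two-graph decomposition and a degree/multiplicity bound (each cell is shared by at most four rectangles of $\R^*$, and each corner belongs to at most four cells, giving constant local congestion). The second delicate point is the quantitative matching between the exponent $8$ in $c = O(1/\eps^8)$ and the separator parameter: Frederickson's theorem costs $O(N/\sqrt{r})$ deleted vertices for components of size $r$, so to delete only an $\eps$-fraction one takes $r = \Theta(1/\eps^2)$, and the stated $O(1/\eps^8)$ presumably arises from a more refined decomposition (e.g. iterating the separator, or controlling not just component sizes but also the number of cells each component spans, which may square or further inflate the bound). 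I would set up the argument with $r$ left symbolic, verify the two invariants (component size, and "same cell $\Rightarrow$ same component"), and only at the end substitute the value of $r$ that yields the $(1-\eps)$ guarantee, reporting whatever power of $1/\eps$ falls out of the chosen refinement.
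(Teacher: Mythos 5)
Your high-level plan (planar contact structure plus Frederickson's separator) is the right starting point, but the core step as you describe it does not work. You propose to write the full contact graph $H$ as a union of $O(1)$ planar layers, apply Lemma~\ref{lem:apply-separator} to each layer separately, delete the union of the separator vertices, and then take the connected components of $H[\R']$ as the groups. The problem is that bounding the component sizes \emph{within each layer} does not bound the component sizes of the \emph{union}: edges of the second layer can chain together many small components of the first layer (e.g.\ first-layer components $\{1,2\},\{3,4\},\{5,6\},\dots$ joined by second-layer edges $\{2,3\},\{4,5\},\dots$ form one huge component), so your groups need not have size $O_\eps(1)$ and the lemma's size bound $|\R'_j|\le c$ is not established. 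This is precisely the difficulty the paper's proof is built around: it first applies the separator to the planar graph $G_1$ (same-line contacts plus the top-left/bottom-right diagonal only), getting components of size $c_1=O(1/\eps^2)$, and then applies the separator a \emph{second} time, not to the leftover edges on the original vertex set, but to the graph $G_2$ whose vertices are the components of $G_1'$, with the deletion parameter rescaled to $\eps/(2c_1)$ so that removing a $G_2$-vertex (which costs up to $c_1$ rectangles) still loses only an $\eps/2$ fraction overall. This hierarchical step is what makes the group sizes multiply rather than blow up, giving $c=c_1\cdot c_2=O(1/\eps^2)\cdot O(1/\eps^6)=O(1/\eps^8)$ — the exponent you could not account for.

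Two further points you gloss over. First, the planarity of the second graph is not automatic "because the leftover edges are one diagonal per cell": in the paper it is $G_2$, defined on components of $G_1'$, that must be planar, and the embedding argument (Lemma~\ref{lem:G2-planar}) uses crucially that if a cell's bottom-left/top-right corners lie in different $G_1'$-components, then no surviving rectangle can occupy that cell's top-left or bottom-right corner — otherwise all three would already be in one component. Second, the "same cell $\Rightarrow$ same group" property in the paper is not "adjacent in $H[\R']$, hence same component"; for the diagonal pairs it goes through the $G_2$-edge between their two $G_1'$-components, both of which survive the second deletion. So your proposal needs the two-level (component-contraction) construction to be made to work; as written, the grouping and the $O(1/\eps^8)$ bound have a genuine gap.
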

Given the solution $\R^{*}$ we construct a planar graph $G_{1}=(V_{1},E_{1})$.
In $V_{1}$ we have one vertex $v_{i}$ for each rectangle $R_{i}\in\R^{*}$.
We connect two vertices $v_{i},v_{i'}$ by an edge if and only if
there is a grid cell $g\in\G$ such that $R_{i}$ and $R_{i'}$ intersect
$g$ and
\begin{itemize}
\item $R_{i}$ and $R_{i'}$ are crossed by the same horizontal or vertical
line in $\L_{V}\cup\L_{H}$ or if
\item $R_{i}$ and $R_{i'}$ contain the top left and the bottom right corner
of $g$, resp.
\end{itemize}
Note that we do not introduce an edge if $R_{i}$ and $R_{i'}$ contain
the bottom left and the top right corner of $g$, resp. (see
Fig.~\ref{fig:grid}): this way we preserve the planarity of the resulting graph, however we will have to deal with the missing connections in a later stage.
\begin{lemma}
\label{lem:G1-planar}The graph $G_{1}$ is planar. 
\end{lemma}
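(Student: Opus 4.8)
The plan is to exhibit a planar embedding of $G_1$ directly from the geometry of the rectangles and the grid $\G$. For each rectangle $R_i \in \R^*$, by the Proposition it contains at least one corner of some grid cell; I would pick one such corner $p_i$ as a representative point and place the vertex $v_i$ at $p_i$ (if several rectangles of the independent set $\R^*$ were to claim the same corner they would overlap, contradicting feasibility, so the $p_i$ are distinct). I then need to route each edge $v_iv_{i'}$ of $E_1$ as a curve so that no two curves cross. The key observation is that every edge of $E_1$ comes from some grid cell $g$ that both $R_i$ and $R_{i'}$ intersect, and since each of them contains a corner of $g$, I can route the edge through (a neighborhood of) that cell $g$: for the ``shared line'' type of edge, route along the portion of the grid line inside $g$ between the two rectangles; for the ``top-left/bottom-right corner'' type, route along the diagonal of $g$ connecting those two corners.

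The heart of the argument — and the step I expect to be the main obstacle — is verifying that these routed curves can be made pairwise non-crossing, i.e. that within any single grid cell $g$ the curves assigned to $g$ do not cross each other, and that curves assigned to different cells only meet at shared vertices. The deliberate omission of the bottom-left/top-right edge is exactly what makes the within-cell picture work: in a given cell $g$ there are at most four rectangles of $\R^*$ touching it (one per corner, by independence), and the edges we draw inside $g$ are among ``left side to right side along a horizontal line'', ``top to bottom along a vertical line'', and ``top-left to bottom-right along the NW–SE diagonal''; one checks by a short case analysis that any two of these can be drawn inside $g$ without crossing, whereas adding the NE–SW diagonal would force a crossing with the other diagonal (this is precisely the ``dotted connection'' in Figure~\ref{fig:grid}). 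I would also need to be slightly careful that an edge coming from cell $g$ and an edge coming from an adjacent cell $g'$ do not interfere on the shared boundary line; this is handled by perturbing each routed curve slightly into the interior of its own cell near the endpoints, or equivalently by charging each edge to a canonical cell and drawing it in the open interior of that cell except at its two endpoints.

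Concretely, I would organize the proof as follows. First, fix the representative corners $p_i$ and argue they are distinct using disjointness of $\R^*$. Second, for each edge $e = v_iv_{i'} \in E_1$, fix a witnessing cell $g(e)$ and describe the canonical curve $\gamma_e$ inside $\overline{g(e)}$ joining $p_i$ to $p_{i'}$ (segment along a grid line, or a diagonal segment of $g(e)$), noting that $p_i$ and $p_{i'}$ are indeed corners of $g(e)$ or can be connected to such corners by a short arc inside $R_i$ (respectively $R_{i'}$) that stays within $g(e)$. Third, perform the within-cell case analysis: list the at-most-four rectangles touching a fixed cell $g$, enumerate the possible edges drawn inside $g$, and check pairwise non-crossing, using the fact that the forbidden NE–SW connection is never drawn. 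Fourth, observe that two curves assigned to distinct cells can only share an endpoint (a common $p_i$), after the small interior perturbation, so the union of all $\gamma_e$ together with the points $p_i$ is a plane drawing of $G_1$. This establishes planarity.
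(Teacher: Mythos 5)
Your proposal is correct and follows essentially the same route as the paper: a direct geometric planar embedding that draws each shared-line edge along the corresponding grid line within a witnessing cell and each top-left/bottom-right edge as a diagonal segment inside its cell (at most one per cell), with the deliberate omission of the bottom-left/top-right connection being exactly what prevents crossings. The only real difference is presentational: the paper represents each vertex $v_i$ by the region $\bar{R}_i$ (the convex hull of the grid corners contained in $R_i$) instead of a single corner point, which makes your connector arcs inside the rectangles and the boundary perturbations unnecessary.
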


Next, we use a result by Frederickson~\cite{federickson1987fast}
to obtain a subgraph $G'_{1}$ of $G_{1}$ in which each connected
component has constant size. 

\begin{lemma}
\label{lem:apply-separator}Let $\epsilon'>0$. There exists a value
$c'=O(1/(\epsilon')^{2})$ such that the following holds: let $G=(V,E)$
be a planar graph. There exists a set of vertices $V'\subseteq V$
with $|V'|\ge(1-\epsilon')|V|$ such that in the graph $G':=G[V']$
each connected component has at most $c'$ vertices.
\end{lemma}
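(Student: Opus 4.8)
The plan is to derive this statement directly from Frederickson's theorem on $r$-divisions of planar graphs~\cite{federickson1987fast}. Recall that this result guarantees, for every planar graph on $n$ vertices and every integer parameter $r\ge 1$, a set $S\subseteq V$ of ``boundary'' vertices with $|S|=O(n/\sqrt{r})$ whose removal leaves connected components each containing at most $r$ vertices. (This is itself obtained by recursively applying the Lipton--Tarjan planar separator theorem: a separator of size $O(\sqrt{n})$ splits $G$ into parts of size at most $\tfrac{2}{3}n$ each, and recursing until every part has at most $r$ vertices accumulates a separator of total size $O(n/\sqrt{r})$.)

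First I would fix $\alpha$ to be the constant hidden in the $O(\cdot)$ above, so that the theorem yields a set $S$ with $|S|\le \alpha\, n/\sqrt{r}$. Given $\epsilon'>0$, I set $r:=\lceil (\alpha/\epsilon')^{2}\rceil=O(1/(\epsilon')^{2})$ and apply the theorem with this $r$. Then $|S|\le \alpha\, n/\sqrt{r}\le \epsilon' n$, so $V':=V\setminus S$ satisfies $|V'|\ge(1-\epsilon')|V|$. Moreover every connected component of $G'=G[V']$ is a connected component of $G-S$ and hence has at most $r$ vertices, so $c':=r=O(1/(\epsilon')^{2})$ works.

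A couple of routine points need to be dispatched. If $|V|<c'$ one simply takes $V'=V$, since then $G[V']$ trivially has components of size at most $c'$. Also, Frederickson's statement is typically phrased for connected graphs; for a general planar $G$ one applies the bound component-wise (a graph whose connected components have sizes $n_1,\dots,n_t$ admits a suitable separator of size $\sum_i \alpha\, n_i/\sqrt{r}=\alpha\, n/\sqrt{r}$), which leaves the calculation unchanged.

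The only real obstacle is bookkeeping: one must cite the precise form of Frederickson's result — an $r$-division with $O(n/\sqrt{r})$ total boundary vertices — and observe that \emph{discarding} the boundary set $S$ (rather than keeping the division as a decomposition into regions with shared boundaries) is exactly what produces the component-size bound required here. There is no genuinely hard step; the substance lies entirely in invoking the separator machinery with the correct choice of $r$.
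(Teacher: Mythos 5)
Your proof is correct and follows essentially the same route as the paper: invoke Frederickson's $r$-division (regions of at most $r$ vertices with $O(n/\sqrt{r})$ boundary vertices in total), choose $r=O(1/(\epsilon')^{2})$ so that the boundary set has size at most $\epsilon' n$, and let $V'$ be the non-boundary vertices. The extra bookkeeping you mention (disconnected $G$, small $|V|$) is harmless but not needed beyond what the paper already does.
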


Let $G'_{1}$ be the graph obtained when applying Lemma~\ref{lem:apply-separator}
to $G_1$ with $\epsilon':=\epsilon/2$ and let $c_{1}=O((1/\epsilon)^{2})$
be the respective value $c'$. Now we would like to claim that if
two rectangles $R_{i},R_{i'}$ intersect the same grid cell $g\in\G$
then $v_{i},v_{i'}$ are in the same component of $G_{1}'$. Unfortunately,
this is not true. It might be that there is a grid cell $g\in\G$
such that $R_{i}$ and $R_{i'}$ contain the bottom left corner and
the top right corner of $g$, resp., and that $v_{i}$ and
$v_{i'}$ are in different components of $G_{1}'$. We fix this in
a second step. We define a graph $G_{2}=(V_{2},E_{2})$. In $V_{2}$
we have one vertex for each connected component in $G_{1}'$. We connect
two vertices $w_{i},w_{i'}\in V_{2}$ by an edge if and only if there
are two rectangles $R_{i},R_{i'}$ such that their corresponding vertices
$v_{i},v_{i'}$ in $V_{1}$ belong to the connected components of
$G_{1}'$ represented by $w_{i}$ and $w_{i'}$, resp., and
there is a grid cell $g$ whose bottom left and top right corner are
contained in $R_{i}$ and $R_{i'}$, resp. 

\begin{lemma}\label{lem:G2-planar}
The graph $G_{2}$ is planar.
\end{lemma}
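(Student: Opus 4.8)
The plan is to produce an explicit planar drawing of $G_2$ by collapsing each connected component of $G_1'$ to a single point and routing the ``diagonal'' edges of $G_2$ through the grid cells that witness them. First I would collect a few consequences of the disjointness of $\R^*$ and of the Proposition (a rectangle meeting a cell contains a corner of it). (i)~For every cell $g\in\G$, each of its four corners lies in at most one rectangle of $\R^*$, so at most four rectangles of $\R^*$ meet $g$, and $g$ can witness at most one edge of $G_2$, namely between the component of the rectangle containing the bottom-left corner of $g$ and that of the rectangle containing the top-right corner. (ii)~If two rectangles of $V'$ both meet a cell $g$, then either they are $G_1'$-adjacent (hence in one component) or they contain the bottom-left and the top-right corner of $g$; in particular a cell witnessing an edge of $G_2$ between distinct components $w\ne w'$ is \emph{strictly interior} (its bottom-left and top-right corners must be covered by rectangles of $\R^*$, whose coordinates avoid $0$ and $2n-1$, so $g$ is not in the first or last row or column, and all four bounding lines of $g$ lie in $\L_V\cup\L_H$), and the top-left and bottom-right corners of $g$ are either uncovered or covered by rectangles outside $V'$. (iii)~Every rectangle meeting a cell $g$ that witnesses a $G_1'$-edge $\{R,R'\}$ inside a component $w$ lies in $w$ or outside $V'$: it contains some corner of $g$ sharing a bounding line of $g$ with a corner of $R$ or $R'$ (all these lines being in $\L_V\cup\L_H$), hence it is $G_1$-adjacent to $R$ or $R'$, hence $G_1'$-adjacent to it if it is in $V'$.

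Next, for each component $w$ of $G_1'$ I would build a connected region $\Theta_w$ as the union of the open rectangles of $w$ together with, for every $G_1'$-edge inside $w$, a curve joining its two rectangles that stays inside the witnessing cell: along the common grid line for a ``same line'' edge, and through the rectangle-free part of the cell for a top-left/bottom-right edge (such a curve exists because the at most four corner rectangles of a cell are pairwise disjoint and each confined to its own corner, so they cannot separate the top-left corner from the bottom-right corner inside the cell). By (iii) every cell used here meets only rectangles of $w$ and rectangles outside $V'$, so these curves meet no rectangle of a foreign component. It follows that the $\Theta_w$ are pairwise disjoint: a $w$-curve and a $w'$-curve live in cells hosting edges of $w$ and of $w'$ respectively, which by (iii) are distinct cells, so the two curves can meet only on a shared boundary of two cells; a shared grid point on a $w$-curve is covered by a rectangle of $w$ (the curve ends in one), ruling out a $w'$-curve there, and a shared cell edge can be run along by both curves only if two adjacent cells host same-line edges of $w$ and of $w'$ on the common line, which would force a point of that line into a rectangle of $w$ and into one of $w'$. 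Each $\Theta_w$ is connected since the component $w$ is.

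Finally, for each edge of $G_2$, witnessed by an interior cell $g$ with bottom-left corner in $R_w\in w$ and top-right corner in $R_{w'}\in w'$, I would draw a curve from $\Theta_w$ to $\Theta_{w'}$ routed inside $g$ from $R_w$ to $R_{w'}$ (around the possibly present rectangles at the top-left and bottom-right corners of $g$, which lie outside $V'$). Such a curve meets only $R_w$, $R_{w'}$ and rectangles outside $V'$, hence no $\Theta_{w''}$ with $w''\notin\{w,w'\}$; and by (i) distinct edges of $G_2$ go through distinct cells, so two such curves can meet only at a corner shared by two cells, which being covered by a single rectangle forces them to share an endpoint component and thus to meet only at that (collapsed) vertex. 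Therefore the $\Theta_w$ together with these curves form a crossing-free drawing in the plane; collapsing each $\Theta_w$ to a point and discarding parallel edges yields a planar embedding of $G_2$.

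I expect the main obstacle to be item (iii) and the disjointness argument it feeds: everything rests on the precise correspondence between ``sharing a bounding line of a cell'' (or the top-left/bottom-right relation) and $G_1$-adjacency, and on checking that the connecting and diagonal curves can always be kept inside a single cell and away from rectangles of foreign components --- which in turn uses that every cell relevant to the construction is strictly interior, so that all of its bounding lines are genuine grid lines.
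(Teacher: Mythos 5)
Your proposal is correct and follows essentially the same route as the paper's proof: you draw each component of $G_1'$ as a connected region built from its rectangles and per-cell connecting curves, observe (as the paper does) that for a cell witnessing a $G_2$-edge no rectangle of $V_1'$ can contain its top-left or bottom-right corner (such a rectangle would be $G_1$-adjacent to both endpoints and merge their components), and then route one diagonal curve per such cell. Your items (i)--(iii) and the disjointness checks just make explicit the verifications the paper leaves implicit, so no further changes are needed.
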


Similarly as above, we apply Lemma~\ref{lem:apply-separator} to
$G_{2}$ with $\epsilon':=\frac{\epsilon}{2c_{1}}$ and let $c_{2}=O((1/\epsilon')^{2})=O(1/\epsilon^{6})$
denote the corresponding value of $c'$. Denote by $G_{2}'$ the resulting
graph.
We define a group $\R'_{q}$ for each connected component $\C_{q}$
of $V_{2}'$. The set $\R'_{q}$ contains all rectangles $R_{i}$
such that $v_{i}$ is contained in a connected component $C_{j}$
of $G_{1}'$ such that $w_{j}\in\C_{q}$. We define $\R':=\dot{\cup}_{q}\R'_{q}$.
\begin{lemma}
Let $R_{i},R_{i'}\in\R'$ be rectangles that intersect the same grid
cell $g\in\G$. Then there is a set $\R'_{q}$ such that $\{R_{i},R_{i'}\}\subseteq\R'_{q}$.
\end{lemma}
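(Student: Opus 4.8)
The plan is a short case analysis on how the two rectangles meet the common cell $g$, reducing every case to an edge that the construction of $G_{1}$, $G_{1}'$, $G_{2}$ or $G_{2}'$ already guarantees. First I would invoke the Proposition above: since $R_{i}$ and $R_{i'}$ both intersect $g$, each of them contains a corner of $g$; and since $R_{i},R_{i'}\in\R'\subseteq\R^{*}$ are disjoint open rectangles, no corner point of $g$ can lie in both. So I can fix a corner $c$ of $g$ with $c\in R_{i}$ and a corner $c'\neq c$ of $g$ with $c'\in R_{i'}$, and the unordered pair $\{c,c'\}$ is then one of exactly six possibilities: the four pairs of corners sharing an edge of $g$ (top, bottom, left, right) and the two diagonal pairs.

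Next I would dispatch the five \emph{good} configurations, all of which are already captured by $G_{1}$. If $c$ and $c'$ lie on a common edge of $g$, then both $R_{i}$ and $R_{i'}$ are crossed by the grid line carrying that edge; this line belongs to $\L_{V}\cup\L_{H}$, and not to the artificial boundary lines at coordinates $0$ and $2n-1$, because every rectangle has all coordinates in $\{0,\dots,2n-1\}$ and hence its open interior never contains the value $0$ or $2n-1$. Together with the hypothesis that $R_{i}$ and $R_{i'}$ both intersect $g$, the first bullet in the definition of $G_{1}$ yields an edge $v_{i}v_{i'}\in E_{1}$. If instead $\{c,c'\}$ is the top-left / bottom-right diagonal pair, the second bullet yields $v_{i}v_{i'}\in E_{1}$ directly. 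In either case, since $R_{i},R_{i'}\in\R'$ the vertices $v_{i},v_{i'}$ are present in $G_{1}'$ (the groups only contain rectangles whose vertex survives the first sparsification), and as $G_{1}'$ is an induced subgraph of $G_{1}$ the edge $v_{i}v_{i'}$ is still there; hence $v_{i}$ and $v_{i'}$ lie in the same connected component $C_{j}$ of $G_{1}'$. Since $R_{i}\in\R'$, this component is mapped by $w$ into some component $\C_{q}$ of $G_{2}'$, and both rectangles are by definition placed in the corresponding group $\R'_{q}$.

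The one remaining configuration is $\{c,c'\}$ being the bottom-left / top-right pair --- precisely the case excluded from $G_{1}$ for planarity and the reason $G_{2}$ exists. Here I would let $C_{j}$ and $C_{j'}$ be the connected components of $G_{1}'$ containing $v_{i}$ and $v_{i'}$; if $C_{j}=C_{j'}$ the conclusion of the previous paragraph applies verbatim. Otherwise $w_{j}\neq w_{j'}$, and since $g$ has its bottom-left corner in $R_{i}$ and its top-right corner in $R_{i'}$, the definition of $G_{2}$ supplies the edge $w_{j}w_{j'}\in E_{2}$. Because $R_{i},R_{i'}\in\R'$, the components $C_{j},C_{j'}$ are mapped by $w$ into $G_{2}'$, i.e. $w_{j}$ and $w_{j'}$ are vertices of $G_{2}'$; since $G_{2}'$ is an induced subgraph of $G_{2}$, the edge $w_{j}w_{j'}$ survives, so $w_{j}$ and $w_{j'}$ lie in a single component $\C_{q}$ of $G_{2}'$, and therefore $R_{i},R_{i'}\in\R'_{q}$.

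I expect the main (if mild) obstacle to be making the case distinction airtight: one must observe that the only \emph{bad} corner configuration --- bottom-left / top-right --- is exactly the one repaired by the second round with $G_{2}$, and that membership of $R_{i},R_{i'}$ in $\R'$ is precisely what guarantees that the vertices (resp.\ components) I need survive into $G_{1}'$ (resp.\ $G_{2}'$), so that the edges invoked are genuinely present in the induced sparsified graphs. The only other small points, both immediate from integrality of the coordinates and feasibility of $\R^{*}$, are that these crossing lines really lie in $\L_{V}\cup\L_{H}$ and that disjointness forces $c\neq c'$.
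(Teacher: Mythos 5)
Your proof is correct and follows essentially the same route as the paper's: an edge in $G_{1}$ (hence a common component of the induced subgraph $G_{1}'$) handles all configurations except the bottom-left/top-right diagonal, which is exactly the case repaired by an edge of $G_{2}$ surviving into $G_{2}'$, and membership in $\R'$ guarantees the relevant vertices/components survive both sparsifications. Your explicit six-way corner case analysis merely spells out the step the paper asserts implicitly (that no $G_{1}$-edge forces the bottom-left/top-right configuration), so there is nothing to add.
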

\begin{proof}
Assume that in $G_{1}$ there is an edge connecting $v_{i},v_{i'}$.
Then the latter vertices are in the same connected component $C_{j'}$
of $G_{1}'$ and thus they are in the same group $\R'_{q}$. Otherwise,
if there is no edge connecting $v_{i},v_{i'}$ in $G_{1}$ then $R_{i}$
and $R_{i'}$ contain the bottom left and top right corners of $g$,
resp. Assume that $v_{i}$ and $v_{i'}$ are contained in
the connected components $C_{j}$ and $C_{j'}$ of $G_{1}'$, resp.
Then $w_{j},w_{j'}\in V_{2}'$, $\{w_{j},w_{j'}\}\in E_{2}$ and
$w_{j},w_{j'}$ are in the same connected component of $V_{2}'$.
Hence, $R_{i},R_{i'}$ are in the same group $\R'_{q}$. 
\end{proof}
It remains to prove that each group $\R'_{q}$ has constant size
and that $|\R'|\ge(1-\epsilon)|\R^{*}|$. 
\begin{lemma}
There is a constant $c=O(1/\epsilon^{8})$ such that for each group
$\R'_{q}$ it holds that $|\R'_{q}|\le c$.
\end{lemma}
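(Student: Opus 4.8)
The plan is simply to chain together the two size bounds produced by the two applications of Lemma~\ref{lem:apply-separator}. Recall how $\R'_q$ is built: starting from a connected component $\C_q$ of $G_2'$, it collects every rectangle $R_i$ whose vertex $v_i\in V_1$ lies in a connected component $C_j$ of $G_1'$ with $w_j\in\C_q$. So the first step is to observe that $\R'_q$ is a disjoint union, over the components $C_j$ of $G_1'$ with $w_j\in\C_q$, of the rectangle sets associated with those components; hence $|\R'_q|$ is bounded by the number of such components $C_j$ times the maximum number of rectangles contained in a single component of $G_1'$.

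For the second step I would bound each of these two factors separately using the respective invocation of Lemma~\ref{lem:apply-separator}. Since $G_1'$ is obtained by applying the lemma to $G_1$ with $\epsilon'=\epsilon/2$, every connected component of $G_1'$ has at most $c_1=O(1/\epsilon^2)$ vertices, i.e. contains at most $c_1$ rectangles. Since $G_2'$ is obtained by applying the lemma to $G_2$ with $\epsilon'=\epsilon/(2c_1)$, every connected component of $G_2'$ — in particular $\C_q$ — has at most $c_2=O((1/\epsilon')^2)=O(c_1^2/\epsilon^2)=O(1/\epsilon^6)$ vertices, i.e. at most $c_2$ of the $w_j$. Multiplying the two factors yields $|\R'_q|\le c_1 c_2 = O(1/\epsilon^2)\cdot O(1/\epsilon^6)=O(1/\epsilon^8)$, and we set $c:=c_1 c_2$.

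There is essentially no genuine obstacle here; the proof is a one-line composition of the two bounds. The only points requiring a bit of care are purely bookkeeping: confirming that the two bounds combine multiplicatively (which is immediate from the fact that $\R'_q$ is the disjoint union of the rectangle sets of the $G_1'$-components lying inside $\C_q$), and tracking the constants through the second application of Lemma~\ref{lem:apply-separator}, where $\epsilon'=\epsilon/(2c_1)$ with $c_1=\Theta(1/\epsilon^2)$ gives $1/\epsilon'=\Theta(1/\epsilon^3)$ and hence $c_2=\Theta(1/\epsilon^6)$, so that the final constant is indeed $c=O(1/\epsilon^8)$ as claimed.
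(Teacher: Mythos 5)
Your proposal is correct and matches the paper's own argument: each connected component of $G_1'$ contains at most $c_1=O(1/\epsilon^2)$ vertices, each component of $G_2'$ contains at most $c_2=O(1/\epsilon^6)$ vertices (using $\epsilon'=\epsilon/(2c_1)$), and the bounds multiply to give $|\R'_q|\le c_1c_2=O(1/\epsilon^8)$. The constant tracking is also handled exactly as in the paper.
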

\begin{proof}
For each group $\R'_{q}$ there is a connected component $\C_{q}$
of $G_{2}'$ such that $\R'_{q}$ contains all rectangles $R_{i}$
such that $v_{i}$ is contained in a connected component $C_{j}$
of $G_{1}'$ and $w_{j}\in\C_{q}$. Each connected component of $G_{1}'$
contains at most $c_{1}=O(1/\eps^2)$ vertices of $V_{1}'$ and
each component of $G_{2}'$ contains at most $c_{2}=O(1/\eps^6)$
vertices of $V_{2}'$. Hence, $|\R'_{q}|\le c_{1}\cdot c_{2}=:c$
and $c=O((1/\epsilon^{2})(1/\epsilon^{6}))=O(1/\epsilon^{8})$.
\end{proof}
\begin{lemma}
We have that $|\R'|\ge(1-\epsilon)|\R^{*}|$. 
\end{lemma}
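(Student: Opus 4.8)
The plan is to bound the number of rectangles that get discarded in the two applications of Lemma~\ref{lem:apply-separator}, and to show these losses telescope to at most $\epsilon|\R^*|$. Recall $\R'$ consists exactly of those $R_i\in\R^*$ whose vertex $v_i$ survives in $G_1'$ \emph{and} whose containing component of $G_1'$ survives (as a vertex) in $G_2'$. So a rectangle is lost only if it is dropped when passing from $G_1$ to $G_1'$, or if it lies in a $G_1'$-component whose corresponding $G_2$-vertex is dropped when passing from $G_2$ to $G_2'$.

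First I would handle the loss in the first step. By Lemma~\ref{lem:apply-separator} applied to $G_1$ with $\epsilon'=\epsilon/2$, we have $|V_1'|\ge(1-\epsilon/2)|V_1|=(1-\epsilon/2)k$, so at most $(\epsilon/2)k$ rectangles of $\R^*$ are lost here. Next I would handle the loss in the second step. Here the subtlety is that $G_2$'s vertices are \emph{components} of $G_1'$, not individual rectangles, so Lemma~\ref{lem:apply-separator} only tells us that few \emph{components} are dropped, not few rectangles. The key observation is that each component of $G_1'$ contains at most $c_1=O(1/\epsilon^2)$ vertices (rectangles), so dropping a single $G_2$-vertex costs at most $c_1$ rectangles. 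Since $|V_2|\le|V_1'|\le k$, Lemma~\ref{lem:apply-separator} applied with $\epsilon'=\epsilon/(2c_1)$ guarantees $|V_2'|\ge(1-\epsilon/(2c_1))|V_2|$, i.e. at most $(\epsilon/(2c_1))|V_2|\le(\epsilon/(2c_1))k$ vertices of $V_2$ are discarded; multiplying by the at-most-$c_1$ rectangles per discarded vertex gives a loss of at most $(\epsilon/2)k$ rectangles in the second step.

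Adding the two bounds, the total number of rectangles in $\R^*\setminus\R'$ is at most $(\epsilon/2)k+(\epsilon/2)k=\epsilon k=\epsilon|\R^*|$, so $|\R'|\ge(1-\epsilon)|\R^*|$, as claimed. (The bound $s\le k$ follows since the groups partition a subset of the $k$ rectangles and each group is nonempty.)

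The main obstacle, and the one place the argument needs care, is precisely the bookkeeping in the second step: one must remember that $V_2$ is indexed by components rather than rectangles, so that the choice of $\epsilon'$ in the second invocation of Lemma~\ref{lem:apply-separator} has the extra factor $1/c_1$ baked in exactly to absorb the $\le c_1$ rectangles lost per discarded component. This is also why the error parameters in the two invocations differ (and why the final component size $c=c_1c_2=O(1/\epsilon^8)$ comes out as it does). Everything else is a routine summation of two error terms.
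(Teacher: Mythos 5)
Your proof is correct and follows essentially the same route as the paper: bound the loss from the first application of Lemma~\ref{lem:apply-separator} by $(\epsilon/2)|\R^*|$, and charge each deleted $G_2$-vertex at most $c_1$ rectangles so that the choice $\epsilon'=\epsilon/(2c_1)$ bounds the second loss by $(\epsilon/2)|\R^*|$ as well. No issues.
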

\begin{proof}
At most $\frac{\epsilon}{2}\cdot|V_{1}|$ vertices of $G_{1}$ are
deleted when we construct $G_{1}'$ from $G_{1}$. Each vertex in
$G_{1}'$ belongs to one connected component $C_{j}$, represented
by a vertex $w_{j}\in G_{2}$. At most $\frac{\epsilon}{2c_{1}}|V_{2}|$
vertices are deleted when we construct $G_{2}'$ from $G_{2}$. These
vertices represent at most $c_{1}\cdot\frac{\epsilon}{2c_{1}}|V_{2}|\le\frac{\epsilon}{2}|V'_{1}|\le\frac{\epsilon}{2}|V{}_{1}|$
vertices in $G_{1}$ (and each vertex in $G_{1}$ represents one rectangle
in $\R^{*}$). Therefore, $|\R'|\ge|\R^{*}|-\frac{\epsilon}{2}\cdot|V_{1}|-\frac{\epsilon}{2}\cdot|V_{1}|=(1-\epsilon)|\R^{*}|$.
\end{proof}
This completes the proof of Lemma~\ref{lem:exists-structured-solution}.

\subsection{The algorithm}
\label{sec:pas}

In our algorithm, we compute a solution that is at least as good as
the solution $\R'$ as given by Lemma~\ref{lem:exists-structured-solution}.
For each group $\R'_{j}$ we define by $\G_{j}$ the set of grid
cells that are intersected by at least one rectangle from $\R'_{j}$.
Since in $\R'$ each grid cell can be intersected by rectangles of
only one group, we have that $\G_{j}\cap\G_{q}=\emptyset$ if $j\ne q$.
We want to guess the sets $\G_{j}$. The next lemma shows that the number of possibilities for one of
those sets is polynomially bounded in $k$.
\begin{lemma}
\label{lem:shape-cells-group-1}Each $\G_{j}$ belongs to a set $\G$ of cardinality at most $k^{O(1/\eps^8)}$ that can be computed in polynomial time. 
\end{lemma}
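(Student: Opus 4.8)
The plan is to bound the number of candidate cell-sets $\G_j$ by showing that each such set is essentially determined by a constant amount of information, namely the at most $c=O(1/\eps^8)$ rectangles of the group $\R'_j$, together with the fact that each rectangle contains a grid corner and hence only ``touches'' a bounded region of the grid. First I would observe that $\G_j$ is exactly the union, over the rectangles $R_i\in\R'_j$, of the set of cells intersected by $R_i$; so it suffices to understand, for a single rectangle $R$, the collection of cells it can intersect. By the Proposition, any cell $g$ that $R$ intersects has one of its corners inside $R$. Since $R$ is crossed by exactly one vertical line $\ell^V_a\in\L_V$ and exactly one horizontal line $\ell^H_b\in\L_H$ (by Lemma~\ref{lem:grid}), the interior of $R$ can only contain grid corners of the form $(\ell^V_a,\ell^H_b)$ — that is, a single grid corner. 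Consequently the cells intersected by $R$ are precisely the (at most four) cells incident to that one corner, and the whole set of cells $R$ can touch is determined by the index pair $(a,b)\in\{1,\dots,k-1\}^2$. Hence each rectangle contributes one of at most $(k-1)^2<k^2$ possible ``cell patterns''.

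Next I would assemble the group-level bound. A group $\R'_j$ consists of at most $c$ rectangles, each of which is described (as far as the induced cells are concerned) by a pair in $\{1,\dots,k-1\}^2$; so the multiset of these descriptors — and therefore the set $\G_j=\bigcup_{R_i\in\R'_j}\{\text{cells touched by }R_i\}$ — is one of at most $(k^2)^c = k^{2c}=k^{O(1/\eps^8)}$ possibilities. To make this a genuine enumeration algorithm I would simply define $\G$ to be the family of all sets obtainable as $\bigcup_{t=1}^{c}\{\text{the up to four cells incident to the corner }(\ell^V_{a_t},\ell^H_{b_t})\}$ as $(a_1,b_1),\dots,(a_c,b_c)$ range over $(\{1,\dots,k-1\}^2)^c$; since $\L_V,\L_H$ are already computed, each such set is produced in $O(c)$ time, and the total time is $k^{O(1/\eps^8)}$, which is polynomial for fixed $\eps$ (and for fixed $k$ as well). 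By the argument of the previous paragraph every actual $\G_j$ lies in this family, which is what the lemma asserts.

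The main subtlety — and the step I would be most careful about — is the claim that a single input rectangle $R$ intersects only the four cells around one grid corner. One has to use that $R$ is crossed by \emph{exactly one} line of $\L_V$ and \emph{exactly one} line of $\L_H$: this is where Lemma~\ref{lem:grid} and the construction of the grid are essential, since a priori a rectangle could be wide or tall and span many cells. Because $R$ lies strictly between two consecutive vertical lines on each side of its unique crossing line $\ell^V_a$ (and similarly for $\ell^H_b$), its $x$-extent is contained in $(\ell^V_{a-1},\ell^V_{a+1})$ and its $y$-extent in $(\ell^H_{b-1},\ell^H_{b+1})$; intersecting this with the grid gives at most the $2\times 2$ block of cells around the corner $(\ell^V_a,\ell^H_b)$, and by the Proposition $R$ actually intersects only those among these four cells whose shared corner $(\ell^V_a,\ell^H_b)$ it contains, so in fact $\le 4$ cells and all sharing that one corner. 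One should also handle the boundary indices gracefully — when the crossing line is $\ell^V_1$ or $\ell^V_{k-1}$ the neighbouring ``line'' is the artificial $\ell^V_0=0$ or $\ell^V_{|\L_V|+1}=2n-1$, which only decreases the number of cells — but this does not affect the counting bound $(k-1)^2$ on descriptors. With that geometric fact in hand, the counting and the running-time claim are immediate.
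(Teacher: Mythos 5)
Your argument has a genuine gap: the claim that every input rectangle is crossed by \emph{exactly one} vertical line of $\L_V$ and \emph{exactly one} horizontal line of $\L_H$ is false. Lemma~\ref{lem:grid} and its greedy construction only guarantee that each rectangle is crossed by \emph{at least} one line of each family; a wide (or tall) rectangle can easily be crossed by many lines. For instance, with rectangles whose $x$-projections are $(0,10),(1,2),(3,4),(5,6)$ the construction places vertical lines at $x=1.5$ and $x=3.5$, both of which cross the first rectangle. Consequently a single rectangle may contain many grid corners and intersect an arbitrarily large contiguous block of cells, not just the $\le 4$ cells around one corner. This breaks the heart of your proof: the family $\G$ you enumerate (unions of at most $c$ ``$2\times 2$ corner patterns'') is too small and in general does \emph{not} contain the true sets $\G_j$, so the guessing step in the main algorithm would fail even though your headline count $k^{O(1/\eps^8)}$ happens to be of the right order.

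The repair is essentially the paper's proof: for a single rectangle $R$ the set $\G(R)$ of intersected cells is always an axis-aligned contiguous block of grid cells (since $R$ is an axis-parallel rectangle and the cells tile the plane), so $\G(R)$ is determined by its four corner cells, equivalently by the indices of its extreme rows and columns. This gives at most $k^{4}$ (rather than your $(k-1)^2$) possible sets $\G(R)$ per rectangle, all enumerable in polynomial time, and then $\G_j=\bigcup_{R\in\R'_j}\G(R)$ with $|\R'_j|\le c=O(1/\eps^8)$ ranges over at most $(k^4)^{c}=k^{O(1/\eps^8)}$ candidate sets. So your overall strategy (bound the per-rectangle cell pattern, then take unions over the $\le c$ rectangles of a group) is the right one, but the per-rectangle description must be ``a rectangular block of cells given by its corners,'' not ``the four cells around the unique crossing corner.''
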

\begin{proof}
The cells $\G_j$ intersected by $\R'_{j}$ are the union of all cells $\G(R)$ with 
$R\in \R'_j$ where for each rectangle $R$ the set $\G(R)$ denotes the cells intersected by $R$. Each set $\G(R)$ can be specified by indicating the $4$ \emph{corner} cells of $\G(R)$, i.e., top-left, top-right, bottom-left, and bottom-right corner. Hence there are at most $k^4$ choices for each such $R$. The claim follows since $|\R'_{j}|=O(1/\eps^8)$. 
\end{proof}
We hence achieve the main result of this section.

\begin{proof}[Proof of Theorem \ref{thm:MISR-FPT-PTAS}]
Using Lemma \ref{lem:shape-cells-group-1}, we can guess by exhaustive enumeration all the sets $\G_j$ in time $k^{O(k/\epsilon^{8})}$. We obtain
one independent problem for each value $j\in\{1,\dots, s\}$ 
which consists
of all input rectangles that are contained in $\G_{j}$. For this
subproblem, it suffices to compute a solution with at least $|\R'_{j}|$
rectangles. Since $|\R'_{j}|\le c=O(1/\epsilon^{8})$ we can do this
in time $n^{O(1/\epsilon^{8})}$ by complete enumeration. Thus, we solve
each of the subproblems and output the union of the computed solutions. 
The overall running time is as in the claim.
If all the computed solutions have size less than $(1-\eps)k$, this implies that the optimum solution is smaller than $k$.
Otherwise we obtain a solution of size at least $(1-\eps)k \ge k/(1+2\eps)$ and the claim follows by redefining $\eps$ appropriately.
\end{proof}

Essentially the same construction as above also gives an approximate kernelization algorithm as claimed in Theorem~\ref{thm:MISR-FPT-PTAS-kernel}, see Appendix~\ref{apx:omitted-proofs} for details.

\section{A Parameterized Approximation Scheme for 2DKR}
\label{sec:2dkr}

In this section we present a PAS and an approximate kernelization
for \TDKR. W.l.o.g., we assume that $k\ge\Omega(1/\epsilon^{3})$, since otherwise we can optimally solve the problem in time $n^{O(1/\epsilon^{3})}$ by exhaustive enumeration. In Section \ref{subsec:free-space-solution} we show that, if a solution of size $k$ exists,  
there is a solution of size at least $(1-\epsilon)k$ in which no item intersects
some horizontal strip $(0,N)\times (0,(1/k)^{O(1/\epsilon)}N)$ at the bottom of the knapsack.
In Section \ref{subsec:FPT-RA} we show that, if there exists a solution of size $k'$ that does not use the mentioned strip, then we can compute in polynomial time a set of size $(k')^{O(1/\epsilon)}$ that contains a solution of size $k'$ (where we are allowed to use the full knapsack).  Combining these two results gives 
Theorem~\ref{thr:tdkr:pas}.

\subsection{\label{subsec:free-space-solution}Freeing a Horizontal Strip}

In this section, we prove the following lemma that shows the existence
of a near-optimal solution that leaves a sufficiently tall empty horizontal strip in the knapsack
(assuming $k\geq \Omega(1/\epsilon^{3})$). W.l.o.g., $\eps\leq 1$. Since
we can rotate the items by 90 degrees, we can assume w.l.o.g.~that
$w_{i}\ge h_{i}$ for each item $i\in I$.

\begin{lemma}
\label{lem:free-space}Let $k\in\mathbb{N}$, $k = \Omega(1/\epsilon^{3})$, and $\epsilon>0$. Given
an instance of \TDKR with a solution of size $k$,
there exists a solution of size at least $(1-\epsilon)k$ in which
no packed item intersects $(0,N)\times (0,(1/k)^{c}N)$, for a proper constant $c=O(1/\epsilon)$.
\end{lemma}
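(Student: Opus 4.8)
The plan is to start from a solution $I^*$ of size $k$ and show that, after discarding an $\eps$-fraction of its items, we can free a horizontal strip of height $(1/k)^{O(1/\eps)}N$ at the bottom of the knapsack. First I would apply the planar-separator machinery of Lemma~\ref{lem:apply-separator} to a suitable planarity structure on $I^*$: build a graph whose vertices are the $k$ packed rectangles and whose edges capture ``vertical adjacency'' in the packing (one rectangle lying directly above another with overlapping $x$-projections). This graph is planar, so by Lemma~\ref{lem:apply-separator} we can delete an $\eps$-fraction of the items so that every remaining connected component has size $O(1/\eps^2)$. The intuition is that after this sparsification, no ``vertical chain'' of packed items is too long — any maximal tower of items stacked on top of each other has bounded length — so somewhere near the bottom there should be a thin free strip, unless some single item is itself very tall.

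The main case analysis then splits on whether tall items are present. Since we assume $w_i\ge h_i$ for every item (using rotations), a ``tall'' item is one whose height is at least $(1/k)^{c}N$ for the target constant $c=O(1/\eps)$; call the others \emph{short}. If the total vertical extent contributed by the relevant component near the bottom consists only of short items, then a counting/pigeonhole argument over $O(1/\eps^2)$ short items each of height $<(1/k)^{c'}N$ (for a slightly larger exponent $c'$) shows their stacked height is still $<(1/k)^{c}N$ after readjusting the constant, and we can slide that whole component upward, freeing the strip. The delicate part is the other case: if there is a genuinely tall item, we cannot simply move it, so instead we build a \emph{vertical chain} of wide-and-tall rectangles spanning the knapsack from bottom to top — a left-to-right-ordered sequence of packed items, each reaching nearly full height, that together separate the knapsack into a left region and a right region. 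Within each such region the number of optimum items is smaller, and we recurse (or iterate the argument), each time either freeing the strip or reducing to strictly smaller subinstances; since $k$ is finite the recursion bottoms out, and the accumulated loss is a geometric series summing to at most $\eps k$ provided $k=\Omega(1/\eps^3)$, which is exactly the hypothesis.

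The hard part will be making the vertical-chain construction precise and controlling the bookkeeping: one must argue that when no thin free strip exists at the bottom, the items intersecting the bottom strip can be \emph{linked} into a chain of width-$\Omega(N/\mathrm{poly}(k))$ items whose heights force a clean left/right split, and then argue that iterating this over at most $k$ levels of recursion loses only an $\eps$-fraction in total. I would set the per-level loss to $\eps/k$ (or $\eps$ times a geometric factor) and choose the exponent $c=O(1/\eps)$ so that each recursive call shrinks the ``budget'' $k$ by a controlled amount while the strip height target $(1/k)^c N$ stays consistent across levels; verifying that these constants fit together is the technical crux, but it is routine given the structural lemmas already established. Everything else — planarity of the adjacency graph, the $O(1/\eps^2)$ component bound, and the final redefinition of $\eps$ — follows the same template used in Section~\ref{sec:misr}.
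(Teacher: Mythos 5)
Your high-level template (planarize the packing, apply Lemma~\ref{lem:apply-separator} to get components of size $O(1/\eps^2)$, then case-split and use a chain of items to separate the knapsack) matches the paper's outline, but two essential ingredients are missing and one step of your plan cannot work as stated. First, you have no analogue of the paper's shifting step (Lemma~\ref{lem:shifting}): the paper first chooses an exponent $B\in\{1,\dots,\lceil 8/\eps\rceil\}$ so that, after discarding an $O(\eps)$-fraction of the solution, every remaining item is either \emph{large} (height at least $(1/k)^{B}N$) or \emph{thin} (height below $(1/k)^{B+2}N$), i.e.\ there is a genuine two-exponent gap. Your single threshold $(1/k)^{c}N$ does not give this: ``short'' items may have height just below the threshold, and then $O(1/\eps^{2})$ of them stack to far more than $(1/k)^{c}N$, so your pigeonhole step fails; your parenthetical ``slightly larger exponent $c'$'' is exactly the gap you would need, but items with heights between the two exponents are unaccounted for, and only the shifting argument lets one discard them cheaply. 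The gap is also what the paper later uses to bound the collateral damage of the chain deletion: large items have both dimensions at least $(1/k)^{B}N$, so each $(1/k)^{B}N\times(1/k)^{B}N$ ``deletion rectangle'' placed along the chain kills at most $4$ items.

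Second, your treatment of the tall case diverges from the paper in a way that breaks the statement. The paper pushes all large items up, and if the bottom strip is still blocked it extracts (via the push-up argument) a bottom-to-top chain of only $K=O(1/\eps^{2})$ items with consecutive vertical gaps at most $(1/k)^{B}N$; it then \emph{deletes} this chain together with the $O(1/\eps^{2})$ items meeting the deletion rectangles (affordable since $k=\Omega(1/\eps^{3})$), shifts everything left of the resulting barrier to the right by $(1/k)^{B}N$, thereby freeing a \emph{vertical} strip, and finally rotates the whole packing by $90$ degrees to obtain the horizontal strip --- this rotation is precisely where the restriction to \TDKR enters (see Figure~\ref{figure:counter-example} for why the lemma fails without rotations), and it is absent from your argument. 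Your alternative --- keep the separating chain and recurse on the left and right regions --- cannot yield the claimed conclusion: the chain itself reaches the bottom of the knapsack, so no packing that retains it can leave a strip $(0,N)\times(0,(1/k)^{c}N)$ empty, and strips freed inside the two sub-regions do not combine into one full-width strip. Moreover your loss accounting is unsubstantiated: each chain/separation event costs $\Theta(1/\eps^{2})$ items (not $\eps/k$), so a recursion of up to $k$ levels has no geometric decrease and would destroy the solution; the paper needs no recursion at all because a single chain deletion plus the shift-and-rotate step already finishes the proof.
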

We classify items into large and thin items. Via a shifting argument,
we get the following lemma. 
\begin{lemma}
\label{lem:shifting}There is an integer $\sp\in\{1,\dots,\lceil8/\epsilon\rceil\}$
such that by losing a factor of $1+\epsilon$ in the objective we
can assume that the input items are partitioned into 
\begin{itemize}
\item \emph{large} items $L$ such that $h_{i}\ge(1/k)^{\sp}N$ (and thus also
$w_{i}\ge(1/k)^{\sp}N$) for each item $i\in L$, 
\item \emph{thin} items $T$ such that $h_{i}<(1/k)^{\sp+2}N$ for each item $i\in T$.
\end{itemize}
\end{lemma}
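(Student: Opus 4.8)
The statement wants: a single threshold exponent $B\in\{1,\dots,\lceil 8/\eps\rceil\}$ and, at the cost of a $(1+\eps)$ factor, a partition of the items into "large" ones with $h_i\ge (1/k)^{B}N$ and "thin" ones with $h_i<(1/k)^{B+2}N$ — i.e. we get to *discard* all items whose height falls in the "medium" band $[(1/k)^{B+2}N,\,(1/k)^{B}N)$. Recall $w_i\ge h_i$ by the earlier normalization, so the condition $h_i\ge (1/k)^{B}N$ automatically gives $w_i\ge (1/k)^{B}N$ as claimed, and no separate argument for widths is needed.

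Let me think about the shifting structure. Define, for $j=1,2,3,\dots$, the height classes $H_j := \{\,i : (1/k)^{j+1}N \le h_i < (1/k)^{j}N\,\}$ (items of "scale $j$"). These classes are pairwise disjoint. Now group consecutive scales into blocks of width $2$: for an offset $a\in\{1,2\}$ consider the blocks $B_a^{(\ell)} := H_{a+2\ell}\cup H_{a+2\ell+1}$, $\ell=0,1,2,\dots$. Given a fixed feasible solution $\OPT$ of size $k$ I want to choose $a$ and then, within that choice, choose one block per "super-block" to delete so that we only lose an $\eps$-fraction — this is the standard two-level shifting argument. Concretely: first pick $a\in\{1,2\}$ so that the number of solution items whose scale is $\equiv a$ or $a+1$... hmm, that is not quite a clean partition. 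Let me instead do it in one level but with blocks of size growing: the cleanest version is the following. Partition the scales $j\in\{1,2,\dots\}$ into consecutive windows $W_1=\{1,\dots,t\}, W_2=\{t+1,\dots,2t\},\dots$ of length $t:=\lceil 4/\eps\rceil$ each (so that after two "cut scales" we still have $t-2\ge$ something useful — actually we need the two cut scales to be adjacent). By averaging over the $t$ possible cyclic shifts of this windowing, there is a shift under which the solution items falling into the *last two* scales of every window number at most $\frac{2}{t}\cdot|\OPT|\le \frac{2}{t}k\le \frac{\eps}{2}k$. Delete those items; this loses at most $\frac{\eps}{2}k<\eps k$, i.e. costs a factor $\le 1/(1-\eps/2)\le 1+\eps$.

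After deleting, within each surviving window $\{m t+1,\dots,(m{+}1)t\}$ (post-shift, renaming), every remaining item has scale in $\{mt+1,\dots,(m{+}1)t-2\}$, so its height lies in $[(1/k)^{(m+1)t-1}N,\ (1/k)^{mt+1}N)$, and there is a two-scale gap to the next window. Now I declare the first window $\{1,\dots,t\}$ to give the "large" items: set $B:=1$, so large items satisfy $h_i\ge (1/k)^{t}N\ge (1/k)^{B}N$ — wait, that is backwards, I want $B$ to be the *lower* exponent bound. Let me set things up so $B$ is the exponent just below the large window and $B+2$ the exponent just above the thin regime. The clean choice: the large items are those of scale $\le B$ (heights $\ge (1/k)^{B+1}N$... ) — I need to match the lemma's exact inequalities, so I will set $B$ to be the largest surviving-window boundary $\le$ something, choosing the particular window boundary that the shifting hands me. Since $t=\lceil 4/\eps\rceil$ and there are at most... the point is $B$ ends up at most $t\le \lceil 4/\eps\rceil \le \lceil 8/\eps\rceil$; to nail the constant I will simply take the *first* window to be "large", everything strictly below the following gap to be "thin", and verify the required exponent inequalities $h_i\ge(1/k)^B N$ vs.\ $h_i<(1/k)^{B+2}N$ hold with $B$ equal to that first window's length, which is $\le \lceil 8/\eps\rceil$.

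I'd carry it out in this order: (1) define the height scales $H_j$; (2) define the $t$-windowings and their $t$ cyclic shifts; (3) run the averaging argument over shifts to pick one in which the items in the top-two scales of each window are $\le\frac{\eps}{2}k$ in total, and delete them — this is where the $(1+\eps)$ loss comes from; (4) identify "large" $=$ items above the first post-deletion gap, "thin" $=$ items below it, read off $B$ as the relevant exponent, and check $B\le\lceil 8/\eps\rceil$ and the two displayed inequalities. The main obstacle is purely bookkeeping: getting the constants to land exactly on $\lceil 8/\eps\rceil$ and on the gap being exactly "$+2$" in the exponent, rather than "$+$ (some constant)". Using windows of length $\lceil 4/\eps\rceil$ and cutting the top two scales of each window gives loss $\frac{2}{\lceil 4/\eps\rceil}k\le\frac{\eps}{2}k$ and a gap of exactly two exponents, with $B\le\lceil 4/\eps\rceil\le\lceil 8/\eps\rceil$, so the arithmetic does close — but the indexing of $H_j$ vs.\ the strict/non-strict inequalities in the lemma needs to be set up carefully from the start so that "height $\ge (1/k)^{B}N$" and "height $<(1/k)^{B+2}N$" come out verbatim. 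There is no conceptual difficulty beyond this; the geometry of the knapsack plays no role here, only the one-dimensional set of heights.
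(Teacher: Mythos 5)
Your plan is correct and is essentially the paper's own argument: both are averaging/shifting arguments over $O(1/\eps)$ candidate exponent thresholds, concluding that some band of two consecutive scales $[(1/k)^{B+2}N,(1/k)^{B}N)$ with $B\le\lceil 8/\eps\rceil$ contains at most an $\tfrac{\eps}{2}$-fraction of the solution items, which are then discarded at a $(1+\eps)$ loss (widths being handled for free via $w_i\ge h_i$). The paper implements the averaging more directly, over the $\lceil 8/\eps\rceil$ candidate bands $I(B')=\{i\mid h_i\in[(1/k)^{B'+2}N,(1/k)^{B'}N)\}$ using that each item lies in at most four of them, which avoids your window/cyclic-shift bookkeeping, but your version also closes.
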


Let $B$ be the integer due to Lemma~\ref{lem:shifting} and we work
with the resulting item classification. If $|T|\ge k$ then we can
create a solution of size $k$ satisfying the claim of Lemma~\ref{lem:free-space}
by simply stacking $k$ thin items on top of each other: any $k$
thin items have a total height of at most $k\cdot(1/k)^{\sp+2}N\le(1/k)^{2}N$.
Thus, from now on assume that $|T|< k$. 

\subparagraph{Sparsifying large items.}

Our strategy is now to delete some of the large items and move the
remaining items. This will allow us to free the area $[0,N]\times[0,(1/k)^{O(1/\epsilon)}N]$
of the knapsack. Denote by $\OPT'$ the almost optimal solution obtained by applying
Lemma~\ref{lem:shifting}. We remove the items in $\OPT'_T:=\OPT'\cap T$ temporarily;
we will add them back later. 

We construct a directed graph $G=(V,A)$ where we have one vertex
$v_{i}\in V$ for each item $i\in \OPT'_L:=\OPT'\cap L$. We connect two vertices
$v_{i},v_{i'}$ by an arc $a=(v_{i},v_{i'})$ if and only if we can
draw a vertical line segment of length at most $(1/k)^{\sp}N$ that
connects item $i$ with item $i'$ without intersecting any other
item such that $i'$ lies above $i$, i.e., the 
bottom coordinate of $i'$ is at least as large as the top
coordinate of $i$, see Figure~\ref{fig:2DKP} for
a sketch.
We obtain the following proposition since 
for each edge we can draw a vertical line segment and these segments do
not intersect each other.

\begin{figure}[t]
\begin{centering}
\includegraphics[scale=0.5]{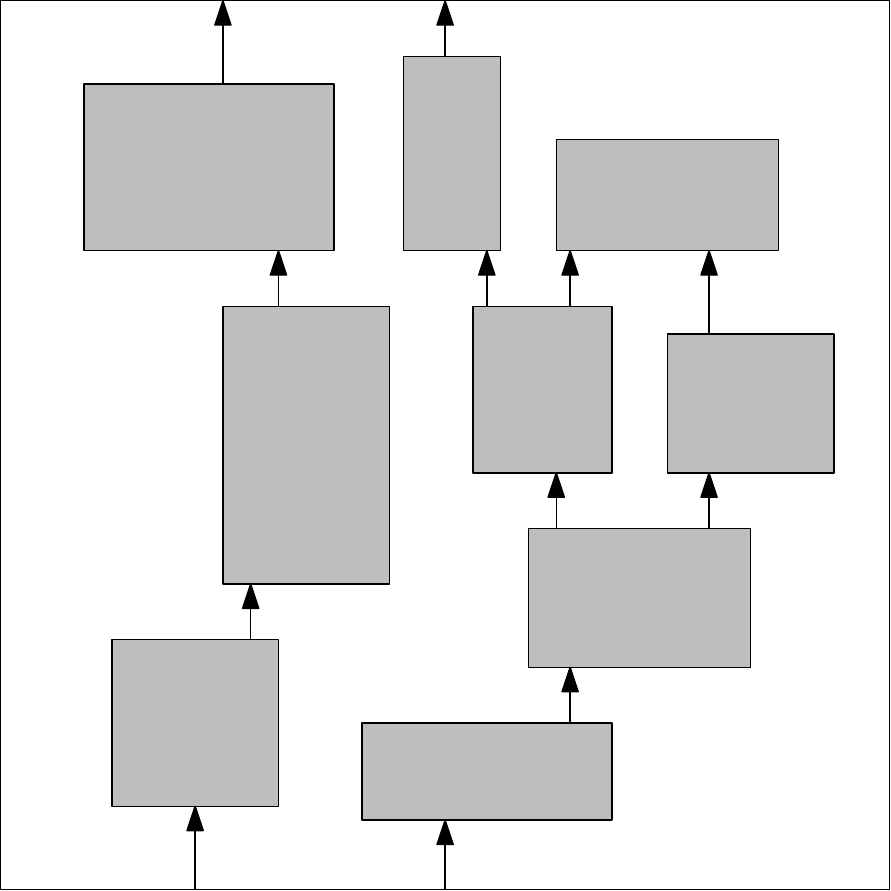}~~~~~~~~~~~~~\includegraphics[scale=0.5]{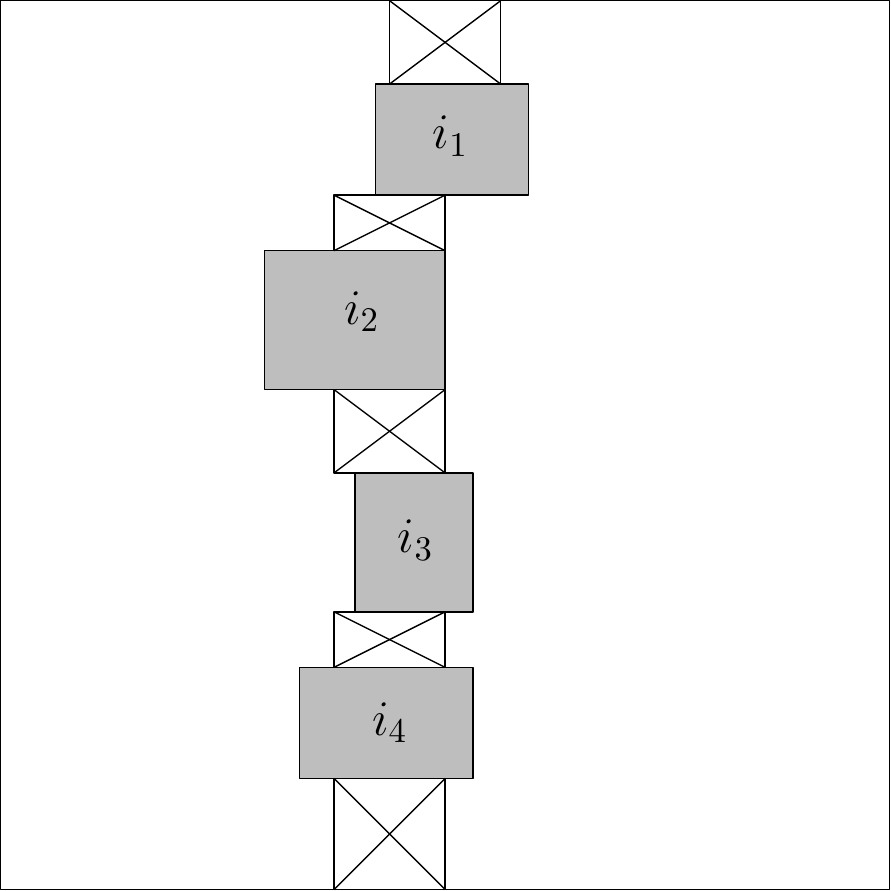} 
\par\end{centering}
\caption{\label{fig:2DKP}The left figure shows the arcs of the graph $G$.
Each item corresponds to one vertex of the graph. The right figure
shows the items $i_{1},\dots,i_{K}$ and the deletion rectangles between
them.}
\end{figure}

\begin{proposition} The graph $G$ is planar. \end{proposition}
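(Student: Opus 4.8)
The plan is to read a plane embedding of $G$ directly off the given packing of $\OPT'$. I would place the vertex $v_i$ at an interior point $p_i$ (e.g.\ the centre) of the rectangle that item $i$ occupies in the packing, and for each arc $a=(v_i,v_{i'})$ I would fix one of the vertical segments of length at most $(1/k)^{\sp}N$ witnessing $a$; after trimming it one may assume this segment $s_a$ runs, at some $x$-coordinate $x_a$, from the top edge of item $i$ up to the bottom edge of item $i'$ while meeting no other item in between. The curve drawn for $a$ is then the concatenation of $s_a$ with the straight segment inside rectangle $i$ from $p_i$ to the lower endpoint of $s_a$ and the straight segment inside rectangle $i'$ from $p_{i'}$ to the upper endpoint of $s_a$.

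The heart of the argument would be to check that curves of distinct arcs meet only in common vertices. First I would argue that the segments $s_a$ can be taken pairwise disjoint: two vertical segments with $x_a\neq x_{a'}$ are disjoint for trivial reasons, and if $x_a=x_{a'}=:x_0$ then both join consecutive items of the linear stacking order of the items meeting the vertical line $x=x_0$ (two such items have overlapping $x$-ranges, hence disjoint $y$-ranges), so — since every item has strictly positive height — the two segments occupy disjoint vertical intervals. Next I would use that each straight connector lies in the interior of its rectangle except for its single boundary endpoint, that the packing rectangles are pairwise interior-disjoint, and that all connectors inside a fixed rectangle $i$ emanate from the common point $p_i$, so any two of them meet only at $p_i$; moreover distinct arcs incident to $v_i$ attach to distinct points of the boundary of rectangle $i$, again because their witnessing segments are disjoint (two arcs leaving $i$ upwards at the same column would have to reach the same, unique, item directly above $i$ in that column). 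Putting these facts together, the only intersections among the drawn curves occur at the points $p_i$, which are exactly the vertices of $G$, so the drawing is plane and $G$ is planar.

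I expect the only real obstacle to be precisely this last non-crossing bookkeeping — simultaneously choosing the $s_a$ pairwise disjoint and splicing them with the in-rectangle connectors without creating spurious crossings. A slick way to discharge it is to observe that contracting each packing rectangle to a point carries the rectangles together with the pairwise-disjoint segments $s_a$ into a plane multigraph isomorphic to $G$, so planarity is inherited from the plane; and any degeneracy in which a witnessing segment merely touches the closed boundary of a third item can be removed by a generic infinitesimal perturbation of the coordinates $x_a$, which does not change which arcs are present.
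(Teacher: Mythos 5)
Your proof is correct and takes essentially the same approach as the paper, which justifies the proposition in one sentence: each arc is witnessed by a vertical segment, these segments can be chosen pairwise disjoint, and contracting each item to a point then yields a plane drawing of $G$. Your write-up merely makes explicit the disjointness bookkeeping (same-column segments joining consecutive items in the vertical stacking order) and the degenerate boundary-touching cases that the paper leaves implicit.
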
 

Next, we apply Lemma~\ref{lem:apply-separator} to $G$ with $\epsilon':=\epsilon$.
Let $G'=(V',A')$ be the resulting graph. We remove from $\OPT'_L$ all
items $i\in V\setminus V'$ and denote by $\OPT''_L$ the resulting solution.
We push up all items in $\OPT''_L$
as much as possible. If now the strip $(0,N)\times (0,(1/k)^{\sp}N)$ is not intersected by any item then
we can place all the items in $T$ into the remaining
space. Their total height can be at most $k\cdot(1/k)^{\sp+2}N\le(1/k)^{\sp+1}N$
and thus we can leave a strip of height $(1/k)^{\sp}N-(1/k)^{\sp+1}N\ge(1/k)^{O(1/\epsilon)}N$
and width $N$ empty. This completes the proof of Lemma~\ref{lem:free-space} for this case. 

Assume next that the strip $(0,N)\times (0,(1/k)^{\sp}N)$ is intersected by some
item: the following lemma implies that there is a set of $c'=O(1/\epsilon^{2})$
vertices whose items intuitively connect the top and the bottom edge
of the knapsack.
\begin{lemma} \label{lem:separating-path}Assume that in $\OPT''_L$ there
is an item $i_{1}$ intersecting $(0,N)\times (0,(1/k)^{\sp}N)$. Then $G$ contains a path $v_{i_{1}},v_{i_{2}},\dots,v_{i_{K}}$ with
$K\le c'=O(1/\epsilon^{2})$, such that the distance between $i_{K}$ and the top edge of the knapsack
is less than $(1/k)^{\sp}N$.
\end{lemma}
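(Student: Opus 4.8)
The plan is to build, in the pushed-up packing of $\OPT''_L$, a vertical chain of items sitting one on top of the next above $i_1$, and then to recognize this chain as a path of $G'$ --- whence the bound $K\le c'$ is inherited for free from the component-size guarantee of Lemma~\ref{lem:apply-separator}, applied to $G$ with $\epsilon':=\epsilon$, so that $c'=O(1/\epsilon^2)$. Two elementary facts about ``pushing everything up'' will be used: it does not change $x$-coordinates, and it preserves the vertical order of any two items whose $x$-projections overlap (otherwise the two open rectangles would overlap at the crossover). For each item $i\in\OPT''_L$ write $\delta_i\ge 0$ for the distance it is moved up.

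First I would note that, since $i_1$ intersects $(0,N)\times(0,(1/k)^{\sp}N)$ in the pushed-up packing, its bottom coordinate there is below $(1/k)^{\sp}N$, and since that bottom coordinate is at least $\delta_{i_1}$ (the item started at height $\ge 0$), we get $\delta_{i_1}<(1/k)^{\sp}N$. Then I would define the chain greedily. Given $i_t$: if the vertical distance from the top edge of $i_t$ to the top edge of the knapsack is less than $(1/k)^{\sp}N$, stop and set $K:=t$; otherwise $i_t$ lies strictly below the top edge, and since the packing cannot be pushed up any further, $i_t$ rests against some item $i_{t+1}\in\OPT''_L$ touching its top edge from above over a nonempty common $x$-interval. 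The chain terminates because each $i_{t+1}$ is large, so its top coordinate exceeds that of $i_t$ by at least $(1/k)^{\sp}N$; in particular the $i_t$ are pairwise distinct.

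The crux is to verify that each $(v_{i_t},v_{i_{t+1}})$ is an arc of $G$. By order preservation, in the original packing $i_{t+1}$ still lies (weakly) above $i_t$; and since $i_{t+1}$ touches $i_t$ in the pushed-up packing, a one-line computation from the touching condition (using that $x$-coordinates are unchanged) shows that the vertical gap between $i_t$ and $i_{t+1}$ in the original packing equals $\delta_{i_t}-\delta_{i_{t+1}}$. As the original packing is feasible this gap is nonnegative, so $\delta_{i_{t+1}}\le\delta_{i_t}$; inductively $\delta_{i_t}\le\delta_{i_1}<(1/k)^{\sp}N$, so the gap is smaller than $(1/k)^{\sp}N$. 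Crucially, this open gap --- of height less than $(1/k)^{\sp}N$ and positive width --- cannot contain the interior of any item of $\OPT'_L$, since every such item is large and hence too tall to fit into it without overlapping $i_t$ or $i_{t+1}$. Therefore a vertical segment of length $\delta_{i_t}-\delta_{i_{t+1}}<(1/k)^{\sp}N$ drawn at any $x$ in the common interval connects $i_t$ to $i_{t+1}$ without meeting any other item, so $(v_{i_t},v_{i_{t+1}})\in A$; and as $i_t,i_{t+1}\in\OPT''_L$ it is an arc of $G'=G[V']$.

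Consequently $v_{i_1},\dots,v_{i_K}$ is a path of distinct vertices in $G'$, hence it lies inside a single connected component of $G'$, which has at most $c'=O(1/\epsilon^2)$ vertices by Lemma~\ref{lem:apply-separator}; thus $K\le c'$. Since $i_K$ is within distance $(1/k)^{\sp}N$ of the top edge by construction and $G'$ is a subgraph of $G$, the lemma follows. I expect the only genuine difficulty to be reconciling the two relevant packings --- the original packing of $\OPT'_L$, with respect to which $G$ and $G'$ are defined, and the pushed-up packing of $\OPT''_L$, in which the chain is produced. The proof hinges on order preservation together with the observation that the hypothesis ``$i_1$ still meets the bottom strip after pushing up'' forces every $\delta_{i_t}$, and hence every gap along the chain, to be below $(1/k)^{\sp}N$; after that, the fact that all items of $\OPT'_L$ are large does the rest.
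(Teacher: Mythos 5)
Your proof is correct, but it takes a genuinely different route from the paper's. The paper argues by contradiction via reachability: it takes the set $C$ of vertices reachable from $v_{i_1}$ in $G'$, notes $|C|\le c'$ by Lemma~\ref{lem:apply-separator}, and claims that if no item of $C$ were within $(1/k)^{\sp}N$ of the top edge, then all items of $C$ could be pushed up by $(1/k)^{\sp}N$ (processing vertices without outgoing arcs first, then their in-neighbours, and so on), contradicting that the packing was pushed up as much as possible; any $G'$-path from $v_{i_1}$ to the near-top vertex then has length at most $c'$. You instead build the witness chain explicitly in the pushed-up packing (each $i_t$ not near the top is touched from above by some $i_{t+1}$, since otherwise it could still move up) and then verify arc membership in $G$ --- which is defined on the \emph{original} positions --- via the bookkeeping $\delta_{i_{t+1}}\le\delta_{i_t}\le\delta_{i_1}<(1/k)^{\sp}N$, so the original gaps are below $(1/k)^{\sp}N$ and, all items of $\OPT'_L$ being large, the gap region is free of other items; since all chain items lie in $V'$, the path sits inside one component of $G'=G[V']$, giving $K\le c'$. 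What each buys: the paper's argument is shorter, whereas yours is more explicit about the mismatch between the packing in which $G$ is defined and the pushed-up packing in which the hypothesis and conclusion live (a point the paper glosses over, since an arc of $G$ need not reflect who blocks whom after pushing up), and it has the pleasant side effect that consecutive chain items touch in the current packing, which is exactly what the subsequent deletion-rectangle construction wants. Both arguments rely on the same natural reading of ``push up as much as possible'' (no single item can move further, and the motion preserves the vertical order of horizontally overlapping items), so this is not a gap specific to your write-up.
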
 
\begin{proof} 
Let $C$ denote all vertices $v$
in $G'$ such that there is a directed path from $v_{i_{1}}$ to $v$
in $G'$. The vertices in $C$ are contained in the connected component
$C'$ in $G'$ that contains $v_{i_{1}}$. Note that $|C|\le|C'|\le c'$.
We claim that $C$ must contain a vertex $v_{j}$ whose corresponding
item $j$ is closer than $(1/k)^{\sp}N$ to the top edge of the knapsack.
Otherwise, we would have been able to push up all items corresponding
to vertices in $C$ by $(1/k)^{\sp}N$ units: first we could have pushed
up all items such that their corresponding vertices have no outgoing
arc, then 
all items such that their vertices have outgoing
arcs pointing at the former set of vertices, and so on. 
By definition of
$C$, there must be a path connecting $v_{i_{1}}$ with $v_{j}$.
This path $v_{i_{1}},v_{i_{2}},\dots,v_{i_{K}}=v_{j}$ contains only vertices in $C$ and hence its length is bounded
by $c'$. The claim follows. \end{proof} 

Our goal is now to remove the items $i_{1},\dots,i_{K}$ due to Lemma~\ref{lem:separating-path}
and $O(K)=O(1/\epsilon^{2})$ more large items from $\OPT''_L$. Since
we can assume that $k\ge\Omega(1/\epsilon^{3})$ this will lose only
a factor of $1+O(\epsilon)$ in the objective.
To this end we define $K+1$ \emph{deletion rectangles}, see Figure~\ref{fig:2DKP}.
We place one such rectangle $R_{\ell}$ between any two consecutive
items $i_{\ell},i_{\ell+1}$. The height of $R_{\ell}$ equals the
vertical distance between $i_{\ell}$ and $i_{\ell+1}$ (at most $(1/k)^{\sp}N$)
and the width of $R_{\ell}$ equals $(1/k)^{\sp}N$. Since $v_{i_{\ell}},v_{i_{\ell+1}}$
are connected by an arc in $G'$, we can draw a vertical line segment
connecting $i_{\ell}$ with $i_{\ell+1}$. We place $R_{\ell}$ such
that it is intersected by this line segment. Note that for the horizontal
position of $R_{\ell}$ there are still several possibilities and
we choose one arbitrarily.
Finally, we place a special deletion rectangle between the item $i_{K}$
and the top edge of the knapsack and another special deletion rectangle
between the item $i_{1}$ and the bottom edge of the knapsack. The
heights of these rectangles equal the distance of $i_{1}$ and $i_{K}$
with the bottom and top edge of the knapsack, resp. (which is at most
$(1/k)^{B}N$), and their widths equal $(1/k)^{\sp}N$. They are placed
such that they touch the bottom edge of $i_{1}$ and the top edge
of $i_{K}$, resp. 

\begin{lemma}\label{lem:deletion-rectangles} Each deletion rectangle can
intersect at most $4$ large items in its interior. Hence, there
can be only $O(K)\le O(c')=O(1/\epsilon^{2})$ large items intersecting
a deletion rectangle in their interior. \end{lemma}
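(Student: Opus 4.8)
The plan is to use the geometric structure of the deletion rectangles together with the definition of ``large'' items to bound how many large items can have a point in common with the interior of a single deletion rectangle $R_\ell$. First I would recall the relevant dimensions: by construction each $R_\ell$ has width exactly $(1/k)^{\sp}N$ and height at most $(1/k)^{\sp}N$, so $R_\ell$ fits inside an axis-aligned square of side $(1/k)^{\sp}N$. On the other hand, by Lemma~\ref{lem:shifting} every large item $i\in L$ has both $w_i\ge (1/k)^{\sp}N$ and $h_i\ge (1/k)^{\sp}N$. So each large item is at least as wide and at least as tall as $R_\ell$ in each coordinate.

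The key step is then a simple corner-counting argument, exactly the kind of observation already used implicitly for the grid in Section~\ref{subsec:grid}: if a large item $i$ intersects the interior of $R_\ell$ but does not contain any corner of $R_\ell$, then (since $i$ is at least as wide and at least as tall as $R_\ell$) $i$ must cross $R_\ell$ completely in at least one of the two axis directions, i.e.\ $i$ contains a full vertical or horizontal ``slab'' of $R_\ell$. I would argue that the items of $\OPT''_L$ being pairwise disjoint, at most one large item can cross $R_\ell$ fully from left to right, at most one can cross it fully from bottom to top (two such would overlap inside $R_\ell$), so the items intersecting the interior of $R_\ell$ that do not contain a corner of $R_\ell$ number at most $2$; and each of the four corners of $R_\ell$ lies in at most one large item of the disjoint family $\OPT''_L$, contributing at most $4$ more. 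This already gives the bound; in fact a slightly more careful version (each large item intersecting $R_\ell$'s interior must contain at least one of the four corners, because an item narrower-or-equal in neither dimension than $R_\ell$ cannot sit strictly inside it and two ``slab-crossers'' would collide) tightens it to the claimed $4$. The second sentence of the lemma is then immediate: there are $K+1=O(K)=O(c')=O(1/\epsilon^2)$ deletion rectangles, each hit by at most a constant number of large items in its interior, so at most $O(1/\epsilon^2)$ large items of $\OPT''_L$ are intersected in their interior by some deletion rectangle.

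The main obstacle I anticipate is getting the constant exactly right — in particular being careful about the distinction between a large item merely \emph{intersecting} $R_\ell$ versus intersecting its \emph{open interior}, and handling the two special deletion rectangles (the one between $i_K$ and the top edge and the one between $i_1$ and the bottom edge), whose heights are bounded by $(1/k)^{\sp}N$ just like the others, so the same argument applies verbatim. One should also double-check that ``large'' really forces $w_i,h_i\ge (1/k)^{\sp}N$ and not $(1/k)^{\sp+1}N$ or similar, since the whole argument hinges on the item being at least as large as $R_\ell$ in each coordinate; Lemma~\ref{lem:shifting} gives exactly $h_i\ge(1/k)^{\sp}N$ and hence $w_i\ge h_i\ge(1/k)^{\sp}N$ using the normalization $w_i\ge h_i$, so this is fine. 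Everything else is routine planar/geometric bookkeeping and the pairwise-disjointness of $\OPT''_L$.
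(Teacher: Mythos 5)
Your proposal is correct and follows essentially the same argument as the paper: since each deletion rectangle is at most $(1/k)^{\sp}N$ in both dimensions while every large item is at least $(1/k)^{\sp}N$ in both, any large item meeting the rectangle's interior must contain one of its four corners, and pairwise disjointness of $\OPT''_L$ caps this at $4$ per deletion rectangle and $O(K)=O(1/\epsilon^{2})$ overall. Your intermediate ``slab-crosser'' remark (that at most one item can cross fully left-to-right because two such would overlap) is not quite justified for two same-direction crossers, but it is harmless since the corner-containment argument you then give---which is exactly the paper's---already yields the bound.
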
 

Observe that the deletion rectangles and the items in $\{i_{1},\dots,i_{K}\}$
separate the knapsack into a left and a right part with items $\OPT''_{left}$
and $\OPT''_{right}$, resp. We delete all items in $i_{1},\dots,i_{K}$
and all items intersecting the interior of a deletion rectangle. Each
deletion rectangle and each item in $\{i_{1},\dots,i_{K}\}$ has a width
of at least $(1/k)^{\sp}N$. Thus, we can move all items in $\OPT''_{left}$
simultaneously by $(1/k)^{\sp}N$ units to the right. After this,
no large item intersects the area $(0,(1/k)^{\sp}N)\times (0,N)$.
We rotate the resulting solution by 90 degrees, hence getting an empty horizontal strip  
$(0,N)\times (0,(1/k)^{\sp}N)$. The total height of items in $OPT'_T$ is at most $k\cdot(1/k)^{\sp+2}N\le(1/k)^{\sp+1}N$. Therefore, the items in $OPT'_T$ can be stacked (one on top of the other) inside a horizontal strip of height $(1/k)^{\sp+1}N$ that can be placed right below the rectangles in $\OPT''_{left}\cup \OPT''_{right}$. This leaves an empty horizontal strip of height
$(1/k)^{\sp}N-(1/k)^{\sp+1}N\ge(1/k)^{O(1/\epsilon)}N$ at the bottom of the knapsack.  
This completes the proof of Lemma~\ref{lem:free-space}. 

\subsection{FPT-algorithm with resource augmentation\label{subsec:FPT-RA}}

We 
now compute a packing that contains as many items as the solution
due to Lemma~\ref{lem:free-space}. However, it might use the space
of the entire knapsack. In particular, we use the free space in the
knapsack in the latter solution in order to round the sizes of the
items. In the following lemma the reader may think of $k'=(1-\epsilon)k$
and $\tilde{k}=k^{O(1/\epsilon)}$.

\begin{lemma}
\label{lem:resource-augmentation}Let $k',\tilde{k}\in\mathbb{N}$.
There is an algorithm for \TDKR with a running time of $(\tilde{k}k')^{O(k')}n^{O(1)}$ 
that computes
a solution of size $k'$ or asserts that there is no solution of size $k'$ fitting
into a restricted knapsack $[0,N]\times [0,(1-1/\tilde{k})N]$. Also,
in time $n^{O(1)}$ we can compute a set of size $O(\tilde{k}(k')^{2})$ that
contains a solution of size $k'$ if there is such a solution that
fits into the latter knapsack.
\end{lemma}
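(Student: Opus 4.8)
The plan is to set up a parameterized exhaustive search over a discretized version of the instance. First I would use the slack of width $N/\tilde{k}$ in the $x$-direction to round the items. Since we are looking for a solution of size only $k'$, there are at most $k'$ items in the sought packing, so at most $2k'$ distinct $x$-coordinates appear among their left and right edges. I would round up every width $w_i$ to the nearest multiple of $N/(\tilde{k}k')$ (rotating as needed so that we always consider both orientations of each item). A packing of $k'$ rounded items that fits into $[0,N]\times[0,(1-1/\tilde{k})N]$ can be transformed into a packing of the original items into the full knapsack $[0,N]\times[0,N]$: process the items in any left-to-right order and shift each one leftward to close the gap created by rounding; since at most $k'$ items stack horizontally, the total width blow-up is at most $k'\cdot N/(\tilde{k}k')=N/\tilde{k}$, which is exactly the amount of horizontal room we have. (One must be slightly careful: the shifting argument works per ``column'' of horizontally overlapping items, and since any axis-parallel vertical line meets at most $k'$ of the $k'$ chosen items, the claim goes through.) After rounding, every relevant $x$-coordinate lies in the set $\{j\cdot N/(\tilde{k}k') : 0\le j\le \tilde{k}k'\}$, a set of size $O(\tilde{k}k')$.

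Next I would guess the combinatorial structure of an optimal rounded packing. Enumerate all subsets $I'\subseteq I$ with $|I'|=k'$; there are $n^{O(k')}$ of these, but we can do better for the kernelization part (see below). For a fixed candidate set $I'$, and for each of the $2^{k'}$ choices of orientations, guess for each item its left $x$-coordinate from the $O(\tilde{k}k')$ discrete choices; this gives $(\tilde{k}k')^{O(k')}$ possibilities. Having fixed the horizontal placement of all $k'$ items, deciding whether a feasible vertical placement exists reduces to a one-dimensional (strip-packing-like) feasibility check: the items whose horizontal extents overlap a given vertical slab must be stackable within height $(1-1/\tilde{k})N$, and these constraints can be checked by a simple greedy bottom-up sweep, or by observing that the interval structure induced by the horizontal coordinates lets us verify consistency in polynomial time (e.g., by testing, for every maximal vertical slab between consecutive discrete $x$-coordinates, that the total height of items spanning it is at most $(1-1/\tilde{k})N$, which is in fact sufficient since within a slab the items are simply stacked). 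If some guess yields a feasible vertical assignment, output the corresponding (unrounded, full-knapsack) packing; otherwise, having exhausted all guesses, assert that no solution of size $k'$ fits into $[0,N]\times[0,(1-1/\tilde{k})N]$. The running time is $n^{O(k')}\cdot 2^{k'}\cdot(\tilde{k}k')^{O(k')}\cdot n^{O(1)}=(\tilde{k}k')^{O(k')}n^{O(1)}$ as claimed.

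For the second part — computing in time $n^{O(1)}$ a set of size $O(\tilde{k}(k')^2)$ containing a size-$k'$ solution (if one fits into the restricted knapsack) — I would avoid enumerating $n^{O(k')}$ subsets. Instead, observe that after rounding, each item, once placed, occupies a rounded width that is one of at most $\tilde{k}k'$ values, and its left coordinate is one of $O(\tilde{k}k')$ values; so each item in a feasible rounded packing is characterized (for the purpose of occupying horizontal space) by a pair in a set of size $O(\tilde{k}k')\times O(\tilde{k}k')$, but more economically by its rounded width together with how much vertical ``room'' it needs. The key point is that for each combinatorial ``slot'' — a rounded width class together with a discrete horizontal position — it suffices to retain the $k'$ items of smallest height in that class (since in any feasible packing at most $k'$ items are used in total, replacing a used item by a shorter one in the same width class and horizontal slot preserves feasibility). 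There are $O(\tilde{k}k')\cdot O(\tilde{k}k') = O((\tilde{k}k')^2)$ slots... but we only ever need $k'$ items, so a cleaner count: for each of the $O(\tilde{k}k')$ rounded width classes keep the $k'$ shortest items, giving a set of size $O(\tilde{k}(k')^2)$; any feasible packing into the restricted knapsack can be rewritten to use only retained items by swapping each used item for a shortest unused one of the same width class. This set can clearly be computed in $n^{O(1)}$ time by sorting within each class.

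**Main obstacle.** The delicate point I expect to fight with is making the rounding-and-shifting argument rigorous in two dimensions: rounding widths upward and then shifting left must be shown not to create overlaps, which requires arguing about the ``column structure'' of the packing (the partial order of horizontal overlaps) and bounding the cumulative shift along any chain by $k'\cdot N/(\tilde{k}k')$. The second subtlety is the vertical-feasibility test: verifying that a fixed set of horizontal placements extends to a valid 2D packing is itself a packing question, and one must check that the slab-by-slab total-height condition is not just necessary but sufficient — this holds because, with horizontal positions fixed, the items within the region between two consecutive discrete $x$-coordinates can simply be stacked, and consistency across adjacent slabs is automatic since an item spanning several slabs contributes the same interval of heights to each. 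Getting these two structural lemmas stated cleanly is where the real work lies; the enumeration and the kernel-size bookkeeping are then routine.
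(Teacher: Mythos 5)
Your overall route is the paper's: round one dimension of each (possibly rotated) item up to a multiple of $N/(\tilde{k}k')$, absorb the cumulative increase of at most $k'\cdot N/(\tilde{k}k')=N/\tilde{k}$ using the empty strip, and use the exchange argument ``keep only the $k'$ thinnest/shortest items per rounded class'' to get the $O(\tilde{k}(k')^{2})$-size set. (The axis mix-up --- the slack in the statement is vertical while you round widths --- and your reversed phrasing of the rounding step are cosmetic; your shifting argument is the intended one.) However, two steps as written do not work. First, the running time: you enumerate all $k'$-subsets of the \emph{whole} input, of which there are $n^{O(k')}$, and then assert $n^{O(k')}\cdot 2^{k'}\cdot(\tilde{k}k')^{O(k')}\cdot n^{O(1)}=(\tilde{k}k')^{O(k')}n^{O(1)}$; this equality is false, and $n^{O(k')}$ is exactly the dependence the lemma must avoid (plugged into Theorem~\ref{thr:tdkr:pas} it would give $n^{\Omega(k)}$ rather than FPT time in $k$). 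The paper first computes the reduced set $\bar{I}$ of size $O(\tilde{k}(k')^{2})$ --- essentially your second paragraph --- and only then enumerates subsets of $\bar{I}$, which is where $(\tilde{k}k')^{O(k')}$ comes from; your own kernel supplies the fix, but your algorithmic part does not use it and hence does not meet the claimed bound.

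Second, the vertical-feasibility test is unsound as justified. With horizontal positions fixed, deciding whether consistent vertical positions exist is a dynamic-storage-allocation-type problem: the slab-by-slab load condition is necessary but \emph{not} sufficient, and the claim that ``consistency across adjacent slabs is automatic'' is precisely what fails --- an item spanning several slabs must be assigned one common vertical interval, and stacking each slab independently does not produce one; a single greedy bottom-up sweep in one arbitrary order is likewise not complete. So your check could accept a horizontal placement that admits no packing, and then there is no packing to output. This is repairable within the time budget (e.g., enumerate all $(k')!$ insertion orders for the bottom-up greedy, or guess for each item which item or the knapsack edge supports it, and verify), which is the kind of ``auxiliary information'' the paper's proof alludes to, but the specific sufficiency claim you rely on is false.
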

Note that Lemma~\ref{lem:resource-augmentation} yields an FPT 
algorithm if we are allowed to increase the size of the knapsack by
a factor $1+O(1/\tilde{k})$ where $\tilde{k}$ is a second parameter.

In the remainder of this section, we prove Lemma~\ref{lem:resource-augmentation}
and we do not differentiate between large and thin items anymore.
Assume that there exists a solution $\OPT''$ of size $k'$ that leaves
the area $[0,N]\times[0,N/\tilde{k}]$ of the knapsack empty. We want
to compute a solution of size $k'$. We use the empty space 
in order to round the heights of the items in the packing of $\OPT''$ to integral
multiples of $N/(k'\tilde{k})$. Note that in $\OPT''$ an item $i$
might be rotated. Thus, depending on this we actually want to round
its height $h_{i}$ or its width $w_{i}$. To this end, we define
rounded heights and widths by $\hat{h}_{i}:=\left\lceil \frac{h_{i}}{N/(k'\tilde{k})}\right\rceil N/(k'\tilde{k})$
and $\hat{w}_{i}:=\left\lceil \frac{h_{i}}{N/(k'\tilde{k})}\right\rceil N/(k'\tilde{k})$
for each item $i$. 
\begin{lemma} \label{lem:rounded-packing}There
exists a feasible packing for all items in $\OPT''$ even if for each
rotated item $i$ we increase its width $w_{i}$ to $\hat{w}_{i}$
and for each non-rotated item $i'\in\OPT''$ we increase its height
$h_{i'}$ to $\hat{h}_{i'}$. \end{lemma}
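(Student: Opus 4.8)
The plan is to start from the given feasible packing of $\OPT''$ inside the knapsack $[0,N]\times[0,N]$ that leaves the horizontal strip $[0,N]\times[0,N/\tilde k]$ empty, and to produce a new feasible packing in which every rotated item has its width rounded up from $w_i$ to $\hat w_i$ and every non-rotated item has its height rounded up from $h_i$ to $\hat h_i$. The key observation is that each individual rounding increases the relevant dimension by strictly less than $N/(k'\tilde k)$, so the \emph{total} vertical blow-up contributed by the at most $k'$ items is strictly less than $k'\cdot N/(k'\tilde k)=N/\tilde k$, which is exactly the height of the free strip we have available. So morally we have enough slack; the work is to redistribute it correctly.

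First I would sort the items of $\OPT''$ by the $y$-coordinate of their bottom edges in the packing of $\OPT''$, breaking ties arbitrarily, and process them bottom-up. I would keep all $x$-coordinates (horizontal positions) of the items unchanged, and only shift items upward. Concretely, for the $j$-th item in this bottom-up order I would shift it up by $j\cdot N/(k'\tilde k)$ (a ``cumulative'' shift). After this shifting step, replace each item's relevant dimension by its rounded-up value: a non-rotated item keeps its width and gets height $\hat h_i$, a rotated item keeps its height (which is its packed vertical extent, since after rotation the roles of $w$ and $h$ are swapped) and gets its rounded horizontal extent $\hat w_i$. The point of the cumulative shift is that between the top of any item and the bottom of any item originally lying (weakly) above it we have now inserted vertical clearance of at least $N/(k'\tilde k)$, which absorbs the at-most-$N/(k'\tilde k)$ growth of the lower item. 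Since horizontal positions are untouched and horizontal extents only grow, two items that were horizontally separated in $\OPT''$ stay horizontally separated only if their $x$-intervals did not already touch — so I would need to be slightly careful: the correct invariant is that two items overlap after the transformation only if their original $x$-intervals overlapped \emph{and} their original $y$-intervals were ``close'', and in that case the vertical shift separates them.

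The main obstacle is precisely controlling horizontal overlaps: rounding widths up can only be harmful in the $x$-direction, and a pure vertical shift does not help two items whose $x$-projections genuinely overlap in $\OPT''$. I would handle this by arguing that in $\OPT''$ such a pair of items must be vertically disjoint (they don't overlap), hence one lies strictly below the other, hence the bottom-up cumulative shift described above inserts at least $N/(k'\tilde k)$ of vertical space between them, which dominates the vertical growth of the lower one; therefore after rounding they are still vertically disjoint and the packing is feasible. Finally I would check that nothing is pushed out of the knapsack: the topmost item's bottom edge is raised by at most $k'\cdot N/(k'\tilde k)=N/\tilde k$, and since $\OPT''$ fit inside $[0,N]\times[0,N]$ while leaving $[0,N]\times[0,N/\tilde k]$ empty, equivalently $\OPT''$ fits inside $[0,N]\times[N/\tilde k,N]$; after shifting everything up by at most $N/\tilde k$ and growing heights by a total of at most $N/\tilde k$ the packing still fits inside $[0,N]\times[0,N]$ (a more careful bookkeeping, tracking the shift and the growth simultaneously rather than separately, gives the needed $\le N$ bound). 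This completes the construction of the rounded packing.
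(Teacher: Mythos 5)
Your overall idea---each rounding adds at most $N/(k'\tilde{k})$ of slack per item, hence at most $N/\tilde{k}$ in total, which the empty strip absorbs via an ordered shifting of the items---is exactly the idea behind the paper's proof (the paper shifts items \emph{down} into the empty strip and appends the slack beneath each item). However, your realization of the rounding for rotated items has a genuine gap. A rotated item $i$ occupies horizontal extent $h_i$ and vertical extent $w_i$ in the packing, so increasing $w_i$ to $\hat{w}_i$ grows its \emph{vertical} extent; this is why the paper appends below a rotated item a rectangle of width $h_i$ and height $\hat{w}_i-w_i$, and why Lemma~\ref{lem:k-items-per-set} later exchanges a rotated item for one with the same $\hat{w}$ and smaller $h$. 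You instead keep $h_i$ as the packed vertical extent and grow the \emph{horizontal} extent to $\hat{w}_i$. Under that reading the lemma is not even provable from a horizontal free strip (two rotated items standing side by side and spanning the full width $N$ leave no horizontal room at all), and your repair of the resulting ``horizontal overlap'' obstacle is incorrect: it is not true that two items can overlap after the transformation only if their original $x$-intervals overlapped, since two horizontally adjacent items with overlapping $y$-projections can be made to overlap by horizontal growth, and a vertical shift of at most $N/\tilde{k}$ does not separate them. With the correct reading all growth is vertical, horizontal positions and extents never change, and this self-inflicted obstacle disappears.

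Second, your shift goes in the wrong direction. The empty strip $[0,N]\times[0,N/\tilde{k}]$ is at the \emph{bottom}, i.e.\ $\OPT''$ occupies $[0,N]\times[N/\tilde{k},N]$ and may be flush with the top edge $y=N$; shifting the $j$-th item up by $j\cdot N/(k'\tilde{k})$ therefore pushes top items out of the knapsack, and no bookkeeping that tracks shift and growth ``simultaneously'' can save an upward move of an item whose top already sits at $N$. The fix is either to move items downward into the free strip and append the slack below each item (the paper's argument: each item moves down by at most $(k'-1)N/(k'\tilde{k})$ and its bottom drops by at most another $N/(k'\tilde{k})$, so all bottoms stay at height at least $N/\tilde{k}-N/\tilde{k}=0$), or to first translate the entire packing down by $N/\tilde{k}$ and only then run your upward cumulative shift. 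With these two corrections your argument coincides with the paper's proof.
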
 

To visualize the packing due to Lemma~\ref{lem:rounded-packing}
one might imagine a container of height $\hat{h}_{i}$ and width $w_{i}$
for each non-rotated item $i$ and a container of height $h_{i'}$
and width $\hat{w}_{i'}$ for each rotated item $i'$. Next, we group
the items according to their values $\hat{h}_{i}$ and $\hat{w}_{i}$.
We define $I_{h}^{(j)}:=\{i\in I\mid\hat{h}_{i}=jN/(k'\tilde{k})\}$
and $I_{w}^{(j)}:=\{i\in I\mid\hat{w}_{i}=jN/(k'\tilde{k})\}$ for
each $j\in\{1,\dots,k'\tilde{k}\}$. The crucial observation is now
that from each set $I_{h}^{(j)}$ it suffices to consider only the
$k'$ items with smallest width. If $\OPT''$ uses an item from $I_{h}^{(j)}$
with larger width then we can replace it by one of the $k'$ thinner items that
is not contained in $\OPT''$. A symmetric statement holds for the
sets $I_{v}^{(j)}$. 

\begin{lemma} \label{lem:k-items-per-set}We can assume that from each set
$I_{h}^{(j)}$ the solution $\OPT''$ contains only items among the
$k'$ items in $I_{h}^{(j)}$ with smallest width. Similarly, from
each set $I_{w}^{(j)}$ the solution $\OPT''$ contains only items
among the $k'$ items in $I_{w}^{(j)}$ with smallest height. 
\end{lemma}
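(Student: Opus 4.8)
The plan is to prove Lemma~\ref{lem:k-items-per-set} by an exchange argument. Fix an index $j\in\{1,\dots,k'\tilde{k}\}$ and consider the set $I_{h}^{(j)}$ of all input items whose rounded height equals $jN/(k'\tilde{k})$. Let $a_1,\dots,a_{|I_h^{(j)}|}$ be an enumeration of $I_h^{(j)}$ in nondecreasing order of width, so that $a_1,\dots,a_{k'}$ are the $k'$ narrowest items of the group (if $|I_h^{(j)}|\le k'$ there is nothing to prove for that group). Suppose $\OPT''$ contains some item $a_t$ with $t>k'$. Since $\OPT''$ has size $k'$, it uses at most $k'$ items in total and hence at most $k'$ items from the whole group $I_h^{(j)}$; therefore among $a_1,\dots,a_{k'}$ there is at least one item $a_s$ not used by $\OPT''$. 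I would replace $a_t$ by $a_s$ in the packing, keeping $a_s$ in the same orientation (non-rotated, since the grouping by $\hat h$ refers to non-rotated items — a symmetric remark applies to $I_w^{(j)}$ and rotated items) and placing it in exactly the slot vacated by $a_t$.

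The key step is to argue feasibility of this swap, and here I would invoke Lemma~\ref{lem:rounded-packing}. That lemma gives a packing in which every non-rotated item $i$ of $\OPT''$ occupies an axis-parallel container of height $\hat h_i$ and width $w_i$. The container of $a_t$ has height $\hat h_{a_t}=jN/(k'\tilde{k})$, which is exactly $\hat h_{a_s}$ since $a_s\in I_h^{(j)}$; and its width is $w_{a_t}\ge w_{a_s}$ because $a_s$ comes earlier in the width ordering. Hence the (smaller) container of $a_s$ fits inside the container of $a_t$, so replacing $a_t$ by $a_s$ yields another feasible packing of the same cardinality in which the number of ``high-index'' items used from $I_h^{(j)}$ has strictly decreased. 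Iterating this exchange over all groups and all offending items (the process terminates, since each step decreases the total number of used items of index exceeding $k'$, a nonnegative integer) produces a solution, still of size $k'$ and still admitting the rounded packing, that from each $I_h^{(j)}$ uses only items among the $k'$ narrowest. The case of $I_w^{(j)}$ is entirely symmetric: there one works with the rotated items of $\OPT''$, whose containers have width $\hat w_i=jN/(k'\tilde k)$ and height $h_i$, and one exchanges an item of large height for an unused item of smaller height among the $k'$ shortest of the group.

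One subtlety worth a sentence in the write-up is that a single item could a priori belong to (or be relevant for) both an $I_h$-group and an $I_w$-group depending on its orientation in $\OPT''$; but in a fixed packing each used item has a fixed orientation, so it is governed by exactly one of the two families, and the two exchange processes do not interfere — after performing all $I_h$-swaps we perform all $I_w$-swaps on the resulting packing. The main obstacle, such as it is, is purely bookkeeping: making sure that ``smallest width'' ties are broken consistently so the ordering is well-defined, and that after a swap the replaced item's rounded height (resp.\ width) is genuinely unchanged, which holds by definition of the group. Nothing here requires re-deriving the geometry of Lemma~\ref{lem:rounded-packing}; the container picture it provides is exactly what makes the one-for-one substitution legal.
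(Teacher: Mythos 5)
Your proof is correct and follows essentially the same route as the paper's: an exchange argument in the rounded packing of Lemma~\ref{lem:rounded-packing}, replacing any item of $I_h^{(j)}$ (resp.\ $I_w^{(j)}$) used by $\OPT''$ but not among the $k'$ narrowest (resp.\ shortest) by an unused one of those $k'$, which fits because the rounded height (resp.\ width) is identical and the other dimension only shrinks. Your added remarks on termination and on orientations not interfering are fine but not needed beyond what the paper states.
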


We eliminate from each set $L_{h}^{(j)}$ and $L_{w}^{(j)}$ the items
that are not among the $k'$ items with smallest width and height,
resp. At most $2k'\cdot k'\tilde{k}=O(\tilde{k}(k')^{2})$
items remain, denote them by $\bar{I}$. 
Then, in time $(\tilde{k}k')^{O(k')}$ we can solve the remaining
problem by completely enumerating over all subsets of $\bar{I}$ with
at most $k'$ elements. For each enumerated set we 
check 
within the given time bounds 
whether its items can be packed into the knapsack (possibly
via rotating some of them)
by guessing sufficient auxiliary information.
Therefore, if a solution of size $k'$ for a knapsack
of width $N$ and height $(1-1/\tilde{k})N$ exists, then we will
find a solution of size $k'$ that fits into a knapsack of width and
height $N$.

Now the proof of Theorem \ref{thr:tdkr:pas} 
follows by using
Lemma~\ref{lem:free-space} and then applying Lemma~\ref{lem:resource-augmentation}
with $k'=(1-\epsilon)k$ and $\tilde{k}=k^{O(1/\epsilon)}$. The set $\bar{I}$ is the claimed set (which intuitively forms the approximative kernel), 
we compute a solution of size at least $(1-\eps)k\ge k/(1+2\eps)$
and we can redefine $\eps$ appropriately.

\section{Hardness of Geometric Knapsack}\label{section:hardness}

We show that \TDK and \TDKR are both \W{1}-hard for parameter $k$ by reducing from a variant of \SSum. Recall that in \SSum we are given $m$ positive integers $x_1,\ldots,x_{m}$ as well as integers $t$ and $k$, and have to determine whether some $k$-tuple of the numbers sums to $t$; this is \W{1}-hard with respect to $k$~\cite{DowneyF95}. In the variant \MSS it is allowed to choose numbers more than once. It is easy to verify that the proof for \W{1}-hardness of \SSum due to Downey and Fellows~\cite{DowneyF95} extends also to \MSS. 
(See Lemma~\ref{lemma:mssk:hardness} in Section~\ref{appendix:hardness:proofs}.)
In our reduction to \TDKR we prove that rotations are not required for optimal solutions, making \W{1}-hardness of \TDK a free consequence.

\begin{proof}[Proof sketch for Theorem~\ref{thm:w1hard}.]
We give a polynomial-time parameterized reduction from \MSS to \TDKR with output parameter $k'=O(k^2)$. 
This establishes \W{1}-hardness of \TDKR.

Observe that, for any packing of items into the knapsack, there is an upper bound of $N$ on the total width of items that intersect any horizontal line through the knapsack, and similarly an upper bound of $N$ for the total height of items along any vertical line. We will let the dimensions of some items depend on numbers $x_i$ from the input instance $(x_1,\ldots,x_{m},t,k)$ of \MSS such that, using these upper bound inequalities, a correct packing certifies that $y_1+\ldots+y_k=t$ for some $k$ of the numbers. The key difficulty is that there is a lot of freedom in the choice of which items to pack and where in case of a no instance.

To deal with this, the items corresponding to numbers $x_i$ from the input are all \emph{almost} squares and their dimensions are incomparable. Concretely, an item corresponding to some number $x_i$ has height $L+S+x_i$ and width $L+S+2t-x_i$; we call such an item a \emph{tile}. (The exact values of $L$ and $S$ are immaterial here, but $L \gg S \gg t > x_i$ holds.) Thus, when using, e.g., a tile of smaller width (i.e., smaller value of $x_i$) it will occupy ``more height'' in the packing. The knapsack is only slightly larger than a $k$ by $k$ grid of such tiles, implying that there is little freedom for the placement. Let us also assume for the moment, that no rotations are used.

Accordingly, we can specify $k$ vertical lines that are guaranteed to intersect all tiles of any packing that uses $k^2$ tiles, by using pairwise distance $L-1$ between them. Moreover, each line is intersecting exactly $k$ private tiles. The same holds for a similar set of $k$ horizontal lines. Together we get an upper bound of $N$ for the sum of the widths (heights) along any horizontal (vertical) line. Since the numbers $x_i$ occur negatively in widths, we effectively get lower bounds for them from the horizontal lines. When the sizes of these tiles (and the auxiliary items below) are appropriately chosen, it follows that all upper bound equalities must be tight. This in turn, due to the exact choice of $N$, implies that there are $k$ numbers $y_1,\ldots,y_k$ with sum equal to $t$. 

Unsurprisingly, using just the tiles we cannot guarantee that a packing exists when given a yes-instance. This can be fixed by adding a small number of flat/thin items that can be inserted between the tiles (see Figure~\ref{figure:hardness}, but note that it does not match the size ratios from this proof); these have dimension $L\times S$ or $S\times L$. Because one dimension of these items is large (namely $L$) they must be intersected by the above horizontal or vertical lines. Thus, they can be proved to enter the above inequalities in a uniform way, so that the proof idea goes through.

Finally, let us address the question of why we can assume that there are no rotations. This is achieved by letting the width of any tile be larger than the height of any tile, and adding a final auxiliary item of width $N$ and small height, called the \emph{bar}. To get the desired number of items in a solution packing, it can be ensured that the bar must be used as no more than $k^2$ tiles can fit into $N\times N$ and there is a limited supply of flat/thin items. W.l.o.g., the bar is not rotated. It can then be checked that using at least one tile in its rotated form will violate one of the upper bounds for the height. This completes the proof sketch.
\end{proof}

\section{Open Problems}

This paper leaves several interesting open problems. A first obvious question is whether there exists a PAS also for $\TDK$ (i.e., in the case without rotations). We remark that the algorithm from Lemma~\ref{lem:resource-augmentation}
can be easily adapted to the case without rotations. 
Unfortunately, Lemma~\ref{lem:free-space} does not seem to generalize to the latter case. Indeed, there are instances in which we lose up to a factor of $2$
if we require a strip of width $\Omega_{\eps,k}(1)\cdot N$ to be emptied, see Figure~\ref{figure:counter-example}.
We also note that both our PASs work for the cardinality version of the problems: an extension to the weighted case is desirable. Unlike related results in the literature (where extension to the weighted case follows relatively easily from the cardinality case), this seems to pose several technical issues.

We remark that all the problems considered in this paper might admit a PTAS in the standard sense, which would be a strict improvement on our PASs. Indeed, the existence of a QPTAS for these problems \cite{adamaszek2013approximation,Adamaszek2015,chuzhoy2016approximating} suggests that such PTASs are likely to exist. However, finding those PTASs is a very well-known and long-standing problem in the area. We hope that our results can help to achieve this challenging goal.



\bibliography{references}

\begin{figure}[p]
\centering
\begin{tikzpicture}[scale=0.04,thick]
\draw (0,0) rectangle (130,130);

\draw (0,0) rectangle (39,31);
\draw (5,5) node {$1$};
\draw[fill=black!30!white] (39,5) rectangle (49,30);
\draw (49,0) rectangle (86,33);
\draw (54,5) node {$3$};
\draw[fill=black!30!white] (86,5) rectangle (96,30);
\draw (96,0) rectangle (130,36);
\draw (101,5) node {$6$};

\draw[fill=black!30!white] (5,31) rectangle (30,41);
\draw[fill=black!30!white] (52,33) rectangle (77,43);
\draw[fill=black!30!white] (100,36) rectangle (125,46);

\draw (0,41) rectangle (34,77);
\draw (5,46) node {$6$};
\draw[fill=black!30!white] (34,48) rectangle (44,73);
\draw (44,43) rectangle (83,74);
\draw (49,48) node {$1$};
\draw[fill=black!30!white] (83,48) rectangle (93,73);
\draw (93,46) rectangle (130,79);
\draw (98,51) node {$3$};

\draw[fill=black!30!white] (5,77) rectangle (30,87);
\draw[fill=black!30!white] (52,74) rectangle (77,84);
\draw[fill=black!30!white] (100,79) rectangle (125,89);

\draw (0,87) rectangle (37,120);
\draw (5,92) node {$3$};
\draw[fill=black!30!white] (37,90) rectangle (47,115);
\draw (47,84) rectangle (81,120);
\draw (52,89) node {$6$};
\draw[fill=black!30!white] (81,90) rectangle (91,115);
\draw (91,89) rectangle (130,120);
\draw (96,94) node {$1$};

\draw[fill=black!30!white] (0,120) rectangle (130,130);
\end{tikzpicture}
\caption{\label{figure:hardness} A sketch of the packing used in Theorem~\ref{thm:w1hard} for a solution with $k=3$ and $1+3+6=10$. Items corresponding to the same number have the same size. The figure is not to scale: The gray items should be much flatter and the clear ones should look like squares of almost identical size.}
\end{figure}
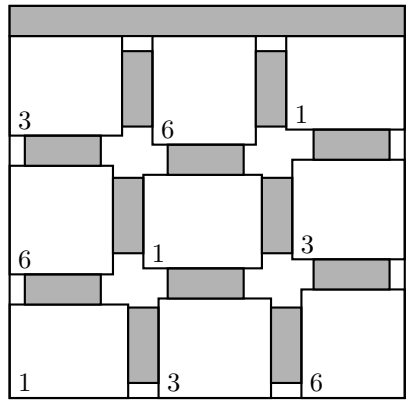

\begin{figure}[t]
\center\includegraphics[scale=0.8]{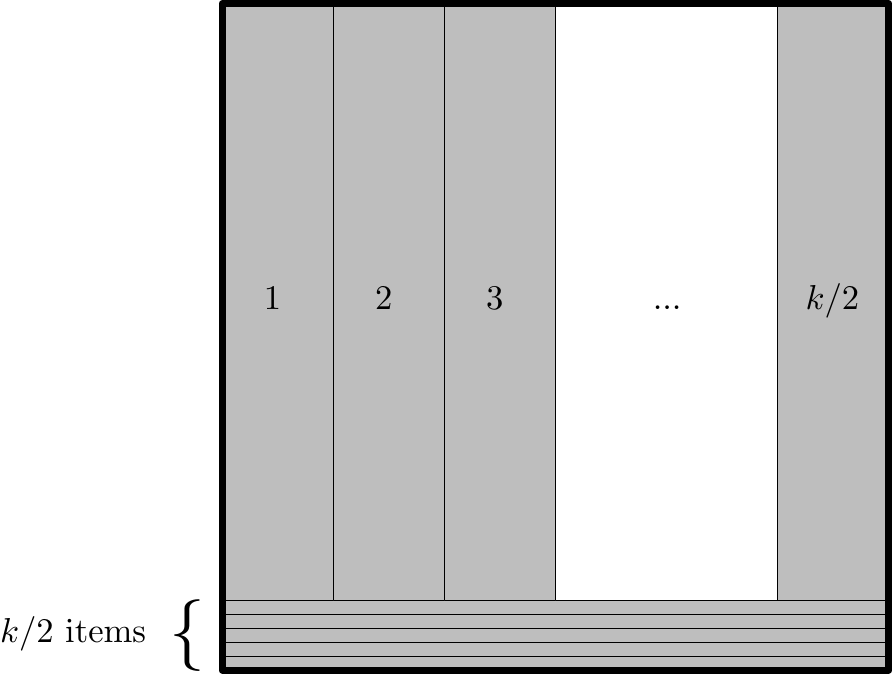}
\caption{\label{figure:counter-example}Example showing that Lemma~\ref{lem:free-space} cannot be generalized to \TDK (without rotations).
The total height of the $k/2$ items on the bottom of the knapsack can be made arbitrarily small. Suppose that we wanted to free 
up an area of height $f(k)\cdot N$ and width $N$ or of height $N$ and width $f(k)\cdot N$ (for some fixed function $f$). 
If the total height of the items on the bottom is smaller than $f(k)\cdot N$ then 
we would have to eliminate the $k/2$ items on the bottom or the $k/2$ items on top. Thus, we would lose a factor of $2>1+\eps$ in the
approximation ratio.} 
\end{figure}

\newpage
\appendix
\section{Omitted Proofs for Sections \ref{sec:misr} and \ref{sec:2dkr}\label{apx:omitted-proofs}}

\begin{proof}[Proof of Lemma~\ref{lem:G1-planar}]
We define a planar embedding for $G_{1}$ based on the position of
the rectangles in $\R^{*}$. Each vertex $v_{i}\in V_{1}$ is represented
by a rectangle $\bar{R}_{i}$ which is defined to be the convex hull
of all corners of cells of $\G$ that are contained in $R_{i}$. Let
$e=\{v_{i},v_{i'}\}\in E_{1}$ be an edge. Let $g$ be a grid cell
that $R_{i}$ and $R_{i'}$ both intersect. If $R_{i}$ and $R_{i'}$
intersect the same horizontal line $\ell^{H}\in\L_{H}$ then we represent
$e$ by a horizontal line segment $\ell'$ connecting $\bar{R}_{i}$
and $\bar{R}_{i'}$ such that $\ell^{H}$ contains $\ell'$. We do
a symmetric operation if $R_{i}$ and $R_{i'}$ intersect the same
vertical line $\ell^{V}\in\L_{V}$. If $R_{i}$ and $R_{i'}$ contain
the top left and the bottom right corner of $g$, resp., then
we represent $e$ by a diagonal line segment $\ell'$ connecting $\bar{R}_{i}$
and $\bar{R}_{i'}$ within $g$. We do this operation with each edge
$e\in E_{1}$. Note that in each grid cell we draw at most one diagonal
line segment. By construction, no two line segments intersect and
hence $G_{1}$ is planar.
\end{proof}

\begin{proof}[Proof of Lemma~\ref{lem:apply-separator}]
A result by Frederickson~\cite{federickson1987fast} states that for any integer $r$
any $n$-vertex planar graph can be divided into $O(n/r)$ regions
with no more than $r$ vertices each, and $O(n/\sqrt{r})$ boundary
vertices in total. We choose $r:=O(1/(\epsilon')^{2})$ and then we
have at most $\epsilon'\cdot n$ boundary vertices in total. We define
$V'$ to be the set of non-boundary vertices.
\end{proof}

\begin{proof}[Proof of Lemma~\ref{lem:G2-planar}]
We define a planar embedding for $G_{2}$. Let $w_{j}\in V_{2}$ and
assume that $w_{j}$ represents a connected component $C_{j}$ of
$G_{1}'$. We represent $C_{j}$ by drawing the rectangle $\bar{R}_{i}$
for each vertex $v_{i}\in C$ (like in the proof of Lemma~\ref{lem:G1-planar}
the rectangle $\bar{R}_{i}$ is defined to be the convex hull of all
corners of cells of $\G$ that are contained in $R_{i}$) and the
following set of line segments (actually almost the same as the ones
defined in the proof of Lemma~\ref{lem:G1-planar}).  Consider two
rectangles $R_{i},R_{i'}\in C_{j}$ intersecting the same grid cell
$g$. 
\begin{itemize}
\item If $R_{i},R_{i'}$ intersect the same horizontal line $\ell^{H}\in\L_{H}$
then then we draw a horizontal line segment $\ell'$ connecting $\bar{R}_{i}$
and $\bar{R}_{i'}$ such that $\ell'$ is a subset of $\ell^{H}$. 
\item If $R_{i}$ and $R_{i'}$ contain the top left and the bottom right
corner of $g$, resp., then we draw a diagonal line segment
$\ell'$ connecting $\bar{R}_{i}$ and $\bar{R}_{i'}$ within $g$. 
\end{itemize}
This yields a connected area $A_{j}$ representing $C_{j}$ (and thus
$w_{j}$). 

Let $e=\{w_{j},w_{j'}\}\in E_{2}$. We want to introduce a line segment
representing $e$. By definition of $E_{2}$ there must be grid cell
$g$ and two rectangles $R_{i}$, $R_{i'}$ intersecting $g$ whose
vertices belong to different connected components of $G_{1}'$ and
that $R_{i}$ and $R_{i'}$ contain the bottom left and the top right
corner of $g$, resp. Note that then there can be no vertex
$v_{i''}\in V_{1}'$ whose rectangle contains the top left or the
bottom right corner of $g$: such a rectangle would be connected by
an edge with both $R_{i}$ and $R_{i'}$ in $G_{1}$ and then all
three rectangles $R_{i'},R_{i'},R_{i''}$ would be in the same connected
component of $G_{1}'$. We draw a diagonal line segment $\ell'$ connecting
$\bar{R}_{i}$ and $\bar{R}_{i'}$ within $g$ and then $\ell'$ does
not intersect any area $A_{j}$ for any vertex $w_{j}\in V_{2}$.
Also, since we add at most one line segment $\ell'$ per grid cell
$g$ these line segments do not intersect each other. Hence, $G_{2}$
is planar.
\end{proof}

\begin{proof}[Proof of Theorem \ref{thm:MISR-FPT-PTAS-kernel}]
First, we define the grid as described in Section~\ref{subsec:grid}.
In case that the algorithm in Lemma~\ref{lem:grid} finds a solution
of size $k$ then we define the kernel $\bar{\R}$ to be this solution and we
are done. Otherwise, we enumerate all possible sets $\G_{k'}$ of
the kind as described in Lemma~\ref{lem:shape-cells-group-1}, at most 
$k^{O(1/\epsilon^{8})}$ many. Then,
for each such set $\G_{j}$ we consider all rectangles contained
in the union of $\G_{j}$ and we compute a feasible solution of size
$c$ for them if such a solution exists, and otherwise we compute
the optimal solution. We do this by complete enumeration in time $n^{O(c)}=n^{O(1/\epsilon^{8})}$.
For each set $\G_{j}$ the obtained solution has size at most $c=O(1/\epsilon^{8})$.
We define the kernel $\bar{\R}$ to be the union over all $k^{O(1/\epsilon^{8})}$
solutions obtained in this way. Hence, $|\bar{\R}|\le k^{O(1/\epsilon^{8})}$.
Also, we can guarantee that the output of our algorithm is a subset of 
$\bar{\R}$ and hence $\bar{\R}$ contains a $(1+\epsilon)$-approximative solution.
\end{proof}

\begin{proof}[Proof of Lemma~\ref{lem:shifting}]
Let $\OPT$ denote the optimal solution to the given instance. For
each $\sp'\in\{1,\dots,\lceil8/\epsilon\rceil\}$ we define $I(B'):=\{i\in I\mid h_{i}\in[(1/k)^{\sp'+2}N,(1/k)^{\sp'}N)\}$.
For any item $i\in I$ there can be at most four values of $B'$ such
that $i$ is contained in the respective set $I(B')$. Hence, there
must be one value $B\in\{1,\dots,\lceil8/\epsilon\rceil\}$ such that
$|I(B)\cap\OPT|\le\frac{\epsilon}{2}|\OPT|$. Each item $i\in I\setminus I(B)$
is then contained in $L$ or $T$. Since $|I(B)\cap\OPT|\le\frac{\epsilon}{2}|\OPT|$
we lose only a factor of $(1-\frac{\epsilon}{2})^{-1}\le1+\epsilon$
in the approximation ratio.
\end{proof}

\begin{proof}[Proof of Lemma~\ref{lem:deletion-rectangles}]
Each deletion rectangle has a height of at most $(1/k)^{\sp}N$ and
a width of exactly $(1/k)^{\sp}N$. Each large item has height and
width at least $(1/k)^{\sp}N$. Therefore, each deletion rectangle
can intersect with at most $4$ large items in its interior (intuitively,
at its $4$ corners).
\end{proof}

\begin{proof}[Proof of Lemma~\ref{lem:rounded-packing}] 
For each item
$i\in\OPT''$ we perform the following operation. Each item $i'\in\OPT''$
such that $i'$ is placed underneath $i$ (i.e., such that the $y$-coordinate
of the top edge of $i'$ is upper-bounded by the $y$-coordinate of
the bottom edge of $i$) is moved by $N/(k'\tilde{k})$ units down.
If $i$ is not rotated then we increase the height of $i$ to $\hat{h_{i}}$
by appending a rectangle of width $w_{i}$ and height $\hat{h}_{i}-h_{i}\le N/(k'\tilde{k})$
underneath $i$. If $i$ is rotated then we increase the width of
$i$ to $\hat{w_{i}}$ by appending a rectangle of width $h_{i}$
and height $\hat{w}_{i}-w_{i}\le N/(k'\tilde{k})$ underneath $i$.
Since we moved down the mentioned other items before, the new (bigger)
item does not intersect any other item. We do this operation for each
item $i\in\OPT''$. In the process, we move each item down by at most
$(k'-1)N/(k'\tilde{k})$ and when we increase its height then the
$y$-coordinate of its bottom edge decreases by at most $N/(k'\tilde{k})$.
Initially, the $y$-coordinate of the bottom edge of any item was
at least $N/\tilde{k}$. Hence, at the end the $y$-coordinate of
the bottom edge of any item is at least $N/\tilde{k}-(k'-1)N/(k'\tilde{k})-N/(k'\tilde{k})\ge N/\tilde{k}-N/\tilde{k}\ge0$.
Hence, all rounded items are contained in the knapsack.
\end{proof}

\begin{proof}[Proof of Lemma~\ref{lem:k-items-per-set}] 
Consider the packing for $\OPT''$ due to Lemma~\ref{lem:rounded-packing}
in which we increased the height of each non-rotated item $i$ to
$\hat{h}_{i}$ and the width of each rotated item $i'$ to $\hat{w}_{i'}$.
Suppose that there is a set $I_{h}^{(j)}$ such that $\OPT''$ contains
an item $i\in I_{h}^{(j)}$ which is not among the $k'$ items in
$I_{h}^{(j)}$ with smallest width. Denote by $\bar{I}_{h}^{(j)}$
the latter set of items. Since $|\OPT''|\le k'$ and $i\in\OPT''$
there must be an item $i'\in\bar{I}_{h}^{(j)}$ such that $i'\notin\OPT''$.
Then we can replace $i$ by $i'$ since $\hat{h}_{i'}=\hat{h}_{i}$
and $w_{i'}\le w_{i}$. We perform this operation for each set $L_{h}^{(j)}$
and a symmetric operation for each set $L_{w}^{(j)}$ until we obtain
a solution for which the lemma holds. This solution then contains
the same number of items as the initial solution $\OPT''$. 
\end{proof}

\section{Proofs for Section~\ref{section:hardness}}\label{appendix:hardness:proofs}

\begin{lemma}\label{lemma:mssk:hardness}
\MSS is \W{1}-hard.
\end{lemma}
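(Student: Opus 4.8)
The goal is to show that \MSS\ is \W{1}-hard by a parameterized reduction from \SSum, which is \W{1}-hard with parameter $k$ by Downey and Fellows~\cite{DowneyF95}. The plan is to essentially reuse (or lightly adapt) the classical argument: I would revisit the known hardness proof of \SSum\ and observe that it already produces an instance in which choosing a number more than once is never beneficial, so the same reduction witnesses hardness of the ``with repetition'' variant \MSS. Concretely, recall that Downey and Fellows reduce from \textsc{Perfect Code} or from a multicolored clique-type problem, but the cleanest route here is to reduce \SSum\ itself directly to \MSS.

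\textbf{Key steps.} First I would take an instance $(x_1,\dots,x_m,t,k)$ of \SSum. The only thing that distinguishes \SSum\ from \MSS\ is the with-repetition relaxation, so I need to rule out repetitions by a standard ``positional encoding'' trick. I would pick a base $b$ larger than $k$ (say $b = k+1$) and define new numbers $x_i' := x_i \cdot b^{m} + b^{\,i-1}$ for $i \in \{1,\dots,m\}$, together with the target $t' := t\cdot b^{m} + \sum_{i=1}^{m} b^{\,i-1}$. Wait — that forces all numbers to be used; instead I would use $t' := t\cdot b^{m} + \sum_{i=1}^{m} c_i b^{\,i-1}$ where the low-order digits must be chosen so that exactly $k$ of them are $1$ and the rest are $0$; but since we must fix $t'$ in advance, the cleaner encoding is to append, for each $i$, a guard digit that caps the multiplicity at $1$: set $x_i' := x_i b^{m} + b^{i-1}$ and $t' := t b^{m} + \sum_{i=1}^{m} b^{i-1}$ — no. The correct and simplest fix: make each original number unique in a high-order block so repeats overflow. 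Define $x_i' := x_i + (k+1)\cdot M \cdot 2^{i}$ and $t' := t + (k+1)M\sum_{i \in ?}$ — again the target depends on the unknown choice.

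Let me instead carry out the cleanest argument: I would reduce from the variant of \SSum\ in which the $m$ numbers are already pairwise ``digit-disjoint'' enough that any multiset summing to $t'$ is in fact a set. This is obtained for free by the padding $x_i' := x_i \cdot m + 1$ and $t' := t\cdot m + k$: then any multiset of size $k$ (we keep the solution-size constraint $k$) summing to $t'$ has its ``$+1$'' contributions summing to exactly $k$, which is automatic, so that padding alone does not block repeats. The genuinely working construction is the standard one: choose $B := 2(\max_i x_i + 1)$ and replace $x_i$ by $y_i := x_i + i\cdot B \cdot (k+1) \cdot$ — the point being that the block carrying the ``index'' of $x_i$ is multiplied by a factor exceeding $k$ so that using $x_i$ twice would push that block's digit out of its allotted range, making the total unequal to any fixed $t'$; and the target's index blocks are set to force $k$ distinct indices. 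I expect to formalize this as: work in base $k+1$, give each $x_i$ a dedicated digit position whose value in a valid sum records the multiplicity of $x_i$, and fix $t'$ so that the sum of these multiplicities equals $k$ while each individual multiplicity is at most $1$ because we separately bound the total count of chosen numbers by $k$ — hence every multiplicity lies in $\{0,1\}$ and the \MSS\ solution is a genuine \SSum\ solution. Conversely any \SSum\ solution is trivially an \MSS\ solution.

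\textbf{Main obstacle.} The reduction itself is routine arithmetic padding; the only subtlety — and the step I would spend the most care on — is arranging the encoding so that the target value $t'$ can be written down \emph{without} knowing which $k$ indices are used, while still guaranteeing that any size-$k$ multiset summing to $t'$ uses $k$ \emph{distinct} numbers. The resolution is to make the index-digits of $t'$ all equal (e.g.\ to $k$ in some spread-out positional encoding), relying on the size bound $k$ plus the base being $>k$ to force the per-index multiplicities into $\{0,1\}$; I would verify the carry-free arithmetic and that all numbers remain polynomially bounded in the input size (so the reduction is polynomial-time and parameter-preserving, keeping $k' = k$). Since the excerpt only needs the statement of Lemma~\ref{lemma:mssk:hardness} for use in Theorem~\ref{thm:w1hard}, this level of detail suffices, and I would remark that the extension of Downey and Fellows' argument~\cite{DowneyF95} to \MSS\ can alternatively be obtained by inspecting their original proof, which already has the no-repetition property.
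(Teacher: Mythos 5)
There is a genuine gap. Your main route---a direct padding reduction from \SSum to \MSS---runs into exactly the obstacle you name (the target $t'$ must be fixed in advance, yet any ``dedicated digit per index'' encoding needs $t'$ to record which $k$ indices are used), and your proposed resolution does not overcome it. If each $x_i$ gets a private digit position in a base larger than $k$, then in any carry-free sum of $k$ chosen numbers the digit at position $i$ equals precisely the multiplicity of $x_i$; a fixed target whose index digits are ``all equal'' therefore forces \emph{every} $x_i$ to be used that same number of times (giving total count a multiple of $m$, not an arbitrary set of $k$ indices), or, if you instead route all indices into a shared counter digit, it only counts the total number of picks and does not exclude repetitions at all. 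Your own sequence of retracted constructions ($x_i'=x_ib^m+b^{i-1}$, $x_i'=x_im+1$, etc.) illustrates the same dead end, and no variant of this generic \SSum-to-\MSS padding is carried to a correct conclusion in the proposal.

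The one sentence that does lead to a proof is your closing remark that the Downey--Fellows argument ``already has the no-repetition property,'' but that is precisely the content that needs to be verified, and the proposal never does so. The paper's proof is exactly this: the Downey--Fellows reduction (from \problem{Perfect Code}) produces \SSum instances in which every number has digits $0$ or $1$ in base $k+1$ and the target is $1\cdots1$ in base $k+1$; hence any sum of $k$ numbers is carry-free, each digit of the sum is the number of chosen numbers (with multiplicity) having a $1$ there, and matching the all-ones target forces each such count to be exactly $1$---so selecting a number twice can never help, and the very same reduction establishes \W{1}-hardness of \MSS. To repair your write-up, drop the direct padding reduction and instead state and check this structural property of the known hard instances.
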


\begin{proof}
Downey and Fellows~\cite{DowneyF95} give a parameterized reduction from \problem{Perfect Code($k$)} to \SSum. The created instances $(x_1,\ldots,x_m,t,k)$ have the property that all numbers have digits $0$ or $1$ when expressed in base $k+1$. Moreover, the target value $t$ is equal to $1\ldots 1_{k+1}$. Accordingly, when any $k$ numbers $x_i$ sum to $t$ there can be no carries in the addition. Thus, no two selected numbers may have a $1$ in the same position. Hence, allowing to select numbers multiple times does not create spurious solutions, giving us a correct reduction from \problem{Perfect Code($k$)} to \MSS.
\end{proof}

We split the proof of Theorem~\ref{thm:w1hard} into two separate statements for \TDKR and \TDK.

\begin{theorem}\label{theorem:knapsack:with:rotations:hardness:appendix}
\TDKR is \W{1}-hard.
\end{theorem}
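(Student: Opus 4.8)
The plan is to give a parameterized reduction from \MSS to \TDKR with output parameter $k'=O(k^2)$, following the proof sketch given earlier. Let $(x_1,\dots,x_m,t,k)$ be an instance of \MSS. I would pick integers $L\gg S\gg t$ (say $S=10t$ and $L=10Sk$), set $N=k(L+S)+2t+(\text{small slack})$, and create the following items: for each $i\in[m]$ a \emph{tile} of height $L+S+x_i$ and width $L+S+2t-x_i$; a linear (in $k$) number of \emph{flat} items of dimension $L\times S$ and \emph{thin} items of dimension $S\times L$; and one \emph{bar} of dimension $N\times S$. The output parameter $k'$ is the total number of items we require in a packing, which will be $k^2$ tiles plus $O(k)$ flat/thin items plus one bar, hence $k'=O(k^2)$.

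The core of the argument is the equivalence proof. For the forward direction, given a solution $y_1,\dots,y_k$ (with repetition) of \MSS summing to $t$, I would exhibit an explicit packing: arrange the $k$ chosen tiles (each used $k$ times, or rather the $k$ tiles $y_1,\dots,y_k$ placed in a $k\times k$ grid pattern — row $r$ uses the $k$ tiles in some cyclic order so that every row and every column contains each of $y_1,\dots,y_k$ exactly once), fill the gaps between consecutive tiles in a row with flat items, the gaps between rows with thin items, and place the bar along the top. One checks that the total width used by any row is exactly $k(L+S)+2t=\;N-\text{slack}$ and similarly for heights, so everything fits. For the reverse direction, suppose a packing uses $k'$ items. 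First I argue that the bar must be present (at most $k^2$ tiles fit in an $N\times N$ square because their side lengths are $\approx L+S$ and $N<(k+1)(L+S)$, and the supply of flat/thin items is bounded, so to reach $k'$ items the bar is forced), and w.l.o.g.\ the bar is axis-aligned (unrotated). Then I place $k$ equally spaced vertical lines at mutual distance $L-1$ and argue each tile is hit by exactly one such line and each line hits exactly $k$ tiles; symmetrically for $k$ horizontal lines. The upper bound ``total width along any horizontal line $\le N$'' applied to each horizontal line, summed, together with the exact choice of $N$, forces all these inequalities to be tight; since $x_i$ enters tile widths negatively, tightness of the width inequalities combined with tightness of the height inequalities forces the $k$ numbers along one line to sum to exactly $t$. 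The flat/thin items are handled uniformly: because one of their dimensions equals $L$, each is necessarily crossed by one of the $k$ vertical or $k$ horizontal lines, so they contribute a fixed, known amount $S$ to each relevant inequality and do not disturb the counting.

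Finally, to rule out rotations: I would make every tile width strictly larger than every tile height (this holds since $L+S+2t-x_i > L+S+x_j$ whenever $2t-x_i>x_j$, i.e.\ whenever $x_i+x_j<2t$, which is true as each $x_i<t$). With the bar unrotated and pinned at the top consuming height $S$, the remaining available height is $N-S<k(L+S)+2t$; if some tile were rotated, its height would become what was its width, which is $>L+S$ larger than a typical tile height by the strict inequality above, and a short computation shows the total height along the vertical line through that rotated tile would exceed $N$, a contradiction. Hence no optimal packing rotates a tile, and the same reduction proves \W{1}-hardness of \TDK as a free consequence (treated in the companion statement).

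The main obstacle I anticipate is the bookkeeping in the reverse direction: making the choice of $N$, $L$, $S$, and the exact number of flat/thin items precise enough that (a) the $k$ grid lines are genuinely forced to each meet exactly $k$ tiles (no tile can ``dodge'' all lines, and no line can be overloaded), and (b) the chain of inequalities collapses to the single equation $y_1+\dots+y_k=t$ with no slack absorbed elsewhere. Getting the slack term in $N$ small enough to force tightness but large enough to admit the intended packing in the forward direction is the delicate point; everything else is routine once the parameters are fixed.
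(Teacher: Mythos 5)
Your high-level plan is exactly the paper's (tiles of size $(L+S+x_i)\times(L+S+2t-x_i)$, a grid of $k$ vertical and $k$ horizontal lines at spacing $L-1$, flat/thin spacers, a bar to kill rotations), but the concrete instantiation you give would fail in several places, and the failures are precisely in the calibration you defer. First, your forward packing does not fit your own knapsack: a row containing tiles for $y_1,\dots,y_k$ has total width $\sum_i (L+S+2t-y_i)=k(L+S)+(2k-1)t$, not $k(L+S)+2t$, and on top of that you need width $(k-1)S$ for the in-row spacers; the paper's choice $N=kL+(2k-1)S+(2k-1)t$ is exactly this amount (and, column-wise, exactly $k(L+S)+t$ for heights plus $(k-1)S$ for spacers plus $(2k-2)t$ for the bar). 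Second, ``a linear (in $k$) number'' of flat/thin items breaks the tightness argument: the paper uses exactly $p=k(k-1)$ flat and $p=k(k-1)$ thin items so that, since each vertical line can accommodate at most $k-1$ items of width $L$ besides its $k$ tiles and all $2p$ spacers must be packed, \emph{every} vertical line carries exactly $k-1$ flat items (and every horizontal line exactly $k-1$ thin ones). Without this supply/capacity balance a line may carry fewer spacers, and since $S\gg t$ the resulting slack of up to $S$ completely swallows the bound $\sum y_i\le t$ you are trying to extract; nothing in your setup prevents the spacers from piling onto a few lines.

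Third, the bar cannot have height $S$. The bar crosses every vertical line but no horizontal line, and its height is what balances the asymmetry between the two families of inequalities: with the paper's $N$, the vertical lines give $k(L+S)+\sum y_i+(k-1)S+\mathrm{height(bar)}\le N$, i.e.\ $\sum y_i\le t$ only if the bar height is $(2k-2)t$, while the horizontal lines give $k(L+S)+2kt-\sum z_i+(k-1)S\le N$, i.e.\ $\sum z_i\ge t$. If the bar has height $S$ (and $N$ is adjusted so the vertical bound still works), the horizontal bound degrades to $\sum z_i\ge 2kt-S-t$, which is vacuous because $S\gg t$; so the two-sided squeeze that forces $\sum y_i=t$ disappears. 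Relatedly, your numeric choices $S=10t$, $L=10Sk$ violate the separations actually used (the paper needs, e.g., $(2k-1)t<S$ and roughly $(2k+1)S<L$, satisfied by $S=k^2t$, $L=k^2S$): for $k\ge 6$ your $S$ is too small for the ``at most $k$ tiles and at most $k-1$ spacers per line'' counting. The no-rotation argument also hinges on this exact tightness (a rotated tile adds strictly more than $t$ to a vertical line whose budget has zero slack), so it too collapses once the slack is not pinned down. In short: right reduction, but the parameters $N$, the bar height $(2k-2)t$, and the count $k(k-1)$ of each spacer type are not incidental bookkeeping — they are the proof, and as stated your version of them is inconsistent.
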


\begin{proof}
We give a polynomial-time parameterized reduction from \MSS to \TDKR with output parameter $k'=O(k^2)$. By Lemma~\ref{lemma:mssk:hardness}, this establishes \W{1}-hardness of \TDKR.

\emph{Construction.}
Let $(x_1,\ldots,x_{m},t,k)$ be an instance of \MSS. W.l.o.g. we may assume that $4\leq k \leq m$ and that $x_i< t$ for all $i\in[m]$. Furthermore, as solutions may select the same integer multiple times, we may assume that all the $x_i$ are pairwise different. 

Throughout, we take a knapsack to be an $N$ by $N$ square with coordinate $(0,0)$ in the bottom left corner and $(N,N)$ at top right. The first coordinate of any point in the knapsack measures the horizontal (left-right) distance from the point to $(0,0)$; the second coordinate measure the vertical (up-down) distance from $(0,0)$. All items in the following construction are given such that their sizes reflect their \emph{intended rotation} in a solution, i.e., heights refers to vertical dimensions and widths to horizontal dimensions. 

We begin by constructing an instance of \TDK. Throughout, for an item $R$, we will use $\HEIGHT(R)$ and $\WIDTH(R)$ denote its height and width. The instance of \TDK is defined as follows:
\begin{itemize}\itemsep0pt
 \item We define constants 
 \begin{align*}
 S := k^2\cdot t \hspace{3cm}
 L := k^2\cdot S=k^4\cdot t. 
 \end{align*}
 (The specific values will not be important so long as $k^2\cdot t\leq S$ and $k^2 \cdot S \leq L$. Intuitively, the identifiers are chosen to mean \emph{small} and \emph{large}.)
 \item The knapsack has height and width both equal to 
 \begin{align}
 N:=k\cdot L+ (2k-1)\cdot S + (2k-1)\cdot t. \label{math:kr:knapsacksize}
 \end{align}
 \item For each $i\in [m]$ we construct $k^2$ items $R(i,1),\ldots,R(i,{k^2})$ with 
 \begin{align}
 \HEIGHT(R(i,j)) &= L+S+ x_i \label{math:kr:tileheight}\\
 \WIDTH(R(i,j)) &= L+S+ 2t- x_i. \label{math:kr:tilewidth}
 \end{align}
 We call these items \emph{tiles}. We say that each tile $R(i,\cdot)$ corresponds to the number $x_i$ from the input that it was constructed for. Since the $x_i$ are pairwise different, the $x_i$ corresponding to any tile can be easily read off from both height and width. We point out that all tiles have height strictly between $L+S$ and $L+S+t$ and width strictly between $L+S+t$ and $L+S+2t$.

 \item We add $p:= k\cdot (k-1)$ items $T(1),\ldots,T(p)$ with height $L$ and width $S$. We call these the \emph{thin items}.
 \item We add $p$ items $F(1),\ldots,F(p)$ with height $S$ and width $L$. We call these the \emph{flat items}.
 \item We add a single (very flat and very wide) item of height $(2k-2)\cdot t$ and width $N$, which we call the \emph{bar}. 
 \item The created instance has a target value of $k'=k^2+2p+1$. (The intention is to pack all thin and all flat items, the bar, and exactly $k^2$ tiles.)
\end{itemize}
This completes the construction. Clearly, all necessary computations can performed in polynomial time. The parameter value $k'=k^2+2p+1$ is upper bounded by $O(k^2)$. It remains to prove correctness.

\emph{Correctness.}
We need to prove that the instance $(x_1,\ldots,x_{m},t,k)$ is yes for \MSS if and only if the constructed instance is yes for \TDKR.

$\Longleftarrow:$ Assume that the created instance is yes for \TDKR, i.e., that it has a packing with $k'=k^2+2p+1$ items and fix any such packing. Observe that the packing must contain at least $k^2$ tiles as there are only $2p+1$ items that are not tiles. We will show that the packing uses exactly $k^2$ tiles, the $2p$ thin/flat items, and the bar. It is useful to recall that tiles have height and width both greater than $L+S$ no matter whether they are rotated.

Consider the effect of placing $k$ vertical lines in the knapsack at horizontal coordinates $L-1,2\cdot(L-1),\ldots,k\cdot (L-1)$. We first observe that these lines must necessarily intersect
all tiles of the packing because each of them has width at least $L$: The distance between any two consecutive lines is $L-1$, same as the distance from the left border of the knapsack to the first line. The distance from the $k$th vertical line to the right border is also strictly less than $L$:
\[
N-k\cdot (L-1)= k + (2k-1)\cdot S + (2k-1)\cdot t < S + (2k-1)\cdot S + S = (2k+1)\cdot S < L
\]

Observe that no line can intersect more than $k$ tiles: Any two tiles of the packing may not overlap and may in particular not share their intersection with any line. Since each line has length $N$ and each intersection with a tile has length greater than $L$, there can be at most $k$ tiles intersected by any line as $N<(k+1)\cdot L$: 
\[
N=k\cdot L + (2k-1)\cdot S + (2k-1)\cdot t < k\cdot L + 4k\cdot S \leq (k+1) \cdot L 
\]

Overall, this means that the packing contains at most $k^2$ tiles: There are $k$ lines that intersect all tiles of the packing, each of them intersecting at most $k$. By our earlier observation, this implies that the packing contains exactly $k^2$ tiles in addition to all $2p$ flat/thin items. Moreover, each line intersects exactly $k$ tiles and no two lines intersect the same tile.

Let us now check how the vertical lines and the flat and thin items interact. Clearly, each both flat as well as rotated thin items have width $L$ and height $S$. Accordingly, each flat and each rotated thin item must be intersected by at least one of the $k$ vertical lines. We already know that a total length of at least $k\cdot (L+S)$ of each line is occupied by the $k$ tiles that the line intersects. This leaves at most a length of
\[
N-k\cdot (L+S)= (k-1)\cdot S + (2k-1)\cdot t < k\cdot S
\]
for intersecting flat and rotated thin items, and allows for intersecting at most $k-1$ of them. (Again, no two items can share their intersection with the line.) Thus, there are at most $p=k\cdot (k-1)$ of the flat and rotated thin items in the packing.

Before analyzing the vertical lines further, let us perform an analogous argument for $k$ horizontal lines with vertical coordinates $L-1, 2\cdot(L-1),\ldots,k\cdot (L-1)$ and their intersection with tiles and flat/thin items. It can be verified that each of them similarly intersects exactly $k$ tiles and that no tile is intersected twice. The argument for flat and thin items is analogous as well, except that we now reason about rotated flat and (non-rotated) thin items, which have height $L$ and width $S$; we find that there are at most $p$ such items and that each horizontal line intersects at most $k-1$ of them. Since in total there must be $2p$ flat and thin items, this implies that both sets of lines (horizontal and vertical) intersect $p$ of these items each. Since flat and thin items can be swapped freely, we may assume that none of these items are rotated, and that the vertical lines intersect the $p$ flat items and the horizontal lines intersect the $p$ thin items.

We know now that the packing contains exactly $k^2$ tiles as well as the $p$ flat and the $p$ thin items. Thus, to get a total of $k'=k^2+2p+1$ items, it must also contain the bar, which has height $N$ and width $(2k-2)\cdot t$. W.l.o.g., we may assume that the bar is not rotated, or else we could rotate the entire packing.\footnote{Note that this does not change the fact that no rotation of flat and thin items is required since there are still $p$ each such items in either rotation. In other words, if there is a feasible packing then there is one where neither the bar nor the thin/flat items are rotated. We will show that in such a packing also the tiles are not rotated.} It follows that all vertical lines intersect the bar due to its width of $N$, which matches the width of the knapsack.

Let us now analyze both vertical and horizontal lines further. The goal is to obtain inequalities on the values $x_i$ that go into the construction of the tiles; up to now we have only used that they are fairly large.
We know that each vertical line intersects $k$ tiles, $k-1$ flat items, and the bar. Let $h_1,\ldots,h_k$ denote the heights of the tiles (ordered arbitrarily) and recall that each flat item has height $S$ while the bar has height $(2k-2)\cdot t$. Since all intersections with the line are disjoint and the line has length $N$ (equaling the height of the knapsack), we get that
\begin{align}
N \geq h_1 + \ldots + h_k + (k-1)\cdot S + (2k-2)\cdot t. \label{math:kr:verticallinebound}
\end{align}
At this point, in order to plug in values for the $h_i$, it is important whether any of the tiles are rotated; we will show that having at least one rotated tile causes a violation of (\ref{math:kr:verticallinebound}). To this end, recall that (non-rotated) tiles have heights strictly between $L+S$ and $L+S+t$ and widths strictly between $L+S+t$ and $L+S+2t$. Thus, if at least one tile is rotated then it has height greater than $L+S+t$, rather than the weaker bound of greater than $L+S$. Using this, the right-hand side of (\ref{math:kr:verticallinebound}) can be lower bounded by
\begin{align*}
RHS &> (k-1)\cdot (L+S) + (L+S+t) + (k-1)\cdot S + (2k-2)\cdot t\\
&= k\cdot L + (2k-1)\cdot S + (2k-1)\cdot t\\
&= N,
\end{align*}
contradicting (\ref{math:kr:verticallinebound}).
Thus, none of the tiles intersected by the vertical line can be rotated. Since each tile is intersected by a vertical line, it follows that no tiles can be rotated and we can analyze the lines using the sizes as given in (\ref{math:kr:tileheight}) and (\ref{math:kr:tilewidth}). Let us return to replacing the values $h_i$ in (\ref{math:kr:verticallinebound}).
Recall that the height of a tile is equal to $L+S+ x_i$ where $x_i$ is the corresponding integer from the input to the initial \MSS instance. Thus, if the $i$th intersected tile corresponds to input integer $y_i\in\{x_1,\ldots,x_{m}\}$ then by (\ref{math:kr:tileheight}) we have
\[
h_i = L + S + y_i.
\]
Plugging this into (\ref{math:kr:verticallinebound}) yields
\begin{align*}
N  &\geq \sum_{i=1}^k (L + S + y_i) + (k-1)\cdot S + (2k-2)\cdot t\\
 &= k\cdot L + (2k-1)\cdot S + (2k-2)\cdot t + \sum_{i=1}^k y_i.
\end{align*}
Using $N=k\cdot L+(2k-1)\cdot S + (2k-1)\cdot t$ we immediately get
\begin{align}
t\geq \sum_{i=1}^k y_i. \label{math:kr:verticalbound}
\end{align}

Let us apply the same argument to the horizontal lines: Each such line intersects $k$ tiles and $k-1$ thin items. Let $w_1,\ldots,w_k$ denote the widths of the tiles (ordered arbitrarily) and recall that each thin item has width $S$. As intersections with the line are disjoint and its length is $N$, we get that
\begin{align}
N \geq w_1 + \ldots + w_k + (k-1)\cdot S. \label{math:kr:horizontallinebound}
\end{align}
We already know that none of the tiles are rotated. We recall that the width of a tile corresponding to input integer $x_i$ is equal to $L+S+2t-x_i$ (\ref{math:kr:tilewidth}).
Thus, if the $i$th intersected tile corresponds to input integer $z_i\in\{x_1,\ldots,x_{m}\}$ then we have
\[
w_i = L + S + 2t - z_i.
\]
Plugging this into (\ref{math:kr:horizontallinebound}) yields
\begin{align*}
N & \geq \sum_{i=1}^k (L + S + 2t - z_i) + (k-1)\cdot S = k\cdot L + (2k-1) \cdot S + 2k\cdot t - \sum_{i=1}^k z_i.
\end{align*}
Using $N= k\cdot L + (2k-1)\cdot S + (2k-1)\cdot t$ this simplifies to
\begin{align}
-t  \geq - \sum_{i=1}^k z_i 
\quad \iff \quad t \leq \sum_{i=1}^k z_i. \label{math:kr:horizontalbound}
\end{align}

Recall that there are exactly $s=k^2$ tiles in the packing, and let $x_{i_1},\ldots,x_{i_s}$ be the corresponding values from the input. They are partitioned into $k$ groups of size $k$ each by the $k$ vertical lines, and again by the $k$ horizontal lines. (I.e., the group corresponding to a line is the set of those $k$ tiles that are intersected by the line.) The grouping by vertical lines yields $k$ inequalities of form (\ref{math:kr:verticalbound}), with each $x_{i_j}$ appearing in exactly one of them. The grouping by horizontal lines yields $k$ inequalities of form (\ref{math:kr:horizontalbound}), and again each $x_{i_j}$ appears in exactly one of them. (That is, values may be repeated but the formal variable $x_{i_j}$ appears exactly once.) It follows that all the inequalities must be fulfilled with equality, so that $\sum_{j=1}^s x_{i_j}= k\cdot t$ holds.

Picking any single inequality of form (\ref{math:kr:verticalbound}) for any vertical line that intersects tiles corresponding to input integers $y_1,\ldots,y_k\in\{x_1,\ldots,x_{m}\}$ we get
$
t = \sum_{i=1}^k y_i.
$
In other words, there is a selection of $k$ input values $y_1,\ldots,y_k\in\{x_1,\ldots,x_{m}\}$, possibly with repetition, that sums to exactly $t$. Thus, the initial instance for \MSS is a yes-instance, as required. This completes the first part of the correctness proof.

$\Longrightarrow:$
For the converse, assume that the input \MSS instance has a solution, i.e., that we can select $k$ numbers $y_1,\ldots,y_k\in\{x_1,\ldots,x_{m}\}$, allowing repetition, such that
$
t = \sum_{i=1}^k y_i.
$
We will show how to get a packing of $k'=k^2 + 2p+1$ items for the created \TDK instance without using any rotations. Concretely, we will be using only flat/thin items, the bar, and the tiles that correspond to the numbers $y_1,\ldots,y_k$. (Recall that numbers may be repeated, which is why we created $k^2$ items per input number during the construction.)

We will construct a packing that arranges $k^2$ tiles in roughly grid form, i.e., we use $k$ by $k$ tiles. Between the tiles we will insert $2p$ flat/thin items and the bar is added at the top of the knapsack. Let us denote the $k^2$ tiles in the packing by $R_{a,b}$ with $a,b\in [k]$. Concretely, we use the following tiles from the construction:
\begin{itemize}
 \item $R_{1,1},R_{2,2},\ldots,R_{k-1,k-1},R_{k,k}$ are tiles corresponding to $y_1$, i.e., they have height $L+S +y_1$ and width $L+S+2t-y_1$.
 \item $R_{2,1},R_{3,2},\ldots,R_{k,k-1},R_{1,k}$ are tiles corresponding to $y_2$, and so on.
 \item $R_{k,1},R_{1,2},\ldots,R_{k-2,k-1},R_{k-1,k}$ are tiles corresponding to $y_k$.
\end{itemize}
More formally, item $R_{a,b}$ is a tile corresponding to $y_i$, where $i=1+((a-b) \mod k)$, and accordingly has $\HEIGHT(R_{a,b})=L+S+y_i$ and $\WIDTH(R_{a,b})= L + S + 2t-y_i$. This yields the required property that for each $a\in[k]$ the items $R_{a,1},\ldots,R_{a,k}$ contain tiles corresponding to all numbers $y_1,\ldots,y_k$ (and correctly contain multiple copies for numbers that appear more than once). The same holds for items $R_{1,b},\ldots,R_{k,b}$ for all $b\in[k]$.

We use $\HEIGHT(R_{i,j})$ and $\WIDTH(R_{i,j})$ to refer to height and width of tile $R_{i,j}$.
We use $\LEFT(R)$, $\RIGHT(R)$, $\TOP(R)$, and $\BOTTOM(R)$ to specify the coordinates of any item in our packing, i.e., for the $k^2$ tiles, the $2p$ flat/thin items, and the bar. The coordinates for tiles are chosen as
\begin{align*}
\LEFT(R_{a,b}) &= (a-1) \cdot S + \sum_{i=1}^{a-1} \WIDTH(R_{i,b}),\\ 
\RIGHT(R_{a,b}) &= (a-1) \cdot S + \sum_{i=1}^a \WIDTH(R_{i,b}),\\
\BOTTOM(R_{a,b}) &= (b-1) \cdot S + \sum_{i=1}^{b-1} \HEIGHT(R_{a,i}),\\ 
\TOP(R_{a,b}) &= (b-1) \cdot S + \sum_{i=1}^b \HEIGHT(R_{a,i}).
\end{align*}

Let us first check some basic properties of these coordinates:
\begin{itemize}
 \item We observe that each tile is assigned coordinates that match its size, i.e., $\WIDTH(R_{a,b})=\RIGHT(R_{a,b})-\LEFT(R_{a,b})$ and $\HEIGHT(R_{a,b})=\TOP(R_{a,b})-\BOTTOM(R_{a,b})$.
 \item All coordinates lie inside the knapsack. Clearly, all coordinates are non-negative and it suffices to give upper bounds for $\TOP(R_{a,k})$ and $\RIGHT(R_{k,b})$. Recall that by construction each set of tiles $R_{a,1},\ldots,R_{a,k}$ contains tiles corresponding to all numbers $y_1,\ldots,y_k$, and same for $R_{1,b},\ldots,R_{k,b}$. Thus we get
 \begin{align*}
  \RIGHT(R_{k,b}) &= (k-1)\cdot S + \sum_{i=1}^k \WIDTH(R_{i,b})\\
  &= (k-1)\cdot S + \sum_{i=1}^k (L+S+2t-y_i)\\
  &= k\cdot L + (2k-1)\cdot S + 2k\cdot t - \sum_{i=1}^k y_i\\
  &= k\cdot L + (2k-1)\cdot S + (2k-1)\cdot t\\
  &= N.
 \end{align*}
 Similarly, we get
 \begin{align*}
  \TOP(R_{a,k}) &= (k-1)\cdot S + \sum_{i=1}^k \HEIGHT(R_{a,i})\\ 
  &= (k-1)\cdot S + \sum_{i=1}^k (L+S+y_i)\\
  &= k\cdot L + (2k-1)\cdot S + \sum_{i=1}^k y_i \\
  &= k\cdot L + (2k-1)\cdot S + t \\
  &= N - (2k-2)\cdot t.
 \end{align*}
 We will later use the gap of $(2k-2)\cdot t$ between $N$ and $N-(2k-2)\cdot t$ to place the bar item, as its height exactly matches the gap.
 \item For any tile $R_{a,b}$ the possible coordinates fall into very small intervals, using that all heights and widths of tiles lie strictly between $L+S$ and $L+S+2t$. We show this explicitly for $\LEFT(R_{a,b})$:
 \begin{align*}
  \LEFT(R_{a,b})&=(a-1)\cdot S + \sum_{i=1}^{a-1}\WIDTH(R_{i,b})\\
  \LEFT(R_{a,b})&>(a-1)\cdot S + \sum_{i=1}^{a-1} (L+S) =(a-1)\cdot L+ (2a-2)\cdot S\\
  \LEFT(R_{a,b})&<(a-1)\cdot S + \sum_{i=1}^{a-1}(L+S+2t) =(a-1)\cdot L + (2a-2)\cdot S + (2a-2)\cdot t\\
  &<(a-1)\cdot L + (2a-1)\cdot S
 \end{align*}
 In this way, we get the following intervals for $\LEFT(R_{a,b})$, $\RIGHT(R_{a,b})$, $\BOTTOM(R_{a,b})$, and $\TOP(R_{a,b})$. (Note that we sacrifice the possibility of tighter bounds in order to get the same simple form of bound for $\TOP$ and $\RIGHT$ and for $\BOTTOM$ and $\LEFT$.)
 \begin{align}
  (a-1)\cdot L+ (2a-2)\cdot S &&&<&& \LEFT(R_{a,b}) &&<&& (a-1)\cdot L + (2a-1)\cdot S \label{math:kr:leftinterval}\\
  a\cdot L + (2a-1)\cdot S &&&<&& \RIGHT(R_{a,b}) &&<&& a\cdot L + 2a\cdot S \label{math:kr:rightinterval}\\
  (b-1)\cdot L+ (2b-2)\cdot S &&&<&& \BOTTOM(R_{a,b}) &&<&& (b-1)\cdot L + (2b-1)\cdot S \label{math:kr:bottominterval}\\
  b\cdot L + (2b-1)\cdot S &&&<&& \TOP(R_{a,b}) &&<&& b\cdot L + 2b\cdot S \label{math:kr:topinterval}
 \end{align}
\end{itemize}

We can now easily verify that no two tiles $R_{a,b}$ and $R_{c,d}$ overlap if $(a,b)\neq (c,d)$. If $a\neq c$ then we may assume w.l.o.g. that $a<c$ (and hence $a\leq c-1$). Using (\ref{math:kr:topinterval}) and (\ref{math:kr:bottominterval}) we get
\begin{align*}
\RIGHT(R_{a,b}) &< a\cdot L + 2a\cdot S \leq (c-1)\cdot L + (2c-2)\cdot S < \LEFT(R_{c,d}).
\end{align*}
Thus, $R_{a,b}$ and $R_{c,d}$ do not overlap if $a\neq c$. If instead $a=c$ then we must have $b\neq d$ and, w.l.o.g., $b<d$ (and hence $b\leq d-1$). Thus we have
\begin{align*}
\TOP(R_{a,b}) &< b\cdot L + 2b\cdot S \leq (d-1)\cdot L + (2d-2)\cdot S < \BOTTOM(R_{c,d}).
\end{align*}
Thus, no two tiles $R_{a,b}$ and $R_{c,d}$ with $(a,b)\neq (c,d)$ overlap.

We will now specify coordinates for the $p$ flat and the $p$ thin items. For this purpose the intervals for coordinates of the tiles (\ref{math:kr:leftinterval})--(\ref{math:kr:topinterval}) are highly useful. For thin items, there will always be two adjacent tiles, to the left and to the right, and we use the intervals to get top and bottom coordinates. For flat items the situation is the opposite; there are adjacent tiles on the top and bottom sides and we use the intervals to get left and right coordinates. Recall that thin items have height $L$ and width $S$, whereas flat items have height $S$ and width $L$.

We denote the $p$ thin items by $T_{a,b}$ with $a\in[k-1]$ and $b\in[k]$; we choose coordinates as follows:
\begin{align}
\LEFT(T_{a,b}) &= \RIGHT(R_{a,b}) = (a-1)\cdot S + \sum_{i=1}^a \WIDTH(R_{i,b}) \label{math:kr:thinleft}\\
\RIGHT(T_{a,b}) &= \LEFT(R_{a+1,b}) = a\cdot S + \sum_{i=1}^a \WIDTH(R_{i,b}) \label{math:kr:thinright}\\
\BOTTOM(T_{a,b}) &= (b-1)\cdot L+ (2b-1)\cdot S \label{math:kr:thinbottom}\\
\TOP(T_{a,b}) &= b\cdot L + (2b-1) \cdot S \label{math:kr:thintop}
\end{align}
Clearly, the coordinates match the dimension of $T_{a,b}$.

We denote the $p$ flat items by $F_{a,b}$ with $a\in[k]$ and $b\in[k-1]$, and we use the following coordinates:
\begin{align}
\LEFT(F_{a,b}) &= (a-1) \cdot L + (2a-1)\cdot S \label{math:kr:flatleft}\\
\RIGHT(F_{a,b}) &= a\cdot L + (2a-1)\cdot S \label{math:kr:flatright}\\
\BOTTOM(F_{a,b}) &= \TOP(R_{a,b}) = (b-1)\cdot S + \sum_{i=1}^b \HEIGHT(R_{a,i}) \label{math:kr:flatbottom}\\
\TOP(F_{a,b}) &= \BOTTOM(R_{a,b+1}) = b\cdot S + \sum_{i=1}^b \HEIGHT(R_{a,i}) \label{math:kr:flattop}
\end{align}
Clearly, the coordinates match the dimension of $F_{a,b}$. It remains to show that there is no overlap between any of the items placed so far (all except the bar), recalling that intersections between tiles are already ruled out: It remains to consider (1) tile-flat, (2) tile-thin, (3) flat-flat, (4) flat-thin, and (5) thin-thin overlaps.

(1) There are no overlaps between any tile $R_{a,b}$ and any flat item $F_{c,d}$:
\begin{itemize}
 \item If $a<c$ then $a\leq c-1$ and using (\ref{math:kr:rightinterval}) and (\ref{math:kr:flatleft}) we get
 \begin{align*}
  \RIGHT(R_{a,b}) &< a\cdot L + 2a\cdot S \leq (c-1)\cdot L + (2c-2)\cdot S\\
  &< (c-1)\cdot L + (2c-1)\cdot S = \LEFT(F_{c,d}).
 \end{align*}
 \item If $a>c$ then $c\leq a-1$ and using (\ref{math:kr:flatright}) and (\ref{math:kr:leftinterval}) we get
 \begin{align*}
  \RIGHT(F_{c,d}) &= c\cdot L + (2c-1)\cdot S \leq (a-1)\cdot L + (2a-3) \cdot S\\
  &<(a-1)\cdot L + (2a-2)\cdot S < \LEFT(R_{a,b}).
 \end{align*}
 \item If $a=c$ and $b\leq d$ then using (\ref{math:kr:flatbottom}) we get
 \begin{align*}
  \TOP(R_{a,b}) &\leq \TOP(R_{a,d})= \TOP(R_{c,d}) = \BOTTOM(F_{c,d}).
 \end{align*}
 \item If $a=c$ and $b>d$ then $d+1\leq b$ and using (\ref{math:kr:flattop}) we get
 \begin{align*}
  \TOP(F_{c,d}) &= \BOTTOM(R_{c,d+1}) = \BOTTOM(R_{a,d+1}) \leq \BOTTOM(R_{a,b}).
 \end{align*}
\end{itemize}
Thus, in all four cases there is no overlap, as claimed.

(2) There are no overlaps between any tile $R_{a,b}$ and any thin item $T_{c,d}$:
\begin{itemize}
 \item If $b<d$ then $b\leq d-1$ and using (\ref{math:kr:topinterval}) and (\ref{math:kr:thinbottom}) we get
 \begin{align*}
  \TOP(R_{a,b}) &< b\cdot L + 2b\cdot S \leq (d-1)\cdot L + (2d-2)\cdot S\\
  &< (d-1)\cdot L + (2d-1)\cdot S =\BOTTOM(T_{c,d}).
 \end{align*}
 \item If $b>d$ then $d\leq b-1$ and using (\ref{math:kr:thintop}) and (\ref{math:kr:bottominterval}) we get
 \begin{align*}
  \TOP(T_{c,d}) &= d\cdot L + (2d-1)\cdot S \leq (b-1)\cdot L + (2b-3)\cdot S\\
  &< (b-1)\cdot L + (2b-2)\cdot S < \BOTTOM(R_{a,b}).
 \end{align*}
 \item If $b=d$ and $a\leq c$ then using (\ref{math:kr:thinleft}) we get
 \begin{align*}
  \RIGHT(R_{a,b}) &\leq \RIGHT(R_{c,b}) =\RIGHT(R_{c,d}) = \LEFT(T_{c,d}).
 \end{align*}
 \item If $b=d$ and $a>c$ then $c+1\leq a$ and using (\ref{math:kr:thinright}) we get
 \begin{align*}
  \RIGHT(T_{c,d}) = \LEFT(R_{c+1,d}) = \LEFT(R_{c+1,b}) \leq \LEFT(R_{a,b}).
 \end{align*}
\end{itemize}
Thus, in all four cases there is no overlap, as claimed.

(3) There are no overlaps between any two flat items $F_{a,b}$ and $F_{c,d}$ when $(a,b)\neq (c,d)$:
\begin{itemize}
 \item If $a\neq c$ then, w.l.o.g., $a < c$ (and hence $a\leq c-1$) and using (\ref{math:kr:flatright}) and (\ref{math:kr:flatleft}) we get
 \begin{align*}
  \RIGHT(F_{a,b})&= a\cdot L + (2a-1)\cdot S \leq (c-1)\cdot L + (2c-3)\cdot S\\
  &< (c-1)\cdot L + (2c-1)\cdot S = \LEFT(F_{c,d}).
 \end{align*}
 \item If $a=c$ then, due to $(a,b)\neq (c,d)$, we have $b\neq d$ and, w.l.o.g., $b < d$. Thus, $b+1\leq d$ and using (\ref{math:kr:flattop}) and (\ref{math:kr:flatbottom}) we get
 \begin{align*}
  \TOP(F_{a,b}) &= \BOTTOM(R_{a,b+1}) \leq \BOTTOM(R_{a,d}) \leq \TOP(R_{a,d}) = \TOP(R_{c,d})\\
  &= \BOTTOM(F_{c,d}).
 \end{align*}
\end{itemize}
Thus, in both cases there is no overlap, as claimed.

(4) There are no overlaps between any flat item $F_{a,b}$ and any thin item $T_{c,d}$:
\begin{itemize}
 \item If $a\leq c$ then using (\ref{math:kr:flatright}), (\ref{math:kr:rightinterval}), and (\ref{math:kr:thinleft}) we get
 \begin{align*}
  \RIGHT(F_{a,b}) &= a\cdot L + (2a-1)\cdot S \leq c\cdot L + (2c-1) \cdot S < \RIGHT(R_{c,d}) = \LEFT(T_{c,d}).
 \end{align*}
 \item If $a>c$ then $c\leq a-1$ and using (\ref{math:kr:thinright}), (\ref{math:kr:leftinterval}), and (\ref{math:kr:flatleft}) we get
 \begin{align*}
  \RIGHT(T_{c,d}) &= \LEFT(R_{c+1,d}) < c\cdot L + (2c+1)\cdot S \leq (a-1)\cdot L + (2a-1)\cdot S \\
  &= \LEFT(F_{a,b}).
 \end{align*}
\end{itemize}
Thus, in both cases there is no overlap, as claimed.

(5) There are no overlaps between any two thin items $T_{a,b}$ and $T_{c,d}$ when $(a,b)\neq (c,d)$:
\begin{itemize}
 \item If $b\neq d$ then, w.l.o.g., $b < d$ (and hence $b\leq d-1$) and using (\ref{math:kr:thintop}) and (\ref{math:kr:thinbottom}) we get
 \begin{align*}
  \TOP(T_{a,b}) &= b\cdot L + (2b-1)\cdot S \leq (d-1)\cdot L + (2d-3)\cdot S\\
  &< (d-1)\cdot L + (2d-1)\cdot S = \BOTTOM(T_{c,d}).
 \end{align*}
 \item If $b=d$ then, due to $(a,b)\neq (c,d)$, we have $a\neq c$ and, w.l.o.g., $a < c$. Thus, $a+1\leq c$ and using (\ref{math:kr:flatright}) and (\ref{math:kr:flatleft}) we get
 \begin{align*}
  \RIGHT(T_{a,b}) &= \LEFT(R_{a+1,b}) \leq \LEFT(R_{c,b}) \leq \RIGHT(R_{c,b}) =\RIGHT(R_{c,d}) \\
  &=\LEFT(T_{c,d}).
 \end{align*}
\end{itemize}
Thus, in both cases there is no overlap, as claimed. Overall, we find that there are no overlap between any pair of items placed so far. It remains to add the bar to complete our packing.

We already observed earlier that $\TOP(R_{a,k}) = N-(2k-2)\cdot t$. Similarly, using (\ref{math:kr:flattop}) we get 
\begin{align*}
\TOP(F_{a,b}) &= \BOTTOM(R_{a,b+1}) \leq \BOTTOM(R_{a,k}) \leq \TOP(R_{a,k}) \leq N-(2k-2)\cdot t
\end{align*}
for all $a\in[k]$ and $b\in[k-1]$. In the same way, using (\ref{math:kr:thintop}) we get
\begin{align*}
\TOP(T_{a,b}) &= b\cdot L + (2b-1)\cdot S \leq k\cdot L + (2k-1)\cdot S < N-(2k-2)\cdot t
\end{align*}
for all $a\in[k-1]$ and $b\in [k]$, recalling that $N=k\cdot L+(2k-1)\cdot S+(2k-1)\cdot t$. Thus, we can place the bar $B$ of height $(2k-2)\cdot t$ and width $N$ at the top of the knapsack without causing overlaps; formally, its coordinates are as follows.
\begin{align*}
\LEFT(B) = 0 \hspace{0.9cm}
\RIGHT(B) = N \hspace{0.9cm}
\BOTTOM(B) = N-(2k-2)\cdot t \hspace{0.9cm}
\TOP(B) = N
\end{align*}

Overall, we have placed $k^2+2p+1$ items without overlap. Thus, the constructed instance of \TDK is a yes-instance, as required. This completes the proof.
\end{proof}

\begin{corollary}\label{corollary:knapsack:hardness:appendix}
The \TDK problem is \W{1}-hard.
\end{corollary}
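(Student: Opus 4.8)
The plan is to observe that the reduction constructed in the proof of Theorem~\ref{theorem:knapsack:with:rotations:hardness:appendix} already produces a \TDK instance directly, and that both directions of its correctness argument go through without ever appealing to rotations. Concretely, I would reuse verbatim the instance built from a given \MSS instance $(x_1,\ldots,x_m,t,k)$: the $k^2$ tiles, the $p$ thin items, the $p$ flat items, and the bar, with knapsack side length $N$ and target value $k'=k^2+2p+1=O(k^2)$. All computations are polynomial-time and the output parameter is $O(k^2)$, so it only remains to re-examine correctness in the rotation-free setting.

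For the direction from \MSS to \TDK, the ``$\Longrightarrow$'' part of the previous proof explicitly exhibits a packing of $k'$ items that uses \emph{no rotations} (it places the $k$ by $k$ grid of tiles together with the flat/thin items and the bar using the coordinate formulas given there). Hence it is already a feasible \TDK packing and nothing needs to change.

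For the converse, I would use that any \TDK yes-instance is trivially also a \TDKR yes-instance, since rotations are merely optional. Therefore, if the constructed \TDK instance is a yes-instance, it is in particular a \TDKR yes-instance, and the ``$\Longleftarrow$'' part of the proof of Theorem~\ref{theorem:knapsack:with:rotations:hardness:appendix}---which reasons about arbitrary, possibly rotated, packings and in fact derives that none of the tiles can be rotated---yields that $(x_1,\ldots,x_m,t,k)$ is a yes-instance of \MSS. Combining the two implications, the same polynomial-time map is a parameterized reduction from \MSS to \TDK, and \W{1}-hardness of \TDK follows from Lemma~\ref{lemma:mssk:hardness}.

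The one point requiring care is simply to confirm that the earlier correctness argument never implicitly relied on rotations being available; inspecting it shows the forward direction builds a rotation-free packing and the backward direction already quantifies over all packings (rotated or not), so there is no real obstacle---the statement is a free byproduct of the proof of Theorem~\ref{theorem:knapsack:with:rotations:hardness:appendix}.
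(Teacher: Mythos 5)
Your proposal is correct and matches the paper's own argument essentially verbatim: it reuses the same construction, notes that the forward direction of Theorem~\ref{theorem:knapsack:with:rotations:hardness:appendix} already exhibits a rotation-free packing, and that any \TDK solution is a \TDKR solution so the backward analysis (which rules out rotated tiles) applies. Nothing further is needed.
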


\begin{proof}
We can use the same construction as in the proof of Theorem~\ref{theorem:knapsack:with:rotations:hardness:appendix} to get a parameterized reduction from \MSS to \TDK.

If the constructed instance is yes for \TDK then it is also yes for \TDKR, as the same packing of $k'=k^2+2p+1$ items can be used. As showed earlier, the latter implies that the input instance is yes for \MSS. Conversely, if the input instance is yes for \MSS then we already showed that there is a feasible packing to show that the constructed instance is yes for \TDKR. Since the packing did not require rotation of any items, it is also a feasible solution showing that the instance is yes for \TDK.
\end{proof}

\end{document}